\DeclareMathAlphabet{\mathpzc}{OT1}{pzc}{m}{it}
\theoremstyle{plain}
\newtheorem{thm}{Theorem}[section] 
\theoremstyle{definition}
\newtheorem{defn}[thm]{Definition} 
\newtheorem{lem}[thm]{Lemma}
\newtheorem{prop}[thm]{Proposition}
\newtheorem{rem}[thm]{Remark}
\newtheorem{cor}[thm]{Corollary}
\def\Xint#1{\mathchoice
	{\XXint\displaystyle\textstyle{#1}}%
	{\XXint\textstyle\scriptstyle{#1}}%
	{\XXint\scriptstyle\scriptscriptstyle{#1}}%
	{\XXint\scriptscriptstyle\scriptscriptstyle{#1}}%
	\!\int}
\def\XXint#1#2#3{{\setbox0=\hbox{$#1{#2#3}{\int}$ }
		\vcenter{\hbox{$#2#3$ }}\kern-.6\wd0}}
\def\dashint{\Xint-}
\newcounter{MPequ}
\newcounter{AppA}
\begin{document}\selectlanguage{english}
\begin{center}
\normalsize \textbf{\textsf{Properties of the Biot-Savart operator acting on surface currents}}
\end{center}
\begin{center}
	Wadim Gerner\footnote{\textit{E-mail address:} \href{mailto:wadim.gerner@inria.fr}{wadim.gerner@inria.fr}}
\end{center}
\begin{center}
{\footnotesize	Sorbonne Universit\'e, Inria, CNRS, Laboratoire Jacques-Louis Lions (LJLL), Paris, France}
\end{center}
{\small \textbf{Abstract:} 
We investigate properties of the image and kernel of the Biot-Savart operator in the context of stellarator designs for plasma fusion. We first show that for any given coil winding surface (CWS) the image of the Biot-Savart operator is $L^2$-dense in the space of square-integrable harmonic fields defined on a plasma domain surrounded by the CWS. Then we show that harmonic fields which are harmonic in a proper neighbourhood of the underlying plasma domain can in fact be approximated in any $C^k$-norm by elements of the image of the Biot-Savart operator.

In the second part of this work we establish an explicit isomorphism between the space of harmonic Neumann fields and the kernel of the Biot-Savart operator which in particular implies that the dimension of the kernel of the Biot-Savart operator coincides with the genus of the coil winding surface and hence turns out to be a homotopy invariant among regular domains in $3$-space.

Lastly, we provide an iterative scheme which we show converges weakly in $W^{-\frac{1}{2},2}$-topology to elements of the kernel of the Biot-Savart operator.
\newline
\newline
{\small \textit{Keywords}: Biot-Savart operator, Plasma physics, Coil winding surface}
\newline
{\small \textit{2020 MSC}: 14J81, 41A35, 55P99, 78A30, 78A46, 78A55}
\section{Introduction}
	In the context of stellarator designs for plasma fusion one distinguishing feature, as opposed to the tokamak design, is a complex coil structure \cite{Xu16}. There are different ways to model the complex coil structure in the stellarator design. It can be modelled as a collection of finite 1-dimensional closed curves, by means of a closed surface known as coil winding surface (CWS) around which the coils wrap or by means of a region enclosing a finite volume\footnote[2]{L.-M. Imbert-Gerard, E.J. Paul, and A.M. Wright. An Introduction to stellarators: From magnetic fields to symmetries and optimization. [Chapter 13.3] arXiv:1908.05360, to appear as a book in 2024.}.
	
	In the present work we focus on the CWS model and investigate the properties of the corresponding Biot-Savart operator. Specifically, let $S\subset\mathbb{R}^3$ be a regular enough (for the purpose of this introduction we assume all quantities involved to be $C^\infty$-smooth) closed surface and let $j\in \mathcal{V}(S)$ be a divergence-free vector field on $S$, i.e. $j\parallel S$ and $\operatorname{div}_S(j)=0$. By means of the Biot-Savart law \cite[Chapter 5]{Jack21} the magnetic field throughout $3$-space induced by such a surface current is given by
	\begin{gather}
		\label{1E1}
	\operatorname{BS}_S(j)(x):=\frac{1}{4\pi}\int_Sj(y)\times \frac{x-y}{|x-y|^3}d\sigma(y).
	\end{gather}
	In the context of stellarator design we require the hot fusion plasma to be confined in some region $P\subset\mathbb{R}^3$ which is surrounded by and of positive distance to $S$. I.e. if $\Omega$ is the bounded region enclosed by $S$ ($\partial \Omega=S$) then we require $P\subset \Omega$ and $\operatorname{dist}(S,P)>0$. So given such a plasma confinement region $P$ (known as plasma domain) and a coil winding surface $S$ with the mentioned properties it is of importance to understand the magnetic fields within $P$ which can be generated by currents being supported on $S$. Mathematically this amounts to understanding the image of the Biot-Savart operator.
	
	Before we present our findings let us first point out that the image and kernel of the Biot-Savart operator have been studied in different contexts. For example it is of relevance in fluid dynamics and magnetohydrodynamics to understand the properties of the Biot-Savart operator given by
	\[
	\operatorname{BS}_{\Omega}(v)(x):=\frac{1}{4\pi}\int_{\Omega}v(y)\times \frac{x-y}{|x-y|^3}d^3y\text{, }x\in \Omega
	\]
	where $\Omega\subset\mathbb{R}^3$ is a smooth bounded domain and $v$ is a smooth vector field on $\Omega$. This operator is used in order to define the helicity of vector fields which is a fundamental conserved quantity and may be regarded as a measure of the topological stability of the underlying fluid flow c.f. \cite{W58}, \cite{M69},\cite{BF84},\cite[Chapter III]{AK21}. A thorough study of this operator has been conducted in \cite{CDG01}.
	
	On the other hand, in the context of magnetic tomography, we may consider currents which are supported inside a bounded volume $V$ and look at the magnetic field induced by such a volume current on some surface $S$ surrounding the volume $V$ at a positive distance, i.e. one may consider the operator
	\[
	\operatorname{BS}^S_V(v)(x):=\frac{1}{4\pi}\int_{V}v(y)\times \frac{x-y}{|x-y|^3}d^3y\text{, }x\in S.
	\]
	Understanding the kernel of this operator is important for the question of uniqueness of current reconstructions, see for instance \cite{PW09},\cite{HKP05} and references therein.
	
	In our situation the current is supported on a surface $S$ and we are interested in the magnetic field inside a (plasma) domain $P$. It is a fact of Maxwell's equations of magnetostatics that in our situation we have
	\[
	\operatorname{curl}\left(\operatorname{BS}_S(j)\right)=0\text{, }\operatorname{div}\left(\operatorname{BS}_S(j)\right)=0\text{ in }P.
	\]
	Naively one may conjecture that $\operatorname{BS}_S$ may generate any such harmonic vector field on the plasma domain (of particular interest are square integrable harmonic fields since they correspond to magnetic fields of finite magnetic energy). However, we will provide a simple procedure which allows to construct harmonic fields which are not contained in the image, see \Cref{3P1}. Nonetheless we will be able to establish a density result, i.e. even though the image of the Biot-Savart operator does not contain all harmonic fields, it is dense in this space. To formulate a more precise statement let us define
	\[
	L^2\mathcal{H}(P):=\left\{B_T\in W^{1,2}(P,\mathbb{R}^3)\mid \operatorname{curl}(B_T)=0\text{, }\operatorname{div}(B_T)=0\right\}^{\|\cdot\|_{L^2(P)}},
	\]
	where $\cdot^{\|\cdot\|_{L^2(P)}}$ denotes the completion with respect to the norm $\|\cdot\|_{L^2(P)}$. The $T$ stands for "target", since the vector fields $B_T$ are the potential target magnetic fields which we wish to approximate by our surface current magnetic fields. The conditions $\operatorname{curl}(B_T)=0$ and $\operatorname{div}(B_T)=0$ are understood in the weak sense. Let us point out that we equivalently can write $L^2\mathcal{H}(P)=\{B_T\in L^2(P,\mathbb{R}^3)\mid \operatorname{curl}(B_T)=0=\operatorname{div}(B_T)\}$. Loosely speaking we prove the following result regarding the density, see \Cref{3C9} for a more precise statement. We denote by $D^2$ the closed unit disc in $\mathbb{R}^2$ and by $\mathcal{V}_{\operatorname{div}=0}(S)$ the smooth divergence-free tangent vector fields on $S$.
	\begin{thm}[First main result (informal version)]
		\label{T11}
		Let $\Omega\subset\mathbb{R}^3$ be a bounded domain with $\overline{\Omega}\cong D^2\times S^1$ and let $D^2\times S^1\cong \overline{P}\subset \Omega$. Suppose that $P$ wraps in toroidal direction exactly once around $\Omega$, then the image of the operator
		\[
		\operatorname{BS}_{\partial\Omega}:\mathcal{V}_{\operatorname{div}=0}(\partial\Omega)\rightarrow L^2\mathcal{H}(P)
		\]
		as defined in (\ref{1E1}) is dense in $L^2\mathcal{H}(P)$ with respect to the $\|\cdot\|_{L^2(P)}$-norm.
	\end{thm}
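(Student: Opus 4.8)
The plan is to prove density by showing that the orthogonal complement of the image inside the Hilbert space $L^2\mathcal H(P)$ is trivial. First I would record that $\operatorname{BS}_{\partial\Omega}(j)$ indeed lands in $L^2\mathcal H(P)$: since $\operatorname{dist}(\partial\Omega,P)>0$ the kernel is smooth on $P$, and the Biot--Savart field of a surface current carries no current through $P$, so it is curl- and divergence-free there. Suppose now $B_T\in L^2\mathcal H(P)$ satisfies $\langle B_T,\operatorname{BS}_{\partial\Omega}(j)\rangle_{L^2(P)}=0$ for every $j\in\mathcal V_{\operatorname{div}=0}(\partial\Omega)$; the goal is $B_T=0$. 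Because $\operatorname{dist}(\partial\Omega,P)>0$ I may apply Fubini and the scalar triple product identity to move the operator onto the other factor, obtaining
\[
\langle B_T,\operatorname{BS}_{\partial\Omega}(j)\rangle_{L^2(P)}=\int_{\partial\Omega} j\cdot \big(\operatorname{BS}_P(B_T)\big)\,d\sigma ,
\]
where $\operatorname{BS}_P(B_T)$ is the volume Biot--Savart field of $B_T$ evaluated on $\partial\Omega$. Writing $A:=\operatorname{BS}_P(B_T)$, the hypothesis says that the tangential trace $A_{\mathrm{tan}}$ on $S:=\partial\Omega$ is $L^2$-orthogonal to all divergence-free tangent fields. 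Invoking the $L^2$-orthogonal Hodge decomposition $\mathcal V(S)=\nabla_S C^\infty(S)\oplus(n\times\nabla_S C^\infty(S))\oplus\mathcal H(S)$ and testing against the co-gradient and harmonic summands separately, I would conclude that $A_{\mathrm{tan}}=\nabla_S f$ is an \emph{exact} surface gradient; equivalently $A_{\mathrm{tan}}$ is closed (so $(\operatorname{curl} A)\cdot n=0$ on $S$) and all its periods along the two generators of $H_1(S)$ vanish.

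Next I would exploit the global structure of $A$. Writing $A=\operatorname{curl}N$ with $N$ the (componentwise) Newtonian potential of $B_T\mathbf 1_P$ gives $\operatorname{div}A=0$ on $\mathbb R^3$ and $\operatorname{curl}A=B_T\mathbf 1_P+\nabla h$, where $h:=\operatorname{div}N$ equals the single-layer potential with density $-B_T\cdot n$ on $\partial P$; thus $h$ is harmonic off $\partial P$, decays at infinity, and $A$ is real-analytic (indeed harmonic) on $\mathbb R^3\setminus\overline P$. The normal part of the closedness condition reads $\partial_n h=0$ on $S$, and since $h$ is harmonic and decaying on the exterior $E:=\mathbb R^3\setminus\overline\Omega$, uniqueness for the exterior Neumann problem forces $h\equiv0$ on $E$. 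As $\mathbb R^3\setminus\overline P$ is connected, unique continuation for harmonic functions propagates $h\equiv0$ to all of $\mathbb R^3\setminus\overline P$; continuity then gives $h=0$ on $\partial P$, and solving the interior Dirichlet problem yields $h\equiv0$ in $P$ as well. Consequently $h\equiv 0$ everywhere, so $\operatorname{curl}A=B_T$ in $P$ and $\operatorname{curl}A=0$ on $\mathbb R^3\setminus\overline P$ (and incidentally $B_T\cdot n=0$ on $\partial P$).

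Finally I would transport the period information from $S$ to $\partial P$ and close with an energy identity. Here is where the hypothesis that $P$ wraps exactly once toroidally enters: the shell $R:=\Omega\setminus\overline P$ is then a product region $T^2\times[0,1]$, so both inclusions $\partial P\hookrightarrow R$ and $S\hookrightarrow R$ are homotopy equivalences. Since $A$ is curl-free on $R$, the closed form $A^\flat$ has equal periods on homologous cycles; the vanishing periods on $S$ therefore force $A_{\mathrm{tan}}|_{\partial P}=\nabla_{\partial P}\psi$ to be exact as well. Integrating by parts,
\[
\int_P|B_T|^2=\int_P B_T\cdot\operatorname{curl}A=\int_{\partial P}(A\times B_T)\cdot n\,d\sigma=\int_{\partial P}\nabla_{\partial P}\psi\cdot(B_T\times n)\,d\sigma=-\int_{\partial P}\psi\,\operatorname{div}_{\partial P}(B_T\times n)\,d\sigma,
\]
and the last integrand vanishes because $\operatorname{div}_{\partial P}(B_T\times n)=\pm(\operatorname{curl}B_T)\cdot n=0$. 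Hence $B_T=0$, the annihilator is trivial, and the image is dense.

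I expect the main obstacle to be precisely the nonvanishing normal trace $B_T\cdot n$ on $\partial P$: it prevents $A$ from being curl-free outside $P$ and introduces the harmonic correction $h$. The heart of the argument is the elimination of $h$ via exterior Neumann uniqueness together with unique continuation across the shell, which simultaneously upgrades $A$ to a genuine magnetic field of the current $B_T$ and yields $B_T\cdot n=0$. Secondary technical points requiring care are the low regularity of $B_T$ (only $L^2$, so traces such as $B_T\cdot n\in H^{-1/2}(\partial P)$ and the layer potentials must be interpreted weakly) and the justification that the homotopy equivalences above genuinely identify the relevant homology classes, which is exactly what the ``wraps once'' assumption guarantees.
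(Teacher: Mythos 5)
Your proposal is sound, and its first half is essentially the paper's own argument: the paper likewise reduces density to triviality of the annihilator (via the Riesz identification of \Cref{3L5}), moves the operator across the pairing by Fubini to obtain $\int_{S}j\cdot\operatorname{BS}_P(B_T)\,d\sigma$, translates the orthogonality hypothesis into exactness of the pulled-back $1$-form on $S$ (see (\ref{3E3})), and eliminates the auxiliary harmonic function $h$ by exactly your chain of steps --- vanishing normal derivative on $S$, decay-plus-energy argument outside $\Omega$, real-analytic continuation through the connected set $\mathbb{R}^3\setminus\overline{P}$, matching of interior and exterior traces, and interior Dirichlet uniqueness (proof of \Cref{3T7}). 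The genuine difference is the endgame. The paper first Hodge-decomposes the annihilator element as $B=\Gamma+\operatorname{grad}(f)$, runs the $h$-argument only on $\operatorname{grad}(f)$ to conclude that the annihilator lies in $\mathcal{H}_N(P)$, and then in \Cref{3C9} (iii)(b) uses its formalisation of ``wraps once'' (a poloidal curve on $S$ bounding a disc $D$ such that $D\cap P$ is a poloidal disc of $P$) together with Stokes' theorem to identify the poloidal period of the boundary form with the cross-sectional flux $\int_{D_2}B\cdot\mathcal{N}\,d\sigma$, quoting \cite[Corollary 3.4]{AS12} to conclude $B=0$. You instead run the $h$-argument on all of $B_T$ at once --- which hands you $B_T\cdot n=0$ on $\partial P$ and $\operatorname{curl}(A)=\chi_PB_T$ without any volume Hodge decomposition --- and then finish by transporting the vanishing periods across the curl-free shell and closing with an energy identity; this is more self-contained (no appeal to \cite{AS12} or to the Hodge theorem on $P$), at the price of needing the shell's topology explicitly. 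Two caveats should be made explicit. First, your reading of ``wraps once'' (the shell $\Omega\setminus\overline{P}$ is diffeomorphic to $\partial P\times[0,1]$) is a different formalisation from the paper's disc condition; neither follows from the literal homotopy statement (a knotted winding-number-one embedding of $P$ defeats both arguments), so it must be stated as a hypothesis, exactly as the paper states its disc condition. Second, on $\partial P$ the field $A$ is only $H^1$ and $\mathcal{N}\times B_T$ only of class $W^{-\frac{1}{2},2}$, so the period transport and the final integration by parts need care: the clean route is to note that all periods of $A$ along loops in the shell vanish (every such loop is homotopic to one near $S$, where $A$ is smooth, curl-free and has exact tangential part), hence $A=\nabla\Psi$ on the shell with $\Psi\in W^{2,2}$ there; setting $\psi:=\Psi|_{\partial P}$ and letting $\tilde{\psi}$ be its harmonic extension into $P$, the Green formula defining tangential traces gives
\begin{equation*}
\int_P|B_T|^2\,d^3x=\int_P B_T\cdot\operatorname{curl}(A)\,d^3x=-\int_P\operatorname{curl}(B_T)\cdot\nabla\tilde{\psi}\,d^3x+\int_P B_T\cdot\operatorname{curl}(\nabla\tilde{\psi})\,d^3x=0,
\end{equation*}
since the pairing only depends on the common tangential trace of $A$ and $\nabla\tilde{\psi}$. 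With these two points spelled out, your argument is a correct alternative proof of the stated density.
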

	We note that in stellarator designs the coil structure and plasma domain are taken to be toroidal and the coils are constructed in such a way that they wrap once in toroidal direction around the plasma domain, see for instance \cite[Figure 1]{KAl17} for the schematic structure of the Wendelstein 7-X stellarator in Greifswald Germany.
	
	From a mathematical point of view the plasma domain need not wrap around the volume $\Omega$ and might for example be contained in some contractible region. We show that in this case the image of the Biot-Savart operator is no longer dense so that the density of the image of the Biot-Savart operator in fact depends on the way the plasma domain is embedded in $\Omega$.
	
	From the point of view of applications, if we identify a nice target field $B_T\in L^2\mathcal{H}(P)$ which has good confinement properties, an $L^2$-approximation is usually not sufficient. In fact, the gradient $\nabla B_T$ of $B_T$ influences the "drift" of the plasma particles \cite[Chapter 2]{M16} and so if we wish to reproduce the desired plasma behaviour under the influence of the current induced magnetic field we should aim for a $C^1$-approximation. Of course, the $C^1$-norm may blow up even under $L^2$-closedness which may lead to undesirable plasma behaviour and so it is vital to identify a class of vector fields which can in fact be approximated in $C^1$-norm.
	
	Exploiting elliptic regularity results we can obtain the following corollary from \Cref{T11}.
	\begin{cor}
		\label{C12}
		Let $\Omega\subset \mathbb{R}^3$ be a bounded domain with $\overline{\Omega}\cong D^2\times S^1$ and let $D^2\times S^1\cong \overline{P}\subset \Omega$. Suppose that $P$ wraps once in toroidal direction around $\Omega$. If $\overline{P}\subset U\subset \Omega$ is an open set and $B_T\in L^2\mathcal{H}(U)$, then for every $\epsilon>0$ there exists a current $j_{\epsilon}\in \mathcal{V}_{\operatorname{div}=0}(S)$ with
		\[
		\|\operatorname{BS}_{S}(j_{\epsilon})-B_T\|_{C^1(\overline{P})}\leq \epsilon
		\]
		where $S=\partial \Omega$.
	\end{cor}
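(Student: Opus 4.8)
The plan is to combine the $L^2$-density statement of \Cref{T11} with interior elliptic estimates, exploiting the fact that the target field $B_T$ is harmonic not merely on $P$ but on the strictly larger open set $U$. The guiding principle is the standard upgrade mechanism: an $L^2$-approximation on a neighbourhood of $\overline{P}$, combined with interior regularity for harmonic functions, yields a $C^1$- (indeed $C^k$-) approximation on $\overline{P}$ itself. Concretely, observe that for any admissible current $j$ the field $\operatorname{BS}_S(j)$ is curl- and divergence-free throughout $\Omega$, while $B_T$ is curl- and divergence-free on $U$; hence their difference is harmonic on any open set lying between $\overline{P}$ and $U$.

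First I would interpose a solid torus between $P$ and $U$. Since $\overline{P}\cong D^2\times S^1$ wraps once in toroidal direction around $\Omega$, and since $\overline{P}\subset U$ with $U$ open, one can choose an open set $V$ with $\overline{P}\subset V\subset \overline{V}\subset U$, $\overline{V}\cong D^2\times S^1$, and such that $V$ still wraps exactly once in toroidal direction around $\Omega$ (take, for instance, a sufficiently thin tubular neighbourhood of the solid torus $\overline{P}$). The point of this step is that $V$ then satisfies precisely the hypotheses imposed on the plasma domain in \Cref{T11}, so that the density conclusion applies verbatim with $V$ in place of $P$: the image of $\operatorname{BS}_S$ is $\norm{\cdot}_{L^2(V)}$-dense in $L^2\mathcal{H}(V)$.

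Next I would check that $B_T$ lands in the target space on $V$. As recalled in the excerpt, $B_T\in L^2\mathcal{H}(U)$ means $\operatorname{curl}(B_T)=0=\operatorname{div}(B_T)$ weakly on $U$, whence $\Delta B_T=0$ componentwise in the distributional sense and $B_T$ is smooth on the interior of $U$ by Weyl's lemma; in particular $B_T$ is smooth on $\overline{V}\subset U$, so $B_T|_V\in L^2\mathcal{H}(V)$. Applying the density statement for $V$, for every $\delta>0$ there is a current $j_\delta\in\mathcal{V}_{\operatorname{div}=0}(S)$ with $\norm{\operatorname{BS}_S(j_\delta)-B_T}_{L^2(V)}\leq\delta$.

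Finally I would run the elliptic upgrade. The difference $w:=\operatorname{BS}_S(j_\delta)-B_T$ is curl- and divergence-free on $V$, hence each of its components is harmonic there (using the identity $\Delta=\nabla\operatorname{div}-\operatorname{curl}\operatorname{curl}$). Interior estimates for harmonic functions bound the $C^1$-norm on the compact set $\overline{P}$ by the $L^2$-norm over $V$: there is a constant $C=C(\operatorname{dist}(\overline{P},\partial V))$ with $\norm{w}_{C^1(\overline{P})}\leq C\norm{w}_{L^2(V)}\leq C\delta$. Choosing $\delta=\epsilon/C$ produces the desired current $j_\epsilon:=j_\delta$. I expect the only genuinely non-routine step to be the construction of the intermediate torus $V$ with the correct embedding type — this is where the topological hypothesis that $P$ wraps once is actually consumed — since the elliptic estimate and the smoothness of $B_T$ are entirely standard; one must simply verify that thickening $\overline{P}$ preserves both the diffeomorphism type $D^2\times S^1$ and the toroidal winding number, and that it can be kept inside $U$.
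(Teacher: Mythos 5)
Your proposal is correct and follows essentially the same route as the paper: the paper likewise interposes a slightly enlarged solid torus (constructed via a flowout $P_t=\psi_t(P)$ of an outward-pointing vector field, which plays the role of your tubular neighbourhood $V$ and is verified to preserve the poloidal-disc hypothesis), applies the $L^2$-density theorem on that enlarged domain, and then upgrades to $C^1$ (indeed $C^k$) on $\overline{P}$ via interior elliptic estimates for the harmonic difference field. The step you flag as the only non-routine one — preserving the diffeomorphism type and the winding condition under thickening — is exactly the part the paper handles carefully with the flow construction.
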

	Note that the approximation is valid on the plasma domain $P$ but not necessarily on all of $U$. More generally, by means of a bootstrapping argument, we may approximate $B_T$ on $P$ in any $C^k$-norm for any fixed $k\in \mathbb{N}_0$. As mentioned before, \Cref{C12} provides a vast class of magnetic fields on the plasma domain which can be well-approximated by magnetic fields induced by surface currents.
	
	The $C^1$-approximation property is a nice feature in view of the functionality of stellarator designs. However, as we will show, in order to achieve a better and better approximation it is sometimes necessary to utilise stronger and stronger currents so that due to physical limitations in real world applications there is at times only a finite accuracy which one can achieve. We state this observation in the following proposition.
	\begin{prop}
		\label{P13}
		Let $\Omega,P\subset \mathbb{R}^3$ be bounded domains with $\overline{P}\subset \Omega$. If $B_T\in L^2\mathcal{H}(P)\setminus \operatorname{Im}(\operatorname{BS}_S)$ where $S:=\partial \Omega$, then for any sequence $(j_n)_n\subset \mathcal{V}_{\operatorname{div}=0}(S)$ with $\|\operatorname{BS}_S(j_n)-B_T\|_{L^2(P)}\rightarrow 0$ as $n\rightarrow\infty$ we have $\|j_n\|_{L^2(S)}\rightarrow \infty$. Consequently, for any such $B_T$ there exists a constant $c>0$ such that for every $j\in \mathcal{V}_{\operatorname{div}=0}(S)$ with $\|j\|_{L^2(S)}\leq 1$ we have $\|\operatorname{BS}_S(j)-B_T\|_{L^2(P)}\geq c$.
	\end{prop}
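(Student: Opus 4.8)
The plan is to exploit the single structural feature that makes this work: since $\overline{P}\subset\Omega$ lies at positive distance from $S=\partial\Omega$, the Biot--Savart operator becomes \emph{smoothing} once its output is restricted to $\overline P$, and in particular extends to a \emph{compact} operator on $L^2$-currents. With compactness in hand the statement reduces to the standard incompatibility between weak convergence of the currents and strong convergence of their images: a field lying outside the image of a compact operator can only be approximated by letting the preimages blow up.

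Concretely, I would first set $d:=\operatorname{dist}(\overline P,S)>0$ and note that the integral kernel $(x,y)\mapsto \frac{1}{4\pi}\,\frac{x-y}{|x-y|^3}$ in (\ref{1E1}) is smooth and bounded, together with all its $x$-derivatives, on $\overline P\times S$, because $|x-y|\geq d$ there. Hence $\operatorname{BS}_S$ extends to a bounded linear operator $T\colon L^2(S,\mathbb{R}^3)\to L^2(P,\mathbb{R}^3)$ (indeed into $C^k(\overline P,\mathbb{R}^3)$ for every $k$), and since $S$ and $P$ carry finite measure this $T$ has a bounded matrix kernel in $L^2(\overline P\times S)$, so it is Hilbert--Schmidt, hence compact. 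I write $\mathcal D$ for the $L^2(S)$-closure of $\mathcal V_{\operatorname{div}=0}(S)$, a closed subspace on which $T$ agrees with $\operatorname{BS}_S$ and remains compact.

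Next I argue by contradiction. Suppose $(j_n)\subset\mathcal V_{\operatorname{div}=0}(S)$ satisfies $\operatorname{BS}_S(j_n)\to B_T$ in $L^2(P)$ but $\|j_n\|_{L^2(S)}\not\to\infty$. Then some subsequence is bounded in $L^2(S)$, and after passing to a further subsequence I may assume $j_{n_k}\rightharpoonup j_\ast$ weakly in $L^2(S)$. As $\mathcal D$ is strongly, hence weakly, closed we get $j_\ast\in\mathcal D$; here tangency and the divergence-free constraint, being closed linear conditions, automatically pass to the weak limit. Because $T$ is compact it carries the weakly convergent sequence $(j_{n_k})$ to a norm-convergent one, so $\operatorname{BS}_S(j_{n_k})=T(j_{n_k})\to T(j_\ast)$ in $L^2(P)$. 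Comparing with the hypothesis $\operatorname{BS}_S(j_{n_k})\to B_T$ forces $T(j_\ast)=B_T$, i.e. $B_T\in\operatorname{Im}(\operatorname{BS}_S)$, contradicting the assumption. This establishes $\|j_n\|_{L^2(S)}\to\infty$. The ``consequently'' clause is then the quantitative restatement: were there no admissible $c>0$, one could choose for each $n$ a field $j_n\in\mathcal V_{\operatorname{div}=0}(S)$ with $\|j_n\|_{L^2(S)}\leq 1$ and $\|\operatorname{BS}_S(j_n)-B_T\|_{L^2(P)}<\tfrac1n$, producing a bounded sequence with $\operatorname{BS}_S(j_n)\to B_T$, in contradiction to what was just shown.

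The step I expect to demand the most care is the identification of the weak limit as a genuine witness for membership in the image. The compactness of $T$ — the real engine of the argument — is immediate once one records the positive separation $\operatorname{dist}(\overline P,S)>0$, but it produces only an $L^2$-current $j_\ast$, so the conclusion $T(j_\ast)=B_T$ places $B_T$ a priori in the image of the $L^2$-extension rather than in $\operatorname{BS}_S(\mathcal V_{\operatorname{div}=0}(S))$. I would resolve this by reading $\operatorname{Im}(\operatorname{BS}_S)$ throughout as the image of this continuous $L^2$-extension (the natural functional-analytic setting already used elsewhere in the paper), so that the $L^2$-limit $j_\ast$ legitimately certifies $B_T\in\operatorname{Im}(\operatorname{BS}_S)$; if one insists on smooth currents, the same conclusion requires the additional verification that $\operatorname{BS}_S(\mathcal V_{\operatorname{div}=0}(S))=T(\mathcal D)$, which is the only genuinely delicate point.
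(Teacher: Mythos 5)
Your proof is correct and is essentially the paper's own argument (its formal version, \Cref{3P3}): assume a bounded sequence exists, extract a weakly convergent subsequence $j_{n_k}\rightharpoonup j_\ast$ in the $L^2$-closure of the smooth divergence-free fields (a closed, hence weakly closed, subspace), and identify $B_T$ as the image of $j_\ast$, contradicting $B_T\notin\operatorname{Im}(\operatorname{BS}_S^P)$. The only two deviations are harmless: where you invoke Hilbert--Schmidt compactness to upgrade to strong convergence $T(j_{n_k})\to T(j_\ast)$, the paper gets by with less --- a bounded operator maps weakly convergent sequences to weakly convergent ones, and weak convergence of $\operatorname{BS}_S^P(j_{n_k})$ to $\operatorname{BS}_S^P(j_\ast)$ together with strong (hence weak) convergence to $B_T$ already forces $\operatorname{BS}_S^P(j_\ast)=B_T$ by uniqueness of weak limits --- and the ``delicate point'' you flag at the end is resolved exactly as you propose: the paper defines $\operatorname{BS}_S^P$ on $L^2\mathcal{V}_0(S)$ from the outset and reads $\operatorname{Im}(\operatorname{BS}_S^P)$ as the image of that $L^2$-extension, so the weak limit is a legitimate witness.
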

Let us emphasise that the methods used in order to prove this result do not yield any numerical value for the constant $c$. So while we know that there is only a limited accuracy that may be achieved, this accuracy, a priori, might or might not be good enough for practical purposes.

The last question which we address in the present work is the characterisation of the kernel of the Biot-Savart operator. We have mentioned earlier that whether the image of the Biot-Savart operator is dense in the harmonic fields on the plasma domain or not depends on the way the plasma domain is embedded into the volume $\Omega$ bounded by the coil winding surface $S$. When it comes to the kernel the shape of the plasma domain is entirely irrelevant. This is simply because $\operatorname{BS}_S(j)$ is analytic within $\Omega$ for any surface current $j$ and so if $\operatorname{BS}_S(j)$ vanishes on $P$, then by real analyticity it must vanish on all of $\Omega$. Hence, mathematically, we are interested in the following question
\newline
\newline
\textit{Given a smooth bounded domain $\Omega\subset \mathbb{R}^3$ with boundary $S=\partial\Omega$. Is there a characterisation of} 
	\[
	\operatorname{Ker}(\operatorname{BS}_S)=\{j\in \mathcal{V}_{\operatorname{div}=0}(S)\mid \operatorname{BS}_S(j)(x)=0\text{ for all }x\in \Omega\}?
	\]
The relevance of this space is twofold. On the one hand it is of importance in view of inverse problems. Say, we are given a magnetic field $B$ inside the plasma domain $P$ which we know is induced by a surface current on $S$. Can the current $j$ inducing $B$, at least in principle, be uniquely obtained from the knowledge of $B$? A positive answer to this question corresponds to the kernel being trivial. On the other hand, if the kernel is non-trivial, adding any element of the kernel to some surface current $j$ will induce the same magnetic field inside $P$. But from the point of view of applications some current distributions may be easier to realise physically than others. So having a trivial current is a good property from the point of view of inverse problems while having a large kernel is a good property from the point of view of real life stellarator design as it grants some flexibility.

Before we state our main result regarding the kernel let us recall some basic concepts. Given a bounded smooth domain $\Omega\subset \mathbb{R}^3$ we can consider a smooth curl-free vector field $A$ on $\Omega$. In general, when $\Omega$ is not simply-connected, the vector field $A$ will not be a gradient field. One can then wonder how many "essentially different" such non-gradient curl-free vector fields exist. More precisely, two curl-free fields are said to be essentially the same if they differ by a gradient field and essentially different otherwise. Identifying curl-free fields which are essentially the same gives rise to an equivalence relation which induces the first de Rham cohomology group $H^1_{\operatorname{dR}}(\Omega)$ of the domain $\Omega$.

Additionally, we may introduce the notion of harmonic Neumann fields $\mathcal{H}_N(\Omega)$ on $\Omega$ which is the following space of smooth vector fields $\mathcal{H}_N(\Omega):=\{\Gamma\in \mathcal{V}(\Omega)|\operatorname{curl}(\Gamma)=0=\operatorname{div}(\Gamma)\text{, }\Gamma\parallel \partial\Omega\}$.

It is well-known that $\mathcal{H}_N(\Omega)$ is finite dimensional and that the Hodge-isomorphism establishes an isomorphism between $H^1_{\operatorname{dR}}(\Omega)$ and $\mathcal{H}_N(\Omega)$ \cite[Theorem 2.2.7 \& Theorem 2.6.1]{S95}. Even more, it is well-known that $\dim\left(H^1_{\operatorname{dR}(\Omega)}\right)$ is a homotopy invariant between smooth manifolds \cite[Theorem 17.11]{L12} and in fact we have $\dim(H^1_{\operatorname{dR}}(\Omega))=g(\partial\Omega)$, \cite{CDG02}, where $g(\partial\Omega)$ denotes the genus of $\partial\Omega$ (which, in the case $\partial\Omega$ is disconnected, is defined as the sum of the genera of the connected components of $\partial\Omega$).

Regarding the kernel of the Biot-Savart operator we will establish an explicit isomorphism between $\mathcal{H}_N(\Omega)$ and $\operatorname{Ker}(\operatorname{BS}_S)$ ($S=\partial\Omega$). This will, by the aforementioned Hodge isomorphism, in particular provide an isomorphism between $\operatorname{Ker}(\operatorname{BS}_S)$ and $H^1_{\operatorname{dR}}(\Omega)$. It then follows from the properties of the de Rham cohomology group that the dimension of the kernel of the Biot-Savart operator is a homotopy invariant. Note that it is highly non-trivial that this is the case because the Biot-Savart operator depends on $S$ (as it defines its domain of integration) but also the allowed currents depend on $S$ (since they are required to be tangent to $S$ and divergence-free with respect to the metric induced on $S$). So a-priori the dimension of the kernel of the Biot-Savart operator should be expected to depend not only on the topology of $S$ but also on the way $S$ is embedded into $3$-space.

We now state the informal version of our second main theorem (recall that $g(S)$ denotes the genus of a surface $S$).
\begin{thm}[Second main result (informal version)]
	\label{T14}
	Let $\Omega\subset \mathbb{R}^3$ be a bounded smooth domain. Then $\dim(\operatorname{Ker}(\operatorname{BS}_{\partial\Omega}))=g(\partial\Omega)$. In particular, if $\Omega_1$, $\Omega_2$ are two bounded, smooth domains in $\mathbb{R}^3$ which are homotopic, then the kernels of the respective Biot-Savart operators are isomorphic.
\end{thm}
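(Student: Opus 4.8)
The plan is to reduce the theorem to the single computation $\dim(\operatorname{Ker}(\operatorname{BS}_S)) = \dim(\mathcal{H}_N(\Omega))$ and then to invoke the facts recalled above. Indeed, once the kernel and $\mathcal{H}_N(\Omega)$ are shown to have the same dimension, the Hodge isomorphism $\mathcal{H}_N(\Omega)\cong H^1_{\operatorname{dR}}(\Omega)$ from \cite{S95} together with the identity $\dim(H^1_{\operatorname{dR}}(\Omega)) = g(\partial\Omega)$ from \cite{CDG02} yields $\dim(\operatorname{Ker}(\operatorname{BS}_{\partial\Omega})) = g(\partial\Omega)$. The final assertion is then immediate: homotopic domains $\Omega_1,\Omega_2$ have isomorphic de Rham cohomology by homotopy invariance \cite{L12}, hence $g(\partial\Omega_1) = g(\partial\Omega_2)$, and two kernels of equal finite dimension are isomorphic as vector spaces.

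The heart of the argument is an explicit isomorphism read off from the jump relations of the Biot--Savart field. First I would record that $B:=\operatorname{BS}_S(j)$ is the \emph{unique} vector field on $\mathbb{R}^3\setminus S$ which is curl- and divergence-free on either side of $S$, decays at infinity, and whose traces satisfy the transmission conditions $n\cdot(B^+ - B^-) = 0$ and $n\times(B^+ - B^-) = j$, where $B^\pm$ denote the exterior/interior boundary traces and $n$ the outward normal; uniqueness follows because the difference of two such fields has no singular part across $S$, is therefore globally harmonic and decaying, hence vanishes. From this, $j\in\operatorname{Ker}(\operatorname{BS}_S)$ is equivalent to $B^-\equiv 0$, which by the transmission conditions forces the exterior trace $W:=B^+$ to obey $n\cdot W = 0$ on $S$ together with $j = n\times W|_S$. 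Thus $W$ is a harmonic field on the exterior domain $\Omega_e := \mathbb{R}^3\setminus\overline{\Omega}$ that is tangent to $S$ and decays at infinity, i.e. an exterior harmonic Neumann field, and $j$ is recovered from $W$ by a $90^\circ$ rotation in the tangent planes of $S$.

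Next I would verify that this correspondence is a linear isomorphism. The assignment $j\mapsto W$ is clearly linear and, by the uniqueness above, injective. For the inverse, given an exterior harmonic Neumann field $W$ I set $j:=n\times W|_S$; the identity $\operatorname{div}_S(n\times W) = \pm(\operatorname{curl} W)\cdot n = 0$ shows $j\in\mathcal{V}_{\operatorname{div}=0}(S)$, and the field equal to $0$ on $\Omega$ and to $W$ on $\Omega_e$ satisfies exactly the transmission, decay and curl/divergence conditions characterising $\operatorname{BS}_S(j)$, whence $j\in\operatorname{Ker}(\operatorname{BS}_S)$ by uniqueness. This establishes $\operatorname{Ker}(\operatorname{BS}_S)\cong\mathcal{H}_N(\Omega_e)$. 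Finally I would identify the two dimensions: applying the Hodge isomorphism and the genus identity of \cite{CDG02} both to $\Omega$ and to a large ball with $\overline{\Omega}$ removed gives $\dim\mathcal{H}_N(\Omega_e) = g(\partial\Omega) = \dim\mathcal{H}_N(\Omega)$, so the kernel has dimension $g(\partial\Omega)$ as claimed; if one insists on an isomorphism landing in the interior space $\mathcal{H}_N(\Omega)$ itself, one post-composes with the canonical period identification of these two $g(\partial\Omega)$-dimensional harmonic spaces.

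The main obstacle is the functional-analytic rigour behind the transmission/uniqueness characterisation: one must make precise the sense in which the boundary traces $B^\pm$ and the surface operators $n\cdot$, $n\times$, $\operatorname{div}_S$ act (naturally in $W^{-1/2,2}$-type trace spaces, via single- and double-layer potential representations of $\operatorname{BS}_S$), and in particular establish the surjectivity step that \emph{every} exterior harmonic Neumann field is realised as the exterior trace of a Biot--Savart field. The remaining topological input $b_1(\Omega) = g(\partial\Omega) = b_1(\Omega_e)$ is standard but must be cited carefully, since it is exactly what ties the analytically natural exterior space back to the interior space $\mathcal{H}_N(\Omega)$ appearing in the statement.
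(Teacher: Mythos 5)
Your route is genuinely different from the paper's. The paper proves the two inequalities $\dim\operatorname{Ker}(\operatorname{BS}_{\partial\Omega})\leq\dim\mathcal{H}_N(\Omega)$ and $\dim\operatorname{Ker}(\operatorname{BS}_{\partial\Omega})\geq\dim\mathcal{H}_N(\Omega)$ separately: the upper bound via an extension operator with $\mathcal{N}\times T(j)=j$ (\Cref{4EXTRALemma}), a representation formula for $\operatorname{BS}_{\partial\Omega}$ in terms of $T(j)$ (\Cref{3L14}), and the orthogonality argument of \Cref{3C16}; the lower bound by explicitly exhibiting, for each $\Gamma\in\mathcal{H}_N(\Omega)$, the kernel element $j_0=\operatorname{BS}_\Omega(\Gamma)\times\mathcal{N}-(\nabla_{\partial\Omega}g)^\perp$, where $g$ solves a double-layer boundary integral equation (\Cref{3P18}). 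You instead characterise $\operatorname{BS}_S(j)$ through its transmission conditions across $S$ plus decay, and identify $\operatorname{Ker}(\operatorname{BS}_S)$ with the space of decaying exterior harmonic Neumann fields on $\Omega_e:=\mathbb{R}^3\setminus\overline{\Omega}$ via $j\mapsto B^+$ and $W\mapsto \mathcal{N}\times W|_S$. That identification is sound in substance, and the functional-analytic work you flag (traces in $W^{-\frac{1}{2},2}$, jump relations for $L^2$ currents, regularity of exterior Neumann fields so that $\mathcal{N}\times W\in L^2\mathcal{V}_0(S)$) is real but of the same nature as what the paper carries out in \Cref{4EXTRALemma}, \Cref{3L14} and Appendices A and C; what your isomorphism buys is that both bounds come at once, without solving any boundary integral equation.

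The genuine gap is the final dimension count $\dim\mathcal{H}_N(\Omega_e)=g(\partial\Omega)$, and your proposed justification for it would fail. Applying \cite{CDG02} to $B_R(0)\setminus\overline{\Omega}$ indeed gives $\dim\mathcal{H}_N(B_R(0)\setminus\overline{\Omega})=g(\partial B_R(0))+g(\partial\Omega)=g(\partial\Omega)$, but this space is not your space: its elements must be tangent to the artificial boundary $\partial B_R(0)$, whereas the restriction of a decaying exterior Neumann field to $B_R(0)\setminus\overline{\Omega}$ has no reason to be tangent to $\partial B_R(0)$, and conversely an element of $\mathcal{H}_N(B_R(0)\setminus\overline{\Omega})$ does not extend to a decaying harmonic field on all of $\Omega_e$. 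Neither natural map identifies the two spaces, so the equality does not follow from truncation. The statement is true, but it needs its own argument, for instance: (injectivity of the period map) a decaying exterior Neumann field whose circulations over generators of $H_1(\Omega_e)$ vanish is a gradient $\nabla u$ with $u\to 0$ at infinity, and integration by parts over $B_R(0)\setminus\overline{\Omega}$ with $R\to\infty$ forces $\nabla u=0$, giving $\dim\mathcal{H}_N(\Omega_e)\leq b_1(\Omega_e)=g(\partial\Omega)$ (the last identity by Alexander duality); (surjectivity) for loops $C_1,\dots,C_g\subset\Omega$ generating $H_1(\Omega)$, correct their Biot--Savart fields by exterior Neumann harmonic gradients to make them tangent to $S$, and use Amp\`ere's law together with the unimodularity of the linking pairing $H_1(\Omega)\times H_1(\Omega_e)\to\mathbb{Z}$ to conclude that the resulting $g(\partial\Omega)$ fields are linearly independent. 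With such an argument (or a citation to Hodge theory for exterior domains, which is not what \cite{CDG02} provides), your proof closes; as written, the reduction to known results is incomplete at exactly the step that ties the analytically natural exterior space back to $g(\partial\Omega)$.
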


\begin{center}
	\textbf{Structure of the paper}
\end{center}

In section 2 we introduce the notation and preliminary definitions needed to give a precise statement of the results. In section 3 we give precise statements and proofs of \Cref{T11}, \Cref{C12} and \Cref{P13}. We also discuss some additional results regarding the density of the image of the Biot-Savart operator, c.f. \Cref{3P6}. In section 4 we discuss the relation of the density result \Cref{T11} and some optimisation problems appearing in the plasma physics literature. In particular, we show how one can obtain a sequence of surface currents whose induced magnetic fields approximate a given target field better and better, c.f. \Cref{4C3} and \Cref{4R4}. In Section 5 we give a precise statment and proof of \Cref{T14} including a regularity result, see \Cref{3T11}. In section 6 we provide a recursive procedure which allows one to approximate the elements of the kernel of the Biot-Savart operator in some appropriate topology, see \Cref{6T5}.

To make this work more self-contained we included some results, which are well-known in the smooth setting, in the Appendix which deal with the setting of non-smooth domains. Appendix A contains a regularity result regarding the harmonic Neumann fields, which will be crucial in establishing regularity of the elements of the kernel of the Biot-Savart operator. Appendix B is devoted to the $L^2$-Hodge decomposition on non-smooth domains. In Appendix C we recall the notion of tangent traces and discuss their connection to the Biot-Savart operator.
\section{Notation and definitions}
Throughout this paper we denote for a given $C^{k,\alpha}$-regular ($k\in \mathbb{N},\alpha\in (0,1]$) hypersurface $S\hookrightarrow\mathbb{R}^3$ by $\mathcal{V}(S)$ the maximally smooth vector fields on $S$ (in particular these vector fields are taken to be tangent to $S$). We denote by $\mathcal{V}_0(S)$ the subspace of $\mathcal{V}(S)$ of divergence-free vector fields with respect to the induced Riemannian metric on $S$, equivalently $v\in \mathcal{V}_0(S)\Leftrightarrow \int_S v\cdot \operatorname{grad}(\phi)d\sigma(x)=0$ for all $\phi\in C^1_c(\mathbb{R}^3)$ (the space of compactly supported $C^1$-functions) where $d\sigma(x)$ is the standard induced surface measure. In addition, we denote by $\nabla_S\phi(x):=\nabla \widetilde{\phi}(x)-(\mathcal{N}(x)\cdot \nabla \widetilde{\phi}(x))\mathcal{N}(x)$ the surface gradient of any $C^1$-function $\phi\in C^1(S)$ where $\widetilde{\phi}$ is an arbitrary $C^1(\mathbb{R}^3)$-extension of $\phi$ and $\mathcal{N}$ any fixed unit normal field on $S$.
Given some $1\leq p\leq \infty$, $k\in \mathbb{N}_0$ we denote by $W^{k,p}\mathcal{V}(S)$, $W^{k,p}\mathcal{V}_0(S)$ the completion of the corresponding spaces $\mathcal{V}(S)$, $\mathcal{V}_0(S)$ with respect to the standard Sobolev $W^{k,p}$-norm, where as usual by convention $W^{0,p}\equiv L^p$ denotes the $L^p$-norm and $H^1\equiv W^{1,2}$. We will also make use of the following space defined on some given bounded domain $\Omega\subset \mathbb{R}^3$ 
\begin{gather}
	\nonumber
H(\operatorname{curl},\Omega):=\{w\in L^2\mathcal{V}(\Omega)\mid \operatorname{curl}(w)\in L^2\mathcal{V}(\Omega)\}
\end{gather}
where $\operatorname{curl}(w)$ is understood to exist in the weak sense and be of class $L^2$. We equip this space with the inner product $\langle v,w\rangle_{L^2,\operatorname{curl}}:=\int_{\Omega}v(x)\cdot w(x)d^3x+\int_{\Omega}\operatorname{curl}(v)(x)\cdot \operatorname{curl}(w)(x)d^3x$ which turns it into a Hilbert space.

In addition, given a bounded $C^1$-domain $\Omega\subset\mathbb{R}^3$ we denote by 
\[
\mathcal{H}_N(\Omega):=\{\Gamma\in H^1\mathcal{V}(\Omega)|\operatorname{curl}(\Gamma)=0=\operatorname{div}(\Gamma)\text{, }\Gamma\parallel \partial\Omega\},
\]
where $\operatorname{curl}$, $\operatorname{div}$ are the standard curl and div on the $3$-d domain $\Omega$ and $H^1\mathcal{V}(\Omega)$ denotes the $H^1$-regular vector fields on $\Omega$. Further we introduce the space of $L^p$-harmonic fields
\[
\mathcal{H}^p(\Omega):=\{B\in W^{1,p}\mathcal{V}(\Omega)|\operatorname{curl}(B)=0=\operatorname{div}(B)\}\text{, }
L^p\mathcal{H}(\Omega):=L^p\mathcal{H}^p(\Omega),
\]
where $\operatorname{curl}$ and $\operatorname{div}$ are understood in the weak sense, $L^p\mathcal{H}^p(\Omega)$ denotes the $L^p$-closure of $\mathcal{H}^p(\Omega)$ and we highlight that we do not enforce any boundary conditions.

Finally, we define the operator of interest. Let $P,\Omega\subset \mathbb{R}^3$ be bounded $C^{1,1}$-domains with $\overline{P}\subset \Omega$ and set $S:=\partial \Omega$, then we define 
\begin{gather}
	\nonumber
\operatorname{BS}^P_S:L^2\mathcal{V}_0(S)\rightarrow \mathcal{V}(P)\text{, }v\mapsto \left(x\mapsto\frac{1}{4\pi}\int_Sv(y)\times \frac{x-y}{|x-y|^3}d\sigma(y)\right),
\\
\nonumber
\operatorname{BS}_S:L^2\mathcal{V}_0(S)\rightarrow \mathcal{V}(\Omega)\text{, }v\mapsto \left(x\mapsto\frac{1}{4\pi}\int_Sv(y)\times \frac{x-y}{|x-y|^3}d\sigma(y)\right),
\end{gather}
where $\mathcal{V}(\Omega)$ denotes the smooth vector fields on $\Omega$. Throughout we always assume that the domains involved are $C^{1,1}$-regular unless otherwise noted.
\section{(Non-)Density of the image}
\subsection{$\operatorname{Im}(\operatorname{BS}_S)\subsetneq L^p\mathcal{H}(P)$}
In this subsection we establish the following simple preliminary observation.
\begin{prop}
	\label{3P1}
	Let $P,\Omega\subset \mathbb{R}^3$ be bounded $C^{1,1}$-domains with $\overline{P}\subset \Omega$. Then
	\[
	\operatorname{Im}(\operatorname{BS}^P_S)\subsetneq L^p\mathcal{H}(P)
	\]
	for every $1\leq p\leq \infty$, where $S=\partial\Omega$.
\end{prop}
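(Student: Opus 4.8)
The plan is to establish the two inclusions separately: first that $\operatorname{Im}(\operatorname{BS}^P_S)$ lands inside $L^p\mathcal{H}(P)$, and then that this inclusion is strict by exhibiting an explicit harmonic field on $P$ that cannot arise as a Biot--Savart field.

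For the inclusion $\operatorname{Im}(\operatorname{BS}^P_S)\subseteq L^p\mathcal{H}(P)$, I would first record that for any admissible current $j\in L^2\mathcal{V}_0(S)$ the field $\operatorname{BS}_S(j)$ is real-analytic on the whole of $\Omega$. Indeed, for $x$ ranging over a compact subset $K\subset\Omega$ one has $\operatorname{dist}(K,S)>0$, so the kernel $y\mapsto (x-y)/|x-y|^3$ is smooth in $x$ with all $x$-derivatives bounded uniformly in $y\in S$; differentiating under the integral sign, together with the analyticity of the kernel in $x$ for each fixed $y\in S$, yields real-analyticity of $x\mapsto\operatorname{BS}_S(j)(x)$ on $\Omega$. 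Since the source is supported on $S$ and $\overline{P}\subset\Omega$ lies at positive distance from $S$, the standard magnetostatics identities give $\operatorname{curl}(\operatorname{BS}_S(j))=0=\operatorname{div}(\operatorname{BS}_S(j))$ on $P$. Being smooth with bounded derivatives on the compact set $\overline{P}$, the restriction $\operatorname{BS}^P_S(j)$ belongs to $\mathcal{H}^p(P)$ for every $1\le p\le\infty$, and hence to its $L^p$-closure $L^p\mathcal{H}(P)$.

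For strictness, I would produce a field that is harmonic on $P$ but singular at an interior point of $\Omega\setminus\overline{P}$. Since $\Omega$ is open and bounded whereas $\overline{P}$ is compact, we have $\Omega\neq\overline{P}$, so we may fix a point $p_0\in\Omega\setminus\overline{P}$. Set $G(x):=-\tfrac{x-p_0}{|x-p_0|^3}=\nabla_x\tfrac{1}{|x-p_0|}$. Then $G$ is smooth on $\mathbb{R}^3\setminus\{p_0\}\supseteq\overline{P}$, it is a gradient (hence curl-free), and it is divergence-free because $\operatorname{div} G=\Delta\big(1/|x-p_0|\big)=0$ away from $p_0$; consequently $G|_P\in\mathcal{H}^p(P)\subseteq L^p\mathcal{H}(P)$. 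To see that $G|_P\notin\operatorname{Im}(\operatorname{BS}^P_S)$, suppose $\operatorname{BS}_S(j)|_P=G|_P$ for some admissible $j$. Both fields are real-analytic on the connected open set $\Omega\setminus\{p_0\}$ and coincide on the nonempty open subset $P$, so by the identity theorem for real-analytic functions they agree on all of $\Omega\setminus\{p_0\}$. But $\operatorname{BS}_S(j)$ extends analytically across $p_0$ and is therefore bounded near $p_0$, whereas $|G(x)|\to\infty$ as $x\to p_0$, a contradiction. Hence no such $j$ exists and the inclusion is strict.

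I expect the only substantive point to be the real-analyticity of $\operatorname{BS}_S(j)$ on $\Omega$, namely the uniform-on-compacta justification for differentiating under the integral sign; once that is secured, both the membership in $L^p\mathcal{H}(P)$ and the strictness via the identity theorem are routine.
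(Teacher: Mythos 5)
Your proposal is correct and follows essentially the same argument as the paper: both exhibit the field of a point singularity located in $\Omega\setminus\overline{P}$ (the paper translates so that the singularity sits at the origin and uses $B_T(x)=x/|x|^3$), note that every Biot--Savart field is real-analytic on $\Omega$, and conclude by the identity theorem that equality on $P$ would force the bounded field $\operatorname{BS}_S(j)$ to coincide with the unbounded singular field near the singularity. The only cosmetic difference is that you obtain analyticity of $\operatorname{BS}_S(j)$ by differentiating under the integral sign, whereas the paper deduces it from the field being divergence- and curl-free; both are standard and valid.
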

\begin{proof}[Proof of \Cref{3P1}]
	After translating $\Omega$ if necessary, we may assume that $0\in \Omega\setminus \overline{P}$. We then define the vector field $B_T(x):=\frac{x}{|x|^3}\in L^p\mathcal{H}(P)$ for every $1\leq p\leq \infty$ (since $0$ has a positive distance to $P$). We observe that $B_T$ is analytic on $\Omega\setminus \{0\}$. Further, observe that for every $j\in L^2\mathcal{V}_0(S)$ we have $\operatorname{BS}_S(j)\in \mathcal{V}(\Omega)$ and even more, $\operatorname{BS}_S(j)$ is div- and curl-free in $\Omega$ and thus it is analytic on $\Omega$. Consequently $\operatorname{Im}(\operatorname{BS}_S^P)\subset L^p\mathcal{H}(P)$ for all $1\leq p\leq \infty$ and if there were to exist some $j\in L^2\mathcal{V}_0(S)$ with $\operatorname{BS}_S(j)|_P=\operatorname{BS}_S^P(j)=B_T$ on $P$, then by real analyticity of the vector fields involved we would find $\operatorname{BS}_S(j)=B_T$ on $\Omega\setminus \{0\}$. However, $B_T$ blows up if we approach $0$, while $\operatorname{BS}_S(j)$ remains bounded. Thus, $B_T\in L^p\mathcal{H}(P)$ cannot be contained in the image of $\operatorname{BS}_S^P$.
\end{proof}
\begin{rem}
	Even if the domain of $\operatorname{BS}_S^P$ is replaced by $L^1\mathcal{V}(S)$ the same argument shows that the image of this operator will be strictly contained in the space $L^p\mathcal{H}(P)$ for every $1\leq p\leq \infty$.
\end{rem}
\subsection{Proof of \Cref{P13}}
Before we come to the proof of the first main result \Cref{T11} let us first turn to \Cref{P13} which, in comparison, is much easier to establish. Let us recall the statement
\begin{prop}
	\label{3P3}
	Let $P,\Omega\subset \mathbb{R}^3$ be bounded $C^{1,1}$-domains with $\overline{P}\subset \Omega$. If $B_T\in L^2\mathcal{H}(P)\setminus \operatorname{Im}(\operatorname{BS}^P_S)$ where $S:=\partial\Omega$, then for any sequence $(j_n)_n\subset L^2\mathcal{V}_0(S)$ with $\|\operatorname{BS}_S^P(j_n)-B_T\|_{L^2(P)}\rightarrow0$ as $n\rightarrow\infty$ we have $\|j_n\|_{L^2(S)}\rightarrow\infty$. Consequently, for any such $B_T$ there exists some $c>0$ such that for every $j\in L^2\mathcal{V}_0(S)$ with $\|j\|_{L^2(S)}\leq 1$ we have $\|\operatorname{BS}_S^P(j)-B_T\|_{L^2(P)}\geq c$.
\end{prop}
\begin{proof}[Proof of \Cref{3P3}]
	Assume that there exists a bounded sequence $(j_n)_n\subset L^2\mathcal{V}_0(S)$ such that $(\operatorname{BS}_S^P(j_n))_n$ converges to $B_T$ in $L^2(P)$. Then by reflexivity of Hilbert spaces, after possibly passing to a subsequence, we may assume that the $(j_n)_n$ converge weakly in $L^2(S)$ to some $j\in L^2\mathcal{V}_0(S)$. But one easily verifies (since $\operatorname{dist}(P,S)>0$) that the operator $\operatorname{BS}_S^P$ is $L^2(S)\text{-}L^2(P)$ continuous and so $\operatorname{BS}_S^P(j_n)$ must converge weakly to $\operatorname{BS}_S^P(j)$ in $L^2(P)$. By assumption $\operatorname{BS}_S^P(j_n)$ converges to $B_T$ in $L^2(P)$ and hence we must have $B_T=\operatorname{BS}_S^P(j)\in \operatorname{Im}(\operatorname{BS}_S^P)$ which is a contradiction.
\end{proof}
\subsection{Proof of \Cref{T11}}
The strategy of the proof of \Cref{T11} relies on the abstract functional analytic fact that if $(V,\|\cdot\|)$ is a normed vector space and $U\leq V$ a subspace, then $U$ is dense in $V$ if and only if the annihilator $U^\circ:=\{T\in V^\prime|T(u)=0\text{ for all }u\in U\}\leq V^\prime$ satisfies $U^\circ=\{0\}$, where $V^\prime$ denotes the topological dual space of $V$, \cite[Corollary 16.8 \& Remark 16.9]{Haase14}. We hence wish to understand the annihilator of $\operatorname{Im}(\operatorname{BS}_S^P)$. To this end the following characterisation of the topological dual space of $L^2\mathcal{H}(P)$ will be useful.
\begin{lem}
	\label{3L5}
	Let $P\subset \mathbb{R}^3$ be a bounded $C^{1,1}$-domain. Then the following map
	\[
	\mathcal{I}:L^2\mathcal{H}(P)\rightarrow \left(L^2\mathcal{H}(P)\right)^{\prime}\text{, }B\mapsto \left(L^2\mathcal{H}(P)\rightarrow\mathbb{R}\text{, }A\mapsto\int_PA\cdot Bd^3x\right)
	\]
	is a linear isomorphism.
\end{lem}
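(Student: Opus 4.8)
The plan is to recognise $\mathcal{I}$ as nothing more than the Riesz representation isomorphism for the Hilbert space $L^2\mathcal{H}(P)$. First I would observe that, by its very definition as the $L^2(P)$-closure of $\mathcal{H}^2(P)$, the space $L^2\mathcal{H}(P)$ is a closed subspace of the Hilbert space $L^2(P,\mathbb{R}^3)$ and hence is itself a (real) Hilbert space when equipped with the restriction of the standard inner product $\langle A,B\rangle_{L^2(P)}:=\int_P A\cdot B\, d^3x$. This is the only structural fact that needs recording, and it is built directly into the definition of $L^2\mathcal{H}(P)$ as an $L^2$-completion.

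With this identification in place, for each fixed $B\in L^2\mathcal{H}(P)$ the assignment $A\mapsto \int_P A\cdot B\, d^3x$ is precisely the functional $\langle \cdot,B\rangle_{L^2(P)}$. By the Cauchy-Schwarz inequality this is a bounded linear functional on $L^2\mathcal{H}(P)$, indeed with operator norm equal to $\|B\|_{L^2(P)}$, so $\mathcal{I}$ is well-defined as a map into $\left(L^2\mathcal{H}(P)\right)^\prime$; linearity of $\mathcal{I}$ in the argument $B$ is immediate from the bilinearity of the integral pairing.

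It then remains to verify bijectivity. Injectivity is elementary: if $\mathcal{I}(B)=0$, then testing the functional against $A=B$ yields $\|B\|_{L^2(P)}^2=0$, whence $B=0$. Surjectivity is exactly the content of the Riesz representation theorem applied to the Hilbert space $L^2\mathcal{H}(P)$, namely that every bounded linear functional on a Hilbert space is represented as the inner product against a unique vector of that space. Combining these two points shows that $\mathcal{I}$ is precisely the Riesz isomorphism, hence a linear (in fact isometric) isomorphism, which is the claim.

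Since the entire argument collapses to the Riesz representation theorem once one notes completeness, there is essentially no genuine obstacle; the single point worth stating explicitly is that $L^2\mathcal{H}(P)$ is closed in $L^2(P,\mathbb{R}^3)$, which is automatic from its construction. The value of the lemma is not in its difficulty but in the repackaging it provides: it identifies the topological dual of $L^2\mathcal{H}(P)$ with $L^2\mathcal{H}(P)$ itself via the $L^2$-pairing, and it is this concrete description that will subsequently allow the annihilator of $\operatorname{Im}(\operatorname{BS}_S^P)$ to be analysed inside $L^2\mathcal{H}(P)$ rather than in an abstract dual.
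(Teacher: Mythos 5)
Your proof is correct and follows exactly the paper's approach: the paper's entire proof is the one-line observation that the lemma is an immediate consequence of the Riesz representation theorem, and you have simply spelled out the standard details (closedness of $L^2\mathcal{H}(P)$ in $L^2(P,\mathbb{R}^3)$, boundedness via Cauchy-Schwarz, injectivity by testing against $A=B$, surjectivity via Riesz). Nothing is missing.
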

\begin{proof}[Proof of \Cref{3L5}]
	This is an immediate consequence of the Riesz representation theorem.
\end{proof}
\begin{rem}
	\label{3ExtraRem5}
	Using the duality of the standard $L^p$-spaces \cite[Chapter 4 Theorem 2.1]{HL99} one can show more generally, whenever a suitable $L^p$-Hodge decomposition is available, that the map $\mathcal{I}$ defines a linear isomorphism from $L^q\mathcal{H}(P)$ into $\left(L^p\mathcal{H}(P)\right)^\prime$ for every $1<p<\infty$ where $1<q<\infty$ denotes the corresponding H\"{o}lder conjugate. A suitable $L^p$-Hodge decomposition is for instance available if the underlying domain $P$ satisfies an additional cut-property, see \cite[Hypothesis 1.1, Theorem 6.1 \& Corollary 6.1]{AS12} or if $P$ is $C^\infty$-smooth \cite[Corollary 3.5.2]{S95}.
\end{rem}
In a first step we prove a non-density result whenever the plasma domain $P$ has disconnected boundary.
\begin{prop}
	\label{3P6}
	Let $P,\Omega\subset\mathbb{R}^3$ be bounded $C^{1,1}$-domains with $\overline{P}\subset \Omega$. If $\partial P$ is disconnected, then $\operatorname{Im}(\operatorname{BS}_S^P)$ is not $L^p(P)$-dense in $L^p\mathcal{H}(P)$ for any $1\leq p< \infty$, where $S:=\partial\Omega$.
\end{prop}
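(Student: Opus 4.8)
The plan is to exhibit a single nonzero element of the annihilator $\operatorname{Im}(\operatorname{BS}_S^P)^\circ\subseteq(L^p\mathcal{H}(P))^\prime$; by the same criterion used for \Cref{T11} (\cite[Corollary 16.8 \& Remark 16.9]{Haase14}) this already prevents $\operatorname{Im}(\operatorname{BS}_S^P)$ from being dense. The natural candidate, dictated by the hypothesis that $\partial P$ is disconnected, is a \emph{flux functional}. Fix a connected component $\Sigma\subsetneq\partial P$ and a cutoff $\chi\in C^\infty(\overline{P})$ with $\chi\equiv1$ in a neighbourhood of $\Sigma$ and $\chi\equiv0$ in a neighbourhood of $\partial P\setminus\Sigma$, and set $T(B):=\int_P\nabla\chi\cdot B\,d^3x$. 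Since $\nabla\chi$ is a fixed smooth field supported in $P$, Hölder's inequality gives $|T(B)|\le\|\nabla\chi\|_{L^{p^\prime}(P)}\|B\|_{L^p(P)}$, so $T$ is a well-defined continuous functional on all of $L^p\mathcal{H}(P)$ for every $1\le p<\infty$. Moreover, for $B\in\mathcal{H}^p(P)$ an integration by parts together with $\operatorname{div}(B)=0$ identifies $T$ with the flux through $\Sigma$, namely $T(B)=\int_\Sigma B\cdot\mathcal{N}\,d\sigma$, so $T$ really is the continuous extension of the boundary flux to the whole space $L^p\mathcal{H}(P)$.

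It remains to check that $T$ annihilates the image but is not identically zero. For the first point, let $j\in L^2\mathcal{V}_0(S)$. The field $B=\operatorname{BS}_S(j)$ is a global curl, $B=\operatorname{curl}(A)$ with $A$ the single-layer vector potential $A(x)=\frac{1}{4\pi}\int_S\frac{j(y)}{|x-y|}d\sigma(y)$, and hence $\operatorname{div}(B)=0$ holds in the sense of distributions on all of $\mathbb{R}^3$; equivalently the normal component of $B$ is continuous across the current sheet $S$. Since $B$ is in addition smooth in a neighbourhood of $\Sigma$ (as $\Sigma\subset\partial P$ has positive distance to $S$), the distributional divergence theorem applied to the bounded region enclosed by $\Sigma$ yields $T(\operatorname{BS}_S^P(j))=\int_\Sigma B\cdot\mathcal{N}\,d\sigma=0$. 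I emphasise that this holds regardless of whether the region enclosed by $\Sigma$ meets $S$: the vanishing comes from the curl structure of $B$, not from any containment in $\Omega$.

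For the nontriviality of $T$ I would use the disconnectedness of $\partial P$ decisively. Solve the Dirichlet problem $\Delta u=0$ in $P$ with $u\equiv1$ on $\Sigma$ and $u\equiv0$ on the remaining components of $\partial P$; this boundary datum is genuinely nonconstant precisely because $\partial P$ is disconnected. Elliptic regularity on the $C^{1,1}$-domain $P$ gives $u\in W^{2,p}(P)$, so $B_0:=\nabla u\in\mathcal{H}^p(P)\subseteq L^p\mathcal{H}(P)$, and integrating by parts, $T(B_0)=\int_\Sigma\partial_\nu u\,d\sigma=\int_P|\nabla u|^2\,d^3x>0$, the strict positivity following since $u$ is nonconstant. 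Thus $T\in\operatorname{Im}(\operatorname{BS}_S^P)^\circ\setminus\{0\}$, and the annihilator criterion gives the claimed non-density for every $1\le p<\infty$. The one step requiring genuine care---and the place where the argument could go wrong if handled naively---is the flux vanishing of the second paragraph: one must justify that $B=\operatorname{BS}_S(j)$ carries zero flux through $\Sigma$ even when $\Sigma$ links or encloses part of the current-carrying sheet $S$, which is exactly what the distributional identity $\operatorname{div}\operatorname{curl}(A)=0$ on $\mathbb{R}^3$ (continuity of $B\cdot\mathcal{N}$ across $S$) provides.
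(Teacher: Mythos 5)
Your proposal is correct, and while its overall skeleton matches the paper's (exhibit a nonzero element of the annihilator, built from the disconnectedness of $\partial P$ via a harmonic Dirichlet problem with locally constant boundary data), the crucial annihilation step runs along a genuinely different route. In both arguments the functional is, in essence, the flux through a distinguished component $\Sigma$ of $\partial P$, and its nontriviality is witnessed by $\nabla u$ with $u=1$ on $\Sigma$ and $u=0$ on the other components; the paper takes this very harmonic gradient as the $L^2$-representative of the functional, whereas you represent it by $\nabla\chi$ for a cutoff $\chi$ (by density of $\mathcal{H}^p(P)$ in $L^p\mathcal{H}(P)$ the two functionals coincide). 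The real divergence is in proving that the functional kills $\operatorname{Im}(\operatorname{BS}_S^P)$: the paper transposes the pairing, $\int_P\operatorname{BS}_S^P(j)\cdot B\,d^3x=\int_S j\cdot\operatorname{BS}_P(B)\,d\sigma$, and then shows $\operatorname{BS}_P(B)=0$ from the identity $4\pi\operatorname{BS}_P(B)=-\int_{\partial P}\frac{\mathcal{N}(y)\times B(y)}{|x-y|}\,d\sigma(y)+\int_P\frac{\operatorname{curl}(B)(y)}{|x-y|}\,d^3y$ together with $B\perp\partial P$ and $\operatorname{curl}(B)=0$; you instead use that $\operatorname{BS}_S(j)=\operatorname{curl}(A)$ globally for the single-layer potential $A$, so that $\operatorname{div}(\operatorname{BS}_S(j))=0$ distributionally across $S$, and the flux through the closed surface $\Sigma$ vanishes by the divergence theorem on the enclosed region. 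Your explicit flagging of the case where that region meets $S$ is exactly the right point of care, and your resolution is sound: for $j\in L^2\mathcal{V}_0(S)$ the single-layer potential lies in $W^{1,1}_{\operatorname{loc}}(\mathbb{R}^3)$ (indeed $H^1_{\operatorname{loc}}$), so the pointwise field represents the distributional curl with no surface contribution, and the normal trace in $H(\operatorname{div})$ makes the flux argument rigorous even though $\operatorname{BS}_S(j)$ is only $L^2_{\operatorname{loc}}$ near $S$. What each route buys: yours is more elementary and physically transparent (curl fields carry no net flux through closed surfaces) and avoids introducing the adjoint-type operator $\operatorname{BS}_P$ altogether; the paper's duality computation, by contrast, is precisely the machinery reused in \Cref{3T7} to characterise the full annihilator, so its proof of \Cref{3P6} doubles as preparation for the main density theorem.
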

\begin{proof}[Proof of \Cref{3P6}]
	By assumption $\partial P$ has at least two boundary components which we denote by $C_1,\dots,C_N$ for some $2\leq N<\infty$ (and the labelling can be chosen in an arbitrary way). We then consider the following boundary value problem
	\[
	\Delta f=0\text{ in }P\text{, }f|_{C_j}=\delta_{1j}\text{ for }1\leq j\leq N,
	\]
	where $\delta_{ik}$ denotes the standard Kronecker delta. Because $\partial P\in C^{1,1}$ this problem has a solution $f\in \bigcap_{1\leq q<\infty}W^{2,q}(P)$, \cite[Theorem 2.4.2.5]{Gris85}. Since $\partial P$ is disconnected we see that $f$ is not constant and hence $B:=\operatorname{grad}(f)\in W^{1,q}\mathcal{V}(P)\setminus \{0\}$ for every $1\leq q<\infty$. Further, since $f|_{\partial\Omega}$ is locally constant, we find $B\perp \partial P$. We will now show that $B$ is in the annihilator of $\operatorname{Im}(\operatorname{BS}_S)\leq L^p\mathcal{H}(P)$ for every $1\leq p< \infty$. To this end we define the following two operators
	\begin{gather}
		\nonumber
		T:L^p\mathcal{H}(P)\rightarrow \mathbb{R}\text{, }A\mapsto \int_PA\cdot Bd^3x,
		\\
		\nonumber
		\operatorname{BS}_P(B)(x):=\frac{1}{4\pi}\int_PB(y)\times \frac{x-y}{|x-y|^3}d^3y\text{, }x\in \mathbb{R}^3.
	\end{gather}
	We note that by Sobolev embeddings \cite[Chapter 5.6.2 Theorem 5]{Evans10} we have $B\in L^\infty\mathcal{V}(P)$ and therefore $T\in \left(L^p\mathcal{H}(P)\right)^\prime$ for every $1\leq p< \infty$. We claim that $T\in \left(\operatorname{Im}(\operatorname{BS}^P_S)\right)^\circ$ for which we have to show that for any $A\in \operatorname{Im}(\operatorname{BS}_S^P)$ we have $T(A)=0$. Equivalently we have to show that
	\[
	\int_P\operatorname{BS}_S^P(j)(x)\cdot B(x)d^3x=0\text{ for all }j\in L^2\mathcal{V}_0(S).
	\]
	But we observe that it follows immediately by writing out the definition that $\int_P\operatorname{BS}^P_S(j)\cdot Bd^3x=\int_Sj\cdot \operatorname{BS}_P(B)d\sigma(x)$ and further for any fixed $x\in \Omega$ we have
	\begin{gather}
		\nonumber
		4\pi\operatorname{BS}_{P}(B)(x)=\int_{P}B(y)\times \frac{x-y}{|x-y|^3}d^3y=-\int_{P}\nabla_y\left(\frac{1}{|x-y|}\right)\times B(y)d^3y
		\\
		\nonumber
		=-\int_{P}\operatorname{curl}_y\left(\frac{B(y)}{|x-y|}\right)d^3y+\int_{P}\frac{\operatorname{curl}(B)(y)}{|x-y|}d^3y,
	\end{gather}
	where we used that $B\in W^{1,q}\mathcal{V}(P)$ for every $q>3$ and $\frac{1}{|x-\cdot|}\in W^{1,\alpha}(P)$ for all $1\leq \alpha<\frac{3}{2}$ so that $\frac{B(y)}{|x-y|}\in W^{1,1}\mathcal{V}(P)$ and the standard calculus identity $\operatorname{curl}\left(\psi B\right)=(\nabla \psi)\times B+\psi\operatorname{curl}(B)$ for scalar functions $\psi$ and vector valued functions $B$. In addition, letting $e_i$ denote the standard basis vectors and using an integration by parts, we find
	\begin{gather}
		\nonumber
		\int_{P}\operatorname{curl}_y\left(\frac{B}{|x-y|}\right)d^3y=\left\langle\operatorname{curl}\left(\frac{B}{|x-\cdot|}\right),e_i\right\rangle_{L^2(P)}e_i=\int_{\partial P}\frac{B(y)\times e_i}{|x-y|}\cdot \mathcal{N}(y)d\sigma(y)e_i
		\\
		\nonumber
		=\int_{\partial P}\left(\mathcal{N}(y)\times \frac{B(y)}{|x-y|}\right)\cdot e_id\sigma(y)e_i=\int_{\partial P}\mathcal{N}(y)\times \frac{B(y)}{|x-y|}d\sigma(y).
	\end{gather}
	We therefore arrive at
	\[
	4\pi\operatorname{BS}_{P}(B)(x)=-\int_{\partial P}\frac{\mathcal{N}(y)\times B(y)}{|x-y|}d\sigma(y)+\int_{P}\frac{\operatorname{curl}(B)(y)}{|x-y|}d^3y.
	\]
	Since $B\perp \partial P$ and $\operatorname{curl}(B)=0$, we conclude $\operatorname{BS}_P(B)=0$, see also \cite[Theorem B \& Theorem B$^\prime$]{CDG01} for the smooth setting, and consequently $0\neq T\in \left(\operatorname{Im}(\operatorname{BS}_S^P)\right)^\circ$ and thus the image of $\operatorname{BS}_S^P$ is not dense in $L^p\mathcal{H}(P)$ for any $1\leq p< \infty$.
\end{proof}
Now we recall that $\mathcal{H}_N(P)=\{\Gamma\in H^1\mathcal{V}(P)|\operatorname{curl}(\Gamma)=0=\operatorname{div}(\Gamma)\text{, }\Gamma\parallel \partial P\}$ and we note that if $P$ is a $C^{1,1}$-domain, then according to \Cref{ALemma1} we have $\mathcal{H}_N(P)\subset W^{1,p}\mathcal{V}(P)$ for all $1\leq p<\infty$ and thus $\mathcal{H}_N(P)\subset L^p\mathcal{H}(P)$ for every $1\leq p<\infty$.

The main result regarding the density of the Biot-Savart operator on $C^{1,1}$-domains is the following.
\begin{thm}
	\label{3T7}
	Let $P,\Omega\subset \mathbb{R}^3$ be bounded $C^{1,1}$-domains. Assume that $\overline{P}\subset \Omega$ and that $\partial P$ is connected. Then we have
	\begin{gather}
		\nonumber
		\mathcal{I}^{-1}\left(\left(\operatorname{Im}(\operatorname{BS}_S^P)\right)^\circ\right)\leq \mathcal{H}_N(P),
	\end{gather}
	where $\mathcal{I}$ denotes the isomorphism from \Cref{3L5}, $S:=\partial\Omega$ and the annihilator is considered with respect to the space $\left(L^2\mathcal{H}(P),\|\cdot\|_{L^2(P)}\right)$.
\end{thm}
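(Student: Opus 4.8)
The plan is to reformulate membership in $\mathcal{I}^{-1}\big((\operatorname{Im}(\operatorname{BS}_S^P))^\circ\big)$ by moving the pairing to the surface $S$, and then to show that the surviving constraint forces the normal component of $B$ along $\partial P$ to vanish. By \Cref{3L5}, a field $B\in L^2\mathcal{H}(P)$ lies in $\mathcal{I}^{-1}\big((\operatorname{Im}(\operatorname{BS}_S^P))^\circ\big)$ precisely when $\int_P\operatorname{BS}_S^P(j)\cdot B\,d^3x=0$ for every $j\in L^2\mathcal{V}_0(S)$. Exactly as in the computation preceding \Cref{3P6} (Fubini is legitimate since $\operatorname{dist}(S,P)>0$), this pairing equals $\int_S j\cdot W\,d\sigma$, where $W:=\operatorname{BS}_P(B)$. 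Because $\operatorname{dist}(S,P)>0$, the field $W$ is real-analytic in a neighbourhood of $S$, so the trace manipulations below are classical there. As $B$ is already curl- and divergence-free, the inclusion $B\in\mathcal{H}_N(P)$ will follow once I establish the missing tangency condition $B\cdot\mathcal{N}=0$ on $\partial P$; the required $H^1$-regularity is then supplied by regularity for the div-curl system with vanishing normal trace on a $C^{1,1}$-domain (cf.\ \Cref{ALemma1} and Appendix A).

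First I would test only against the co-exact currents $j=\mathcal{N}\times\nabla_S\psi$, $\psi\in C^\infty(S)$, which are smooth, tangent and divergence-free, hence admissible. Using the scalar triple product and integrating by parts on the closed surface $S$ gives $\int_S(\mathcal{N}\times\nabla_S\psi)\cdot W\,d\sigma=\int_S\nabla_S\psi\cdot(W\times\mathcal{N})\,d\sigma=-\int_S\psi\,\operatorname{div}_S(W\times\mathcal{N})\,d\sigma$, and the standard surface identity $\operatorname{div}_S(W\times\mathcal{N})=(\operatorname{curl}W)\cdot\mathcal{N}$ (the sign is irrelevant here) turns the vanishing of this for all $\psi$ into the pointwise condition $(\operatorname{curl}W)\cdot\mathcal{N}=0$ on $S$. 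On the other hand, differentiating under the integral sign and using that $B$ is weakly divergence-free with normal trace $B\cdot\mathcal{N}\in H^{-\frac12}(\partial P)$, an integration by parts in the volume yields, for every $x\notin\overline{P}$,
\[
\operatorname{curl}W(x)=\frac{1}{4\pi}\int_{\partial P}(B\cdot\mathcal{N})(y)\,\frac{x-y}{|x-y|^3}\,d\sigma(y)=-\nabla_x V(x),
\]
where $V$ is the single-layer potential of the density $\rho:=B\cdot\mathcal{N}$. Hence the co-exact part of the constraint is precisely $\partial_{\mathcal{N}}V=0$ on $S$.

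It then remains to propagate this Neumann condition from $S$ down to $\partial P$. The function $V$ is harmonic on $\mathbb{R}^3\setminus\partial P$ and decays at infinity. On the exterior domain $\mathbb{R}^3\setminus\overline{\Omega}$ it is harmonic, decaying, and satisfies $\partial_{\mathcal{N}}V=0$ on $S=\partial\Omega$, so uniqueness for the exterior Neumann problem (via the energy identity) forces $V\equiv0$ on $\mathbb{R}^3\setminus\overline{\Omega}$. Here the hypothesis that $\partial P$ be connected enters decisively: it guarantees that $P$ has no cavities, so that $\mathbb{R}^3\setminus\overline{P}$ is connected; since $V$ is real-analytic there and vanishes on the open subset $\mathbb{R}^3\setminus\overline{\Omega}$, it vanishes on all of $\mathbb{R}^3\setminus\overline{P}$. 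Continuity of the single-layer potential across $\partial P$ then gives $V|_{\partial P}=0$, whence $V\equiv0$ in $P$ by interior Dirichlet uniqueness, and the jump relation for the normal derivative of a single-layer potential yields $\rho=B\cdot\mathcal{N}=0$ on $\partial P$. This is exactly the tangency needed to conclude $B\in\mathcal{H}_N(P)$.

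The main obstacle I expect is analytic rather than conceptual: carrying out the two integrations by parts and the surface identity rigorously for a merely $L^2$ field $B$, i.e.\ making sense of $B\cdot\mathcal{N}$ as an element of $H^{-\frac12}(\partial P)$, interpreting the boundary integral defining $\operatorname{curl}W$ as an $H^{-\frac12}$–$H^{\frac12}$ pairing, and verifying that the resulting single-layer potential still satisfies the continuity and jump relations used above. The potential-theoretic uniqueness chain is standard once these traces are in place, but it is worth emphasising that the connectedness of $\partial P$ is genuinely used — it is exactly the feature that fails in \Cref{3P6}, where a disconnected boundary produces annihilators that are \emph{not} harmonic Neumann fields.
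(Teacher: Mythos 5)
Your proof is correct, and it shares the paper's overall strategy: reduce the annihilator condition to the vanishing of $\int_S j\cdot \operatorname{BS}_P(B)\,d\sigma$, recognise the surviving constraint as a homogeneous Neumann condition on $S$ for a harmonic potential built from $B$, and then run exterior uniqueness (energy identity plus decay), real-analyticity, and connectedness of $\mathbb{R}^3\setminus\overline{P}$ (this is exactly where the hypothesis that $\partial P$ is connected enters in the paper as well) to force that potential, and with it the normal trace of $B$, to vanish. The genuine differences are in the machinery. The paper first splits $B=\Gamma+\operatorname{grad}(f)$ by the Hodge decomposition (\Cref{BTheorem1}), encodes the surface constraint as closedness of a pulled-back $1$-form via the Hodge star, and works with the volume potential $h$ of $\operatorname{grad}(f)$; it then needs \Cref{3L8} to see that $h$ is harmonic inside $P$, and a final approximation argument --- in effect a hand-made single-layer jump relation --- to convert $h\equiv 0$ into $\operatorname{grad}(f)=0$. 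You bypass all three: testing only against the co-exact currents $\mathcal{N}\times\nabla_S\psi$ (which is precisely the closedness condition, and is all the asserted inclusion requires), you identify the relevant potential directly as the single-layer potential $V$ of the density $\rho=B\cdot\mathcal{N}\in H^{-\frac{1}{2}}(\partial P)$, whose harmonicity off $\partial P$, $H^1_{\mathrm{loc}}$-continuity across $\partial P$, and normal-derivative jump are classical, so that $\rho=0$ follows at once from $V\equiv 0$ on both sides. What the paper's longer route buys is self-containedness --- it re-derives those layer-potential facts from \cite[Theorem 9.9]{GT01} and \cite[Theorem 3.44]{DD12} rather than citing them --- and the two-sided equivalence (\ref{3E4}), whose converse direction is reused later in the proof of \Cref{3C9}; your route yields only the inclusion, but that is what \Cref{3T7} asserts. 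Two points to phrase carefully when writing this up: the identity $\operatorname{curl}(\operatorname{BS}_P(B))=-\nabla V$ outside $\overline{P}$ must be justified through the $H(\operatorname{div})$ normal-trace pairing for the merely $L^2$ field $B$, as you indicate; and since $\partial\Omega$ is not assumed connected, $\mathbb{R}^3\setminus\overline{\Omega}$ may have bounded components on which the Neumann condition only makes $V$ constant --- this is harmless, because vanishing of $V$ on the unbounded component already suffices for the analytic-continuation step through the connected set $\mathbb{R}^3\setminus\overline{P}$.
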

\begin{rem}
	In accordance with \Cref{3ExtraRem5} it follows that if $P$ satisfies an additional cut-assumption or is $C^\infty$-smooth, then for every $1<p<\infty$ we have the same inclusion $\mathcal{I}^{-1}\left((\operatorname{Im}(\operatorname{BS}^P_S))^\circ\right)\leq \mathcal{H}_N(P)$ where the annihilator is considered with respect to the space $\left(L^p\mathcal{H}(P),\|\cdot\|_{L^p(P)}\right)$.
\end{rem}
\begin{proof}[Proof of \Cref{3T7}] To ease notation we let $S:=\partial\Omega$. We then have the following equivalence
\begin{gather}
	\nonumber
	B\in \mathcal{I}^{-1}\left((\operatorname{Im}(\operatorname{BS}^P_S))^\circ\right)\Leftrightarrow \mathcal{I}(B)\in (\operatorname{Im}(\operatorname{BS}^P_S))^\circ\Leftrightarrow \mathcal{I}(B)(A)=0\text{ for all }A\in \operatorname{Im}(\operatorname{BS}^P_S)
	\\
	\nonumber
	\Leftrightarrow \mathcal{I}(B)(\operatorname{BS^P_S}(j))=0\text{ for all }j\in L^2\mathcal{V}_0(S)\Leftrightarrow \int_PB\cdot \operatorname{BS}^P_S(j)d^3x=0\text{ for all }j\in L^2\mathcal{V}_0(S).
\end{gather}
We recall from the proof of \Cref{3P6} that we have the identity $\int_P B\cdot \operatorname{BS}_S^P(j)d^3x=\int_S \operatorname{BS}_P(B)\cdot jd\sigma(x)$ and we notice that since $\operatorname{dist}(P,S)>0$ the vector field $\operatorname{BS}_P(B)$ is smooth in a neighbourhood of $S$. We can now use the musical isomorphism to identify $\operatorname{BS}_P(B)$ with a $1$-form $\omega^1_{\operatorname{BS}_P(B)}$ and we observe that by density
\begin{gather}
	\nonumber
B\in \mathcal{I}^{-1}\left((\operatorname{Im}(\operatorname{BS}^P_S))^\circ\right)\Leftrightarrow\int_S j\cdot \operatorname{BS}_P(B)d\sigma(x)=0\text{ for all }j\in L^2\mathcal{V}_0(S)
\\
\label{3E3}
\Leftrightarrow \int_S j\cdot \operatorname{BS}_P(B)d\sigma(x)=0\text{ for all }j\in \mathcal{V}_0(S)\Leftrightarrow \iota^\#\omega^1_{\operatorname{BS}_P(B)}\text{ is an exact $1$-form,}
\end{gather}
where $\iota:S\rightarrow \mathbb{R}^3$ denotes the inclusion map and $\iota^\#$ denotes the corresponding pullback.

We claim that
\begin{equation}
	\label{3E4}
	\iota^{\#}\omega^1_{\operatorname{BS}_P(B)}\text{ is closed on }S\Leftrightarrow B\in \mathcal{H}_N(P)
\end{equation}
which will prove the theorem. One direction is immediate, namely if $B\in \mathcal{H}_N(P)$, then by extending $B$ by zero outside of $P$ one easily verifies that this extension is divergence-free in the weak sense on $\mathbb{R}^3$. It is then standard that the Biot-Savart potential is of class $H^1_{\operatorname{loc}}\mathcal{V}(\mathbb{R}^3)$ and satisfies $\operatorname{curl}\left(\operatorname{BS}_P(B)\right)=\operatorname{curl}\left(\operatorname{BS}_{\mathbb{R}^3}(\chi_PB)\right)=\chi_PB$, \cite[Corollary 5.3.15]{G20Diss}, where $\chi_P$ denotes the characteristic function of $P$. So in particular, since $S\subset \mathbb{R}^3\setminus \overline{P}$, $\operatorname{curl}(\operatorname{BS}_P(B))=0$ on $S$ in this case and hence in terms of differential forms this is the same as the identity $d\omega^1_{\operatorname{BS}_P(B)}=0$ on $\overline{P}^c$ and so the closedness follows by means of the fact that pullbacks commute with exterior differentiation.

Now we suppose that $\iota^{\#}\omega^1_{\operatorname{BS}_P(B)}$ is closed. By means of the Hodge-decomposition theorem, \Cref{BTheorem1}, we can then decompose $B$ as $B=\Gamma+\operatorname{grad}(f)$ for some $f\in W^{1,2}(P)$ and $\Gamma\in \mathcal{H}_N(P)$, where we used that $B$ is curl-free. As pointed out $\iota^\#\omega^1_{\operatorname{BS}_P(\Gamma)}$ is closed. Hence $\iota^\#\omega^1_{\operatorname{BS}_P(B)}$ is closed if and only if $\iota^\#\omega^1_{\operatorname{BS}_P(\operatorname{grad}(f))}$ is closed due to the linearity of the Biot-Savart operator. Further, we have the following equivalence from standard differential geometric facts
\begin{gather}
	\nonumber
d\iota^\#\omega^1_{\operatorname{BS}_P(B)}=0\Leftrightarrow \iota^\#\left(\star \star d\omega^1_{\operatorname{BS}_P(B)}\right)=0\Leftrightarrow \star n\left( \star d\omega^1_{\operatorname{BS}_P(B)}\right)=t\left(\star \star d\omega^1_{\operatorname{BS}_P(B)}\right)=0
\\
\nonumber
\Leftrightarrow n\left( \star d\omega^1_{\operatorname{BS}_P(B)}\right)=0 \Leftrightarrow \operatorname{curl}\left(\operatorname{BS}_P(B)\right)\parallel S,
\end{gather}
where $\star$ denotes the Hodge star operator and we used the notions of the tangent and normal part of a differential form and the "dual" relation between these notions \cite[Chapter 1.2 equation 2.25 \& proposition 1.2.6]{S95}. Hence
\[
d\iota^\#\omega^1_{\operatorname{BS}_P(\operatorname{grad}(f))}=0\Leftrightarrow \operatorname{curl}\left(\operatorname{BS}_P(\operatorname{grad}(f))\right)\parallel S.
\]
Letting $h(x):=-\int_P\operatorname{grad}(f)(y)\cdot \frac{x-y}{|x-y|^3}d^3y$, a direct calculation yields
\[
4\pi\operatorname{curl}\left(\operatorname{BS}_P(\operatorname{grad}(f))\right)=\operatorname{grad}(h)\text{ on }\mathbb{R}^3\setminus \overline{P}.
\]
In particular $\Delta h=0$ on $\mathbb{R}^3\setminus \overline{P}$, so that in fact $h$ is analytic on $\mathbb{R}^3\setminus \overline{P}$. We now set $E:=-\operatorname{grad}(h)$ and observe that $\operatorname{div}(E)=0$ on $\mathbb{R}^3\setminus \overline{P}$ so that $\operatorname{div}(h E)=-|E|^2$ and consequently for all $R\geq r$ for any fixed $r>0$ with $\overline{\Omega}\subset B_r(0)$, setting $U_r:=B_r(0)\setminus \overline{\Omega}$, we find
\begin{gather}
	\nonumber
	\|E\|^2_{L^2(U_r)}\leq\|E\|^2_{L^2(U_R)} =-\int_{U_R}\operatorname{div}(hE)d^3x
	\\
	\nonumber
	=-\int_{\partial B_R(0)}h E\cdot \frac{x}{|x|}d\sigma(x)+\int_{S}h E\cdot \mathcal{N}d\sigma(x)=-\int_{\partial B_R(0)}h E\cdot \frac{x}{|x|}d\sigma(x),
\end{gather}
where $\mathcal{N}$ denotes the outward unit normal with respect to $\Omega$ on $S=\partial\Omega$ and we used that $E=-\operatorname{grad}(h)=-4\pi\operatorname{curl}\left(\operatorname{BS}_P(\operatorname{grad}(f))\right)\parallel S$.
We finally observe that for $|x|\gg 1$ we have $|h(x)|\leq \frac{c}{|x|^2}$ and $|E(x)|\leq \frac{c}{|x|^3}$ for some suitable constant $c>0$ (which depends on $\operatorname{grad}(f)$ but is independent of $R$ or $x$). Consequently, taking the limit $R\rightarrow\infty$, we conclude that $E=0$ on $U_r$ for every $r>0$ with $\overline{\Omega}\subset B_r(0)$. Hence $\operatorname{grad}(h)=-E=0$ on $\mathbb{R}^3\setminus \overline{\Omega}$ and by real analyticity $\operatorname{grad}(h)=0$ on $\mathbb{R}^3\setminus \overline{P}$ because $\partial P$ is connected and hence so is $\mathbb{R}^3\setminus \overline{P}$, \cite{Li88}. In conclusion $h=0$ (due to its behaviour as $|x|\rightarrow \infty$) on $\mathbb{R}^3\setminus \overline{P}$. It then follows from \cite[Theorem 9.9]{GT01}, since $\operatorname{grad}(f)\in L^2\mathcal{V}(P)$, that $h\in W^{1,2}_{\operatorname{loc}}(\mathbb{R}^3)$ because $h(x)=\partial_{x^i}\int_P\frac{\partial_if(y)}{|x-y|}d^3y=-4\pi\partial_{x^i}N(\partial_if)(x)$ where $N(\phi)$ denotes the Newton potential of a function $\phi$. This in particular implies that $h\in W^{1,2}(P)$ and that the trace of $h$ when viewed as a function on $P$ coincides with its trace when viewed as a function on $\mathbb{R}^3\setminus \overline{P}$ \cite[Theorem 3.44]{DD12} so that, because $h=0$ on $\mathbb{R}^3\setminus \overline{P}$, we infer $h\in W^{1,2}_0(P)$. We will see later in \cref{3L8} that we can express
\begin{equation}
	\label{3E5}
	h(x)=\int_{\partial P}f(y)\frac{y-x}{|y-x|^3}\cdot \mathcal{N}(y)d\sigma(y)-4\pi f(x)\text{ for all }x\in P.
\end{equation}
A direct calculation yields then that for all $x\in P$ we have
\[
\Delta_x \int_{\partial P}f(y)\frac{y-x}{|y-x|^3}\cdot \mathcal{N}(y)d\sigma(y)=0. 
\]
In addition, since $\operatorname{div}(B)=0$ in the weak sense, we have $\Delta f=0$ in $P$ and consequently $\Delta h=0$ in $P$. So in particular $h\in W^{1,2}_0(P)$ satisfies $\Delta h=0$ in the weak sense and the uniqueness of weak solutions to the Dirichlet-Laplace problem implies $h=0$ in $P$. Overall
\[
h=0 \text{ on }\mathbb{R}^3\setminus \partial P.
\]
Now we approximate $\operatorname{grad}(f)$ in $L^2$-norm by div-free vector fields $(B_n)_n\subset \bigcap_{1\leq s<\infty}W^{1,s}\mathcal{V}(P)$. Then by means of the Hardy-Littlewood-Sobolev inequality, \cite[Chapter V]{S70}, $h_n(x):=-\int_PB_n(y)\cdot \frac{x-y}{|x-y|^3}d^3y$ will converge to $h(x)$ in $L^6(\mathbb{R}^3)$. We then have for any fixed $\phi\in C^\infty_c(\mathbb{R}^3)$
\[
0=\int_{\mathbb{R}^3}(\Delta \phi) hd^3x=\lim_{n\rightarrow\infty}\int_{\mathbb{R}^3}(\Delta \phi) h_nd^3x.
\]
Due to the regularity of the $B_n$ we have $h_n(x)=-\int_{\partial P}\frac{B_n(y)\cdot \mathcal{N}(y)}{|x-y|}d\sigma(y)$ and  $\operatorname{grad}(h_n)(x)=\int_{\partial P}(B_n\cdot \mathcal{N})\frac{x-y}{|x-y|^3}d\sigma(y)$. Consequently
\[
\int_{\mathbb{R}^3}(\Delta \phi)h_nd^3x=-\int_{\mathbb{R}^3}\operatorname{grad}(\phi)\cdot \operatorname{grad}(h_n)d^3x=4\pi \int_{\partial P}(B_n\cdot \mathcal{N})\phi(y)d\sigma(y)=4\pi \int_P\operatorname{grad}(\phi)\cdot B_nd^3y,
\]
where we used Fubini's theorem in combination with the explicit expression for $\operatorname{grad}(h_n)$ and the facts that $\frac{x-y}{|x-y|^3}=-\nabla_x\frac{1}{|x-y|}$, $-\Delta_x \frac{1}{|x-y|}=4\pi \delta(x-y)$ where $\delta(x)$ denotes the Dirac delta and we used Gauss' formula to establish the last identity (recall that the $B_n$ are divergence-free). We therefore arrive at
\[
0=\lim_{n\rightarrow\infty}\int_P\operatorname{grad}(\phi)\cdot B_nd^3y=\int_P\operatorname{grad}(\phi)\cdot \operatorname{grad}(f)d^3y
\]
by the approximation property. We recall that $\phi\in C^\infty_c(\mathbb{R}^3)$ was arbitrary so that by a density argument the above identity remains to hold for all $\phi\in W^{1,2}(P)$. Letting $\phi=f$ yields $\operatorname{grad}(f)=0$ and in conclusion $B=\Gamma\in \mathcal{H}_N(P)$.
\end{proof}
To finish the proof we are left with justifying (\ref{3E5}).
\begin{lem}
	\label{3L8}
	Let $P\subset \mathbb{R}^3$ be a bounded $C^1$-domain, $1<p<\infty$, and $f\in W^{1,p}(P)$ be a weak solution of $\Delta f=0$ in $P$. Then
	\[
	-\int_P\operatorname{grad}(f)(y)\cdot \frac{x-y}{|x-y|^3}d^3y=\int_{\partial P}f(y)\frac{y-x}{|y-x|^3}\cdot \mathcal{N}(y)d\sigma(y)-4\pi \chi_P(x)f(x)\text{ for all }x\in \mathbb{R}^3\setminus \partial P.
	\] 
\end{lem}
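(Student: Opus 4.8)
The plan is to recognise the singular kernel as the gradient of the Newtonian potential and thereby reduce the identity to Green's first identity for $f$ paired against the fundamental solution of the Laplacian. Writing $g_x(y):=\frac{1}{|x-y|}$ one checks $\nabla_y g_x(y)=\frac{x-y}{|x-y|^3}$, so that the left-hand side becomes $-\int_P\operatorname{grad}(f)(y)\cdot\nabla_y g_x(y)d^3y$. Since $-\Delta_y g_x=4\pi\delta_x$ distributionally, the whole statement is essentially the assertion that integrating $\operatorname{grad}(f)$ against $\nabla g_x$ reproduces $f$ at interior points up to the correct boundary correction. I would first invoke interior elliptic regularity: a weak solution of $\Delta f=0$ is real-analytic in $P$, hence $f$ admits unambiguous pointwise values at every $x\in P$, which is exactly what is needed to make the term $4\pi\chi_P(x)f(x)$ meaningful. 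I would then split into the two cases $x\in\mathbb{R}^3\setminus\overline{P}$ and $x\in P$, corresponding to $\chi_P(x)=0$ and $\chi_P(x)=1$.

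In the exterior case $x\notin\overline{P}$ the kernel $g_x$ is smooth on $\overline{P}$ and harmonic there, so Green's first identity $\int_P\operatorname{grad}(f)\cdot\operatorname{grad}(g_x)d^3y=-\int_P f\,\Delta g_x\,d^3y+\int_{\partial P}f\,\partial_{\mathcal N}g_x\,d\sigma$ applies with $\Delta g_x\equiv 0$, giving $-\int_P\operatorname{grad}(f)\cdot\nabla g_x=-\int_{\partial P}f\,\mathcal N\cdot\frac{x-y}{|x-y|^3}d\sigma=\int_{\partial P}f\frac{y-x}{|y-x|^3}\cdot\mathcal N\,d\sigma$, which is the claim since $\chi_P(x)=0$. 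In the interior case $x\in P$ I would excise a small ball with $\overline{B_\epsilon(x)}\subset P$ and apply the same identity on $P_\epsilon:=P\setminus\overline{B_\epsilon(x)}$, where $g_x$ is again smooth and harmonic. The outer boundary contributes the desired $\partial P$-integral, while on the sphere $\partial B_\epsilon(x)$ the outward normal of $P_\epsilon$ points toward $x$, so that $\nu=\frac{x-y}{\epsilon}$ and a short computation gives $\partial_\nu g_x\equiv\epsilon^{-2}$; hence the sphere term equals $\epsilon^{-2}\int_{\partial B_\epsilon(x)}f\,d\sigma$, which converges to $4\pi f(x)$ as $\epsilon\to0$ by continuity of $f$ at $x$ together with $|\partial B_\epsilon(x)|=4\pi\epsilon^2$. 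Letting $\epsilon\to 0$, the interior integral $\int_{P_\epsilon}\operatorname{grad}(f)\cdot\nabla g_x$ converges to $\int_P\operatorname{grad}(f)\cdot\nabla g_x$ because $|\nabla g_x(y)|=|x-y|^{-2}$ is locally integrable in $\mathbb{R}^3$ and $\operatorname{grad}(f)$ is bounded near $x$, so the contribution of $B_\epsilon(x)$ vanishes in the limit; collecting terms and using $\chi_P(x)=1$ reproduces the stated formula.

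The technical point to watch is the validity of Green's first identity given that $f$ is only of class $W^{1,p}(P)$ near $\partial P$ (it is smooth in the interior by elliptic regularity, but merely Sobolev up to the boundary). I would justify it by approximating $f$ by functions in $C^\infty(\overline{P_\epsilon})$ in the $W^{1,p}$-norm, which is possible since $P$, and hence $P_\epsilon$, is a $C^1$-domain, and then passing to the limit: the volume terms converge because $\operatorname{grad}(g_x)$ and $g_x$ are bounded on $\overline{P_\epsilon}$ with $\Delta g_x\equiv 0$ there, while the boundary term converges by continuity of the trace operator $W^{1,p}(P_\epsilon)\to L^p(\partial P_\epsilon)$ combined with the smoothness of $\partial_\nu g_x$ on $\partial P$. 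The main obstacle is thus not conceptual but organisational: simultaneously controlling the singularity of $g_x$ at $y=x$ (handled by the ball excision and the local integrability of the kernel) and the limited boundary regularity of $f$ (handled by the density/trace argument), while noting that harmonicity of $f$ enters only through the well-definedness of the pointwise value $f(x)$ that produces the correction $4\pi\chi_P(x)f(x)$.
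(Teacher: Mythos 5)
Your proof is correct and follows essentially the same route as the paper's: the paper likewise excises a ball $B_{\epsilon}(x)$ in the interior case, integrates by parts on $P\setminus B_{\epsilon}(x)$ using $\operatorname{div}_y\bigl(\tfrac{x-y}{|x-y|^3}\bigr)=0$ (which is exactly your Green's identity for $g_x=\tfrac{1}{|x-\cdot|}$), and shows the sphere term converges to $4\pi f(x)$, with harmonicity of $f$ entering only through interior regularity (analyticity, hence boundedness of $\operatorname{grad}(f)$ near $x$ and continuity at $x$). Your explicit density/trace justification of the integration by parts for functions that are merely $W^{1,p}$ up to the boundary is a detail the paper leaves implicit, but otherwise the two arguments coincide.
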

\begin{proof}[Proof of \Cref{3L8}] We prove the statement for $x\in P$, the other situation ($x\in \mathbb{R}^3\setminus \overline{P}$) can be dealt with in the same manner but without the need to cut out an $\epsilon$-ball. Since $x\in P$, we have $B_{\epsilon}(x)\subset P$ for all small enough $\epsilon>0$. In addition, since $\Delta f=0$ in $P$ we know that $f$ is in fact analytic in $P$. We can then write
\[
\int_P\operatorname{grad}(f)(y)\cdot \frac{x-y}{|x-y|^3}d^3y=\int_{P\setminus B_{\epsilon}(x)}\operatorname{grad}(f)(y)\cdot \frac{x-y}{|x-y|^3}d^3y+\int_{B_{\epsilon}(x)}\operatorname{grad}(f)(y)\cdot \frac{x-y}{|x-y|^3}d^3y.
\]
Since $B_{\epsilon}(x)\subset P$ and $f$ is analytic in $P$ we can estimate $|\operatorname{grad}(f)(y)|\leq c(x)$ for all $y\in B_{\epsilon}(x)$ independent of $y$ for all small enough $\epsilon$ and where $c(x)>0$ is a constant which will depend on the point $x$. With this it is clear that
\[
\lim_{\epsilon\searrow 0}\int_{B_{\epsilon}(x)}\operatorname{grad}(f)(y)\cdot \frac{x-y}{|x-y|^3}d^3y=0.
\]
On the other hand, using an integration by parts and that $\operatorname{div}_y\left(\frac{x-y}{|x-y|^3}\right)=0$ for all $y\in P\setminus B_{\epsilon}(x)$ we obtain
\[
\int_{P\setminus B_{\epsilon}(x)}\operatorname{grad}(f)(y)\cdot \frac{x-y}{|x-y|^3}d^3y=\int_{\partial P}f(y)\frac{x-y}{|x-y|^3}\cdot \mathcal{N}(y)d\sigma(y)+\int_{\partial B_{\epsilon}(x)}f(y)\frac{x-y}{|x-y|^3}\cdot \frac{x-y}{|x-y|}d\sigma (y).
\]
Since $f$ is continuous in $P$ we then find
\[
\lim_{\epsilon\searrow 0}\int_{\partial B_{\epsilon}(x)}f(y)\frac{x-y}{|x-y|^3}\cdot \frac{x-y}{|x-y|}d\sigma (y)=4\pi f(x)
\]
which altogether concludes the proof. \end{proof}
Regarding the proof that $E=0$ on $\mathbb{R}^3\setminus \overline{\Omega}$ in the proof of \Cref{3T7} we refer the reader also to \cite[Proof of Theorem A]{CDG01} where a similar reasoning was used.

\Cref{3T7} enables us to prove some (non-)density results which include \Cref{T11}. Before we state the result let us introduce same nomenclature. If $\Omega\subset \mathbb{R}^3$ is a $C^1$-domain with $\overline{\Omega}\cong D^2\times S^1$, i.e. $\Omega$ is a solid torus, then we call a closed $C^1$-curve $\gamma$ contained in $\partial\Omega$ a poloidal curve if, under identification of said diffeomorphism, it is homotopic to the standard closed curve sweeping out $\partial D^2\times \{x\}$ for some fixed point $x\in S^1$. In terms of fundamental groups, if we fix any $x$ in the image of a poloidal curve $\gamma$, then we can identify $\pi_1(\overline{\Omega},x)\cong \mathbb{Z}$ and $\pi_1(\partial \Omega,x)\cong \mathbb{Z}\times \mathbb{Z}$ (where we think of the first factor to correspond to windings around $\partial D^2$), and $\gamma$ will then represent the trivial element in $\pi_1(\overline{\Omega},x)$ while it represents the element $(1,0)$ in $\pi_1(\partial\Omega,x)$. Further, given some subset $D\subset \mathbb{R}^3$ we say that $D$ is a $C^2$-disc if $D$ is a $2$-dimensional $C^2$-submanifold with $C^1$-boundary which is diffeomorphic to the unit disc of $\mathbb{R}^2$, i.e. $\overline{D}$ itself is a $C^1$-embedded submanifold with boundary and its (manifold) interior $D$ is $C^2$-embedded.
\begin{cor}
	\label{3C9}
	Let $P,\Omega\subset \mathbb{R}^3$ be bounded $C^{1,1}$-domains with $\overline{P}\subset \Omega$ and set $S:=\partial\Omega$.
	\begin{enumerate}
		\item If $\partial \Omega\cong S^2$ and $P$ is any $C^{1,1}$-domain with connected boundary. Then $\operatorname{Im}(\operatorname{BS}_S^P)$ is $L^2(P)$-dense in $L^2\mathcal{H}(P)$ if and only if $\partial P$ is diffeomorphic to a sphere.
		\item If $\overline{P}\cong D^3$, where $D^3$ denotes the closed unit ball in $\mathbb{R}^3$, then $\operatorname{Im}(\operatorname{BS}_S^P)$ is $L^2(P)$-dense in $L^2\mathcal{H}(P)$.
		\item In the following we assume that $\overline{\Omega}\cong D^2\times S^1\cong \overline{P}$ are solid tori.
		\begin{enumerate}
			\item If $\partial \Omega$ contains a poloidal $C^1$-curve $\gamma$ which bounds a $C^2$-disc $D\subset \Omega$ with $D\cap \overline{P}=\emptyset$, then $\operatorname{Im}(\operatorname{BS}_S^P)$ is not $L^2(P)$-dense in $L^2\mathcal{H}(P)$.
			\item If $\partial\Omega$ contains a poloidal $C^1$-curve $\gamma_1$ which bounds a $C^2$-disc $D\subset \Omega$ such that $D\cap P$ is yet again a $C^2$-disc which is bounded by a poloidal $C^1$-curve $\gamma_2$ contained in $\partial P$, then $\operatorname{Im}(\operatorname{BS}_S^P)$ is $L^2(P)$-dense in $L^2\mathcal{H}(P)$.
		\end{enumerate}
	\end{enumerate}
\end{cor}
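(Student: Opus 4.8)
The plan is to reduce every case to the single criterion furnished by \Cref{3T7}. Since a subspace is dense iff its annihilator is trivial, $\operatorname{Im}(\operatorname{BS}_S^P)$ is $L^2(P)$-dense in $L^2\mathcal{H}(P)$ if and only if $\mathcal{I}^{-1}\left((\operatorname{Im}(\operatorname{BS}_S^P))^\circ\right)=\{0\}$. By \Cref{3T7} this annihilator sits inside $\mathcal{H}_N(P)$, and by the equivalences (\ref{3E3})--(\ref{3E4}) a field $\Gamma\in\mathcal{H}_N(P)$ lies in it precisely when the (automatically closed) pulled-back form $\iota^\#\omega^1_{\operatorname{BS}_P(\Gamma)}$ is \emph{exact} on $S=\partial\Omega$. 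Thus density is equivalent to injectivity of the linear map $\mathcal{H}_N(P)\to H^1_{\operatorname{dR}}(S)$, $\Gamma\mapsto[\iota^\#\omega^1_{\operatorname{BS}_P(\Gamma)}]$. The computational engine throughout is that, extending $\Gamma$ by zero, $\operatorname{curl}(\operatorname{BS}_P(\Gamma))=\chi_P\Gamma$ (as in the proof of \Cref{3T7}), so that by Stokes' theorem the period of $\iota^\#\omega^1_{\operatorname{BS}_P(\Gamma)}$ along a cycle $c\subset S$ equals the flux $\int_\Sigma\chi_P\Gamma\cdot dA$ through any $2$-chain $\Sigma$ with $\partial\Sigma=c$; in particular such a period vanishes whenever $c$ bounds a surface avoiding $\overline{P}$.

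For (i) and (ii) the cohomological picture collapses. In (i) we have $\partial\Omega\cong S^2$, so $H^1_{\operatorname{dR}}(S)=0$ and every closed $1$-form on $S$ is exact; hence the annihilator equals all of $\mathcal{H}_N(P)$, and density holds iff $\mathcal{H}_N(P)=\{0\}$. Since $\partial P$ is connected, $\dim\mathcal{H}_N(P)=g(\partial P)$, which vanishes exactly when $\partial P\cong S^2$, giving the stated equivalence. In (ii), $\overline{P}\cong D^3$ forces $\mathcal{H}_N(P)=\{0\}$ directly, so \Cref{3T7} already yields a trivial annihilator and hence density, with no hypothesis on $\Omega$ needed.

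For (iii) we have $\dim\mathcal{H}_N(P)=1$; fix a generator $\Gamma$, the toroidal harmonic field of the solid torus $P$, and recall that its flux $\Phi:=\int_{D_m}\Gamma\cdot dA$ through a meridian (poloidal cross-sectional) disc $D_m$ of $P$ is independent of $D_m$ (as $\Gamma\parallel\partial P$ and $\operatorname{div}\Gamma=0$) and nonzero. Using the poloidal and toroidal cycles as a basis of $H_1(S)$, I compute the two periods of $\iota^\#\omega^1_{\operatorname{BS}_P(\Gamma)}$. In case (a) the poloidal period is $\int_D\chi_P\Gamma\cdot dA=0$ since the capping disc $D$ avoids $\overline{P}$; the toroidal period vanishes as well, because the core of $P$ lies in $\overline{P}\subset\Omega\setminus D$, which is simply connected (the solid torus cut along the meridian disc $D$), so the core bounds a surface contained in $\Omega$ and hence disjoint from the toroidal curve $\lambda\subset\partial\Omega$; therefore $\operatorname{lk}(\lambda,\operatorname{core}(P))=0$, i.e. $[\lambda]=0$ in $H_1(\mathbb{R}^3\setminus\overline{P})\cong\mathbb{Z}$, so $\lambda$ bounds a $2$-chain missing $\overline{P}$ and the corresponding flux is zero. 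Both periods vanishing makes the form exact, so $\Gamma$ lies in the annihilator and the image is \emph{not} dense. In case (b) the capping disc $D$ meets $P$ in the meridian disc $D\cap P$, so the poloidal period equals $\int_{D\cap P}\Gamma\cdot dA=\Phi\neq0$; the form is not exact, $\Gamma$ is not in the annihilator, and since $\mathcal{H}_N(P)=\mathbb{R}\Gamma$ the annihilator is trivial and the image is dense.

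The routine ingredients (the reduction to annihilators, the collapse in (i)--(ii)) are immediate from \Cref{3T7}; the genuine work is concentrated in (iii). The main obstacle is the rigorous evaluation of the two periods as linked fluxes at $C^{1,1}$-regularity: justifying $\operatorname{curl}(\operatorname{BS}_P(\Gamma))=\chi_P\Gamma$ and the application of Stokes' theorem to the merely $L^\infty$, compactly supported field $\chi_P\Gamma$, and---most delicately---the topological identity $\operatorname{lk}(\lambda,\operatorname{core}(P))=0$ underpinning the vanishing of the toroidal period in (a). One must also confirm that the meridian flux $\Phi$ is a nonzero functional on the one-dimensional space $\mathcal{H}_N(P)$; this follows from the de Rham/Hodge duality pairing $H^2(P,\partial P)\cong H^1(P)$, under which $\Gamma\mapsto\Phi$ is the nondegenerate period of the generator.
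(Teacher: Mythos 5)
Your strategy is the paper's strategy: reduce density to triviality of the annihilator via \Cref{3T7}, use the equivalences (\ref{3E3})--(\ref{3E4}) to turn membership in the annihilator into exactness of $\iota^\#\omega^1_{\operatorname{BS}_P(\Gamma)}$ on $S$, and decide exactness by computing periods over the poloidal and toroidal generators of $H_1(\partial\Omega)$. Cases (i) and (ii) are handled exactly as in the paper. In case (iii)(a) you deviate in one place: for the toroidal period the paper picks a representative of the $(0,1)$-class bounding a disc \emph{outside} $\Omega$, whereas you show $\operatorname{lk}(\lambda,\operatorname{core}(P))=0$ (because $\overline{P}$ lies in the simply connected set $\Omega\setminus D$) and conclude $[\lambda]=0$ in $H_1(\mathbb{R}^3\setminus\overline{P})$, so $\lambda$ bounds a $2$-chain avoiding $\overline{P}$. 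This variant is sound and in fact more robust than the paper's formulation, which implicitly assumes an unknotted embedding (for a knotted solid torus no curve on $\partial\Omega$ of class $(k,1)$ bounds a disc in the complement, and one would have to work with Seifert surfaces instead). Your appeal to Lefschetz/Hodge duality for $\Phi\neq 0$ is a legitimate substitute for the paper's citation of \cite[Corollary 3.4]{AS12}.

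There is, however, one genuine gap, and it sits precisely where the paper invests most of its effort. Your ``computational engine'' --- that the period of $\iota^\#\omega^1_{\operatorname{BS}_P(\Gamma)}$ along $c\subset S$ equals $\int_\Sigma\chi_P\Gamma\cdot\mathcal{N}\,d\sigma$ for \emph{any} $2$-chain $\Sigma$ with $\partial\Sigma=c$ --- is unproblematic only when $\Sigma$ avoids $\overline{P}$ (then the form is smooth and closed near $\Sigma$ and classical Stokes applies; this covers all of case (a) and the poloidal period computation there). But in case (iii)(b) the capping disc $D$ crosses $\partial P$, where $\operatorname{curl}(\operatorname{BS}_P(\Gamma))=\chi_P\Gamma$ jumps and $\operatorname{BS}_P(\Gamma)$ is merely continuous with $H^1_{\operatorname{loc}}$ regularity, so classical Stokes does not apply. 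You list this as ``the main obstacle'' but never resolve it; yet the identity $\int_{\gamma_1}\iota^\#\omega^1_{\operatorname{BS}_P(\Gamma)}=\int_{D\cap P}\Gamma\cdot\mathcal{N}\,d\sigma$ is exactly the nonzero period that yields density, so without it case (b) is unproved. The paper's proof of (b) is essentially devoted to this point: it constructs a flowout tubular neighbourhood $U_\epsilon$ of $\partial P$, applies Stokes on $D_1\setminus U_\epsilon$ (bounded by $\gamma_1$ and two curves $\gamma_3^\epsilon,\gamma_4^\epsilon$ approaching $\gamma_2$ with opposite orientations), and uses continuity of $\operatorname{BS}_P(B)$ to cancel the two $\epsilon$-integrals in the limit. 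To close your proposal you must either reproduce that limiting argument or argue by mollification: smooth $\operatorname{BS}_P(\Gamma)$ by convolution, apply classical Stokes on $D$, and pass to the limit using uniform convergence of the mollifications (continuity of $\operatorname{BS}_P(\Gamma)$), pointwise a.e.\ convergence of the mollified curls off the $2$-dimensionally null set $D\cap\partial P$, and dominated convergence. Either way, this step is a necessary piece of the proof, not a footnote.
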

For a depiction of the two situations described in (iii) see the following \Cref{3F2}.
\begin{figure}[H]
	\centering
	\begin{subfloat}
		\centering
		\includegraphics[width=0.25\textwidth, keepaspectratio]{./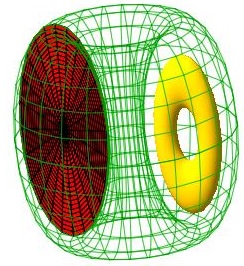}
	\end{subfloat}
	\hspace{2cm}
	\begin{subfloat}
		\centering
		\includegraphics[width=0.25\textwidth, keepaspectratio]{./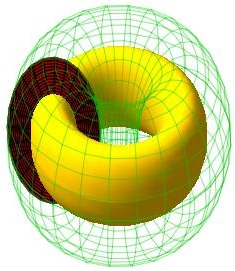}
	\end{subfloat}
	\caption{Left side case (a), right side case (b). The coil winding surface $S=\partial\Omega$ is depicted by the green grid. The plasma domain $P$ is depicted in yellow. The disc $D$ is depicted in red. In both cases the disc is bounded by a poloidal curve contained in $S$.}
	\label{3F2}
\end{figure}
\begin{proof}[Proof of \Cref{3C9}]
	\underline{(i):} If $\partial\Omega\cong S^2$, then any closed $1$-form on $\partial\Omega$ is exact. It then follows from (\ref{3E3}) and (\ref{3E4}) that the annihilator of $\operatorname{Im}(\operatorname{BS}_S^P)$ is zero if and only if $\mathcal{H}_N(P)=\{0\}$ which in turn is the case if and only if the genus of $\partial P$ is zero which is equivalent to $\partial P$ being a sphere.
	\newline
	\newline
	\underline{(ii):} If $\overline{P}\cong D^3$, then $\mathcal{H}_N(P)=\{0\}$ so that \Cref{3T7} tells us that the annihilator of $\operatorname{Im}(\operatorname{BS}_S^P)$ is zero and thus $\operatorname{Im}(\operatorname{BS}_S^P)$ is $L^2(P)$-dense in $L^2\mathcal{H}(P)$.
	\newline
	\newline
	\underline{(iii):} We recall that according to (\ref{3E3}) and (\ref{3E4}) it is enough to check whether for a fixed $B\in \mathcal{H}_N(P)\setminus \{0\}$ the closed $1$-form $\iota^\#\omega^1_{\operatorname{BS}_P(B)}$ is exact on $\partial \Omega$ or not. If it is exact for at least one such $B$, the image of $\operatorname{BS}_S^P$ will not be dense in $L^2\mathcal{H}(P)$ and if it is not exact for any such $B$, the annihilator will be zero and the density-property will follow.
	\newline
	\newline
	\underline{(a):} We observe first that a closed $1$-form on $\partial\Omega$ is exact if and only if it integrates to zero along a set of generators of the first fundamental group of $\partial \Omega$. We can first again identify $\pi_1(\partial \Omega,x)\cong \mathbb{Z}\times \mathbb{Z}$ for a fixed $x\in \partial \Omega$ where the first factor corresponds to windings around $\partial D^2$ (recall that $\partial \Omega\cong \partial D^2\times S^1$). It is then possible to find a representative $\widetilde{\gamma}$ of the element $(0,1)$ which bounds a disc $\widetilde{D}\cong D^2$ outside of $\Omega$. We recall that $\operatorname{BS}_P(B)$ is curl-free outside of $\overline{P}$ and thus $\omega^1_{\operatorname{BS}_P(B)}$ is closed on $\mathbb{R}^3\setminus \overline{P}$ and since $\widetilde{D}\subset \mathbb{R}^3\setminus \Omega \subset \mathbb{R}^3\setminus \overline{P}$ we conclude that $\iota_{\widetilde{D}}^\#\omega^1_{\operatorname{BS}_P(B)}$ is closed, where $\iota_{\widetilde{D}}:\widetilde{D}\rightarrow\mathbb{R}^3$ denotes the inclusion map. Then by Stokes' theorem we find
	\[
	\int_{\widetilde{\gamma}} \iota^\#\omega^1_{\operatorname{BS}_P(B)}=\int_{\widetilde{D}}d\iota_D^\#\omega^1_{\operatorname{BS}_P(B)}=0.
	\]
	On the other hand, by assumption, we can now find a poloidal curve $\gamma$ contained in $\partial\Omega$ which bounds a disc $D\cong D^2$ in $\Omega$ such that $D\cap \overline{P}=\emptyset$. Then we can argue as before that $\iota_{D}^\#\omega^1_{\operatorname{BS}_P(B)}$ is closed and thus Stokes' theorem yields $\int_{\gamma}\iota^\#\omega^1_{\operatorname{BS}_P(B)}=0$. In conclusion $\iota^\#\omega^1_{\operatorname{BS}_P(B)}$ is exact on $\partial\Omega$ and (\ref{3E3}) implies that the annihilator of $\operatorname{Im}(\operatorname{BS}_S^P)$ is non-zero, i.e. $\operatorname{Im}(\operatorname{BS}_S^P)$ is not dense in $L^2\mathcal{H}(P)$.
	\newline
	\newline
	\underline{(b):} By (\ref{3E3}) and (\ref{3E4}) it is enough to show that $\iota^\#\omega^1_{\operatorname{BS}_P(B)}$ is not exact on $\partial\Omega$. This will follow once we show that there exists a closed $C^1$-loop $\gamma$ on $\partial\Omega$ with $\int_{\gamma}\iota^\#\omega^1_{\operatorname{BS}_P(B)}\neq 0$. To this end we consider a poloidal curve $\gamma_1$ on $\partial\Omega$ which bounds a disc $D_1$ such that $D_1\cap \overline{P}$ is again a disc denoted by $D_2$ which is bounded by a poloidal curve $\gamma_2$ contained in $\partial P$. We recall that $\operatorname{BS}_P(B)$ is smooth away from $\partial P$ and satisfies $\operatorname{curl}(\operatorname{BS}_P(B))=\chi_PB$ where $\chi_P$ denotes the characteristic function of $P$. We can then consider a small tubular neighbourhood $U_{\epsilon}$ around $\partial P$ which has two boundary components $C^1_{\epsilon}\subset \overline{P}^c$ and $C^2_{\epsilon}\subset P$. We then have
	\[
	\int_{D_2\setminus U_{\epsilon}}B\cdot \mathcal{N}d\sigma(x)=\int_{D_2\setminus U_{\epsilon}}\operatorname{curl}(\operatorname{BS}_P(B))\cdot \mathcal{N}d\sigma(x)=\int_{D_1\setminus U_{\epsilon}}\operatorname{curl}(\operatorname{BS}_P(B))\cdot \mathcal{N}d\sigma(x)
	\]
	where we used that $\operatorname{curl}(\operatorname{BS}_P(B))=0$ outside of $P$. We observe first that $B$ is continuous up to the boundary by Sobolev embeddings since $\mathcal{H}_N(P)\subset W^{1,p}\mathcal{V}(P)$ for all $1\leq p<\infty$. Hence the left hand side, upon taking the limit $\epsilon\searrow 0$ converges to $\int_{D_2}B\cdot \mathcal{N}d\sigma(x)$ and so
	\[
	\int_{D_2}B\cdot \mathcal{N}d\sigma(x)=\lim_{\epsilon\searrow 0}\int_{D_1\setminus U_{\epsilon}}\operatorname{curl}(\operatorname{BS}_P(B))\cdot \mathcal{N}d\sigma(x).
	\]
	We may assume that $D_1\setminus U_{\epsilon}$ is bounded by $3$ closed curves, the poloidal curve $\gamma_1$ and two more curves $\gamma^3_{\epsilon}\subset C^1_{\epsilon}$ and $\gamma^4_{\epsilon}\subset C^2_{\epsilon}$. Then applying Stokes' theorem yields
	\[
	\int_{D_1\setminus U_{\epsilon}}\operatorname{curl}(\operatorname{BS}_P(B))\cdot \mathcal{N}d\sigma(x)=\int_{\gamma_1}\iota^\#\omega^1_{\operatorname{BS}_P(B)}+\int_{\gamma_3^\epsilon}\operatorname{BS}_P(B)+\int_{\gamma_4^\epsilon}\operatorname{BS}_P(B).
	\]
	As $\epsilon\searrow 0$ the curves $\gamma_3^\epsilon$, $\gamma_4^\epsilon$ will approach the curve $\gamma_2$ but due to the orientation conditions in Stokes' theorem the limiting curve of $\gamma_3^\epsilon$ and $\gamma_4^\epsilon$ will be oriented in the opposite way. Then due to the continuity of $\operatorname{BS}_P(B)$ these path integrals cancel as $\epsilon\searrow 0$ and therefore we obtain
	\[
	\int_{\gamma_1}\iota^\#\omega^1_{\operatorname{BS}_P(B)}=\int_{D_2}B\cdot \mathcal{N}d\sigma(x).
	\]
	To be more rigorous one may start with a $C^1$-vector field on $D_1$ which is everywhere tangent to $D_1$ and outward pointing along $\gamma_2$ and extend it to a $C^1$-vector field $X$ on $\Omega$ in such a way that $X$ is everywhere outward pointing on $\partial P$ and compactly supported within a small neighbourhood around $\partial P$. Letting $\Psi_t$ denote the global flow of $X$ we can then consider the tubular neighbourhood $U_{\epsilon}$ given by the flowout $\Psi:\partial P\times (-\epsilon,\epsilon)\rightarrow U_{\epsilon}$, $(x,t)\mapsto \Psi_t(x)$ which defines for small enough $0<\epsilon$ a $C^1$-diffeomorphism onto some open neighbourhood $U_{\epsilon}$ of $\partial P$. It is clear from construction that $D_1\setminus U_{\epsilon}$ is bounded by the curves $\gamma_1$, $\Psi_{\epsilon}\circ \gamma_2$ and $\Psi_{-\epsilon}\circ \gamma_2$. From here one can easily justify the above argument rigorously. 
	
	It then follows from \cite[Corollary 3.4]{AS12} and the fact that $D_2$ is bounded by a poloidal curve that $\int_{D_2}B\cdot \mathcal{N}d\sigma(x)= 0$ implies $B=0$ and hence the annihilator consists solely of the zero functional, i.e. $\operatorname{Im}(\operatorname{BS}_S^P)$ is $L^2(P)$-dense in $L^2\mathcal{H}(P)$. 
\end{proof}
\subsection{Proof of \Cref{C12}}
In this subsection we introduce a class of "nice" harmonic fields within which the image of the Biot-Savart operator is $C^k$-dense for every $k\in \mathbb{N}$, drastically improving the $L^2$-density result, \Cref{3C9}.

Essentially, we consider those harmonic fields, which admit a harmonic extension to a whole neighbourhood of the domain $P$. Note that if $P\Subset U\Subset \Omega$, where $P\Subset U$ means $\overline{P}\subset U$, for bounded $C^{1,1}$-domains $P,U,\Omega\subset \mathbb{R}^3$, then every element of $L^2\mathcal{H}(U)=\{B\in L^2\mathcal{V}(U)|\operatorname{curl}(B)=0=\operatorname{div}(B)\}$ satisfies $B\in C^\infty\mathcal{V}(\overline{P})$ by elliptic regularity and hence $B|_P$ has finite $C^k$-norm for every $k\in \mathbb{N}$. Similarly, $\operatorname{Im}(\operatorname{BS}_{\partial\Omega}^P)\subset C^\infty\mathcal{V}(\overline{P})$ and so any element in the image of the Biot-Savart operator has a finite $C^k$-norm on $P$.
\newline
With this in mind, we formulate the following result
\begin{cor}
	\label{3C10}
	Let $P\Subset U \Subset \Omega\subset \mathbb{R}^3$ be bounded $C^{1,1}$-domains with $\overline{P}\cong D^2\times S^1\cong \overline{\Omega}$ and let $S:=\partial\Omega$. If $\Omega$ admits a poloidal closed $C^1$-curve $\gamma$ contained in $S$ which bounds a $C^2$-disc $D$ in $\Omega$ such that $D\cap P$ is yet again a $C^2$-disc bounded by a poloidal closed $C^1$-curve $\widetilde{\gamma}$ contained in $\partial P$, then for every $B_T\in L^2\mathcal{H}(U)$, $k\in \mathbb{N}_0$ and every $\epsilon>0$, there exists some $j\in L^2\mathcal{V}_0(S)$, such that
	\[
	\|\operatorname{BS}_S^P(j)-B_T\|_{C^k(P)}\leq \epsilon.
	\]
\end{cor}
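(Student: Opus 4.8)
The plan is to upgrade the $L^2$-density statement in case (iii)(b) of \Cref{3C9} to a $C^k$-statement by inserting an auxiliary buffer domain and invoking interior elliptic estimates. The crucial observation is that for any $j\in L^2\mathcal{V}_0(S)$ and any $B_T\in L^2\mathcal{H}(U)$ the difference $w:=\operatorname{BS}_S(j)-B_T$ is both curl- and divergence-free on $U$: the Biot-Savart field $\operatorname{BS}_S(j)$ is harmonic on all of $\Omega\supset U$ since the current is supported on $S=\partial\Omega$, while $B_T$ is curl- and divergence-free on $U$ by hypothesis. Writing $w$ in Cartesian components, the vector identity $-\Delta w=\operatorname{curl}\operatorname{curl}(w)-\operatorname{grad}\operatorname{div}(w)=0$ shows that each component of $w$ is a scalar harmonic function on $U$. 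Consequently the standard interior estimates for harmonic functions (equivalently, Cauchy estimates combined with the mean value property) yield, for any bounded open set $W$ on which $w$ is componentwise harmonic, any compact $K\Subset W$ and any $k\in\mathbb{N}_0$, a constant $C=C(k,K,W)>0$ that is independent of $w$ with
\[
\|w\|_{C^k(K)}\leq C\|w\|_{L^2(W)}.
\]
Thus it suffices to produce a current $j$ making $\|\operatorname{BS}_S(j)-B_T\|_{L^2}$ small on a fixed domain containing a neighbourhood of $\overline{P}$.

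To that end I would first fix an auxiliary $C^{1,1}$-domain $V$ with $\overline{P}\subset V$ and $\overline{V}\subset U$ which is again a solid torus, $\overline{V}\cong D^2\times S^1$, and which still satisfies the disc hypothesis of case (iii)(b) of \Cref{3C9} relative to $\Omega$: there should be a poloidal $C^1$-curve on $S$ bounding a $C^2$-disc $D\subset\Omega$ such that $D\cap V$ is again a $C^2$-disc bounded by a poloidal curve in $\partial V$. Such a $V$ can be obtained by thickening $P$ slightly inside $U$, for instance by flowing $\partial P$ outward along a short collar, chosen small enough that $\overline{V}\subset U$, that the diffeomorphism type of the solid torus is preserved, and that near the curve $\widetilde{\gamma}$ the enlarged boundary $\partial V$ still meets $D$ transversally in a single poloidal curve. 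I expect this geometric construction to be the main obstacle: one must verify that the enlargement can be arranged so that $D\cap V$ remains a single embedded poloidal disc and that $V$ is genuinely a $C^{1,1}$ solid torus, rather than acquiring extra topology or destroying the disc structure required by \Cref{3C9}.

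With such a $V$ in hand the remaining steps are routine. Since $B_T$ is harmonic on $U\supset\overline{V}$ we have $B_T|_V\in L^2\mathcal{H}(V)$, and because $V$ satisfies the hypothesis of case (iii)(b) of \Cref{3C9} the image of $\operatorname{BS}_S^V$ is $L^2(V)$-dense in $L^2\mathcal{H}(V)$. Hence, given $\epsilon>0$ and $k\in\mathbb{N}_0$, and with the constant $C=C(k,\overline{P},V)$ from the interior estimate applied on $W=V$ with $K=\overline{P}$ (legitimate since $\overline{P}\Subset V$ and $w=\operatorname{BS}_S(j)-B_T$ is componentwise harmonic on $V$), one picks $j\in L^2\mathcal{V}_0(S)$ with $\|\operatorname{BS}_S^V(j)-B_T\|_{L^2(V)}\leq \epsilon/C$. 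The interior estimate then gives $\|\operatorname{BS}_S^P(j)-B_T\|_{C^k(P)}\leq C\,\|\operatorname{BS}_S(j)-B_T\|_{L^2(V)}\leq\epsilon$, which is the desired conclusion; here the finiteness of the left-hand side uses the elliptic regularity remark preceding the statement, which guarantees $B_T|_{\overline{P}}\in C^\infty\mathcal{V}(\overline{P})$.
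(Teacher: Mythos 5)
Your proposal is correct and follows essentially the same route as the paper: enlarge $P$ to a slightly larger $C^{1,1}$ solid torus $P_t\Subset U$ by flowing $\partial P$ outward along a compactly supported vector field, apply the $L^2$-density from case (iii)(b) of \Cref{3C9} on the enlarged domain, and convert $L^2$-closeness there into $C^k$-closeness on $\overline{P}$ via interior elliptic estimates for the (componentwise harmonic) difference field. The one obstacle you flag, namely that the disc condition survives the enlargement, is resolved in the paper by transporting the disc with the \emph{same} flow rather than keeping $D$ fixed: since $\psi_t$ is the identity away from $\partial P$, $\psi_t(D)$ is still a $C^2$-disc in $\Omega$ bounded by the original poloidal curve $\gamma\subset S$, and $\psi_t(D)\cap P_t=\psi_t(D\cap P)$ is automatically a disc bounded by the poloidal curve $\psi_t\circ\widetilde{\gamma}\subset \partial P_t$, so no transversality of the new boundary with the fixed disc $D$ ever needs to be checked.
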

\begin{proof}[Proof of \Cref{3C10}]
	The main idea is to exploit elliptic estimates together with Sobolev regularity results. We can fix any $C^\infty$-smooth vector field $X$ on $\mathbb{R}^3$ which is everywhere outward pointing along $\partial P$. Multiply this vector field by a bump function so that wlog $X$ is identically zero outside a small neighbourhood of $\partial P$. We can then consider for small $t>0$, $P_t:=\psi_t(P)$, where $\psi_t$ is the flow of $X$. We observe that $\overline{P}\subset P_t$ for all $t>0$ and that if $D$ is the disc within $\Omega$ bounded by a poloidal curve $\gamma$ such that $D\cap P$ is bounded by a poloidal curve $\tilde{\gamma}$ in $\partial P$, then $\psi_t(D)$ will be a disc within $\Omega$ (since $\psi_t(\Omega)=\Omega$) which remains bounded by the same poloidal curve $\gamma$ (because $\psi_t$ is the identity outside a small neighbourhood of $\partial P$) and we have $\psi_t(D)\cap P_t=\psi_t(D\cap P)$ which will be a disc in $P_t$ which is bounded by the curve $\psi_t\circ \widetilde{\gamma}$ which remains a poloidal curve for small $t$ and is contained in $\partial P_t$.
	
We now observe that any $B\in L^2\mathcal{H}(U)$ is in fact, by elliptic estimates, analytic in $U$. Further it satisfies $\Delta B=0$ in the classical sense on $U$. Then, since $\overline{P}\subset P_t$, we can exploit elliptic estimates (keeping in mind that $\Delta B=0$), \cite[Chapter 6.3 Theorem 2]{Evans10},\cite[Theorem 9.11]{GT01} and Morrey's inequality \cite[Chapter 5.6 Theorem 6]{Evans10} to conclude
\begin{gather}
	\nonumber
	\|B\|_{C^k(P)}\leq c_1\|B\|_{W^{{k+2},2}(P)}\leq c_2\|B\|_{L^2\left(P_{t}\right)},
\end{gather}
where the constants $c_i>0$ depend on $k$,$P$ and the fixed small value $t$. We recall that $\operatorname{BS}_S^P=\operatorname{BS}_S^U|_P$ and that $\operatorname{Im}(\operatorname{BS}_S^U)\subset L^2\mathcal{H}(U)$ so that we obtain
\[
\|\operatorname{BS}^P_S(j)-B_T\|_{C^k(P)}\leq c_2\|\operatorname{BS}^{P_t}_S(j)-B_T\|_{L^2(P_t)}\text{ for all }j\in L^2\mathcal{V}_0(S).
\]
We have however already verified that $P_t$ satisfies the conditions of \Cref{3C9} so that we know that $\operatorname{Im}(\operatorname{BS}^{P_t}_S)$ is $L^{2}(P_t)$-dense in $L^{2}\mathcal{H}(P_t)$. But obviously $B_T|_{P_t}\in L^{2}\mathcal{H}(P_t)$ and so there exists some $j_{\epsilon}\in L^2\mathcal{V}_0(S)$ with $\|\operatorname{BS}_S^{P_t}(j_{\epsilon})-B_T\|_{L^{2}(P_t)}\leq \frac{\epsilon}{c_2}$ which concludes the proof.
\end{proof}
\section{Relation to optimisation problems in plasma physics}
In the plasma physics literature it is customary to use Tikhonov-type regularisation methods to make ill-posed problems accessible, c.f. \cite{L17}, \cite{PRS22}. More precisely, given some target field $B_T\in L^2\mathcal{H}(P)$, where $\overline{P}\subset \Omega\subset \mathbb{R}^3$ is our plasma domain one would like to find a current $j\in L^2\mathcal{V}_0(S)$, $S=\partial\Omega$, which minimises the quantity $\|\operatorname{BS}_S(j)-B_T\|_{L^2(P)}$. However, in order to guarantee the existence of a minimising current one introduces a regularising parameter $\lambda>0$ and can study the following regularised minimisation problem.
\begin{gather}
	\label{4E1}
	C(\lambda;B_T):=\inf_{j\in L^2\mathcal{V}_0(S)}\left(\|\operatorname{BS}_S(j)-B_T\|^2_{L^2(P)}+\lambda\|j\|^2_{L^2(S)}\right).
\end{gather}
It follows then from standard variational techniques and convexity of the functional that a unique minimiser always exists.

In this context the question of the behaviour of $C(\lambda;B_T)$ arises as $\lambda\searrow 0$. At the very least one would like to understand if this quantity approaches zero or not, which has close connections to questions regarding the approximation properties of the Biot-Savart operator.
Let us make this statement precise.
\begin{prop}
	\label{4P1}
	Let $P,\Omega\subset \mathbb{R}^3$ be bounded $C^{1,1}$-domains, $S:=\partial\Omega$, with $\overline{P}\subset \Omega$. Given some $B_T\in L^2\mathcal{H}(P)$ the following two statements are equivalent
	\begin{enumerate}
		\item $\lim_{\lambda \searrow 0}C(\lambda;B_T)=0$.
		\item There exists a sequence $(j_n)_n\subset L^2\mathcal{V}_0(S)$ with $\|\operatorname{BS}_S(j_n)-B_T\|_{L^2(P)}\rightarrow 0$ as $n\rightarrow\infty$.
	\end{enumerate}
	In particular, we have $\lim_{\lambda \searrow 0}C(\lambda;B_T)=0$ for every $B_T\in L^2\mathcal{H}(P)$ if and only if $\operatorname{Im}(\operatorname{BS}_S^P)$ is $L^2(P)$-dense in $L^2\mathcal{H}(P)$.
\end{prop}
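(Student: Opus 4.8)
The plan is to prove the equivalence by exploiting two elementary features of the functional in (\ref{4E1}): its monotonicity in $\lambda$ and the fact that the regularising term $\lambda\|j\|^2_{L^2(S)}$ can be driven to zero along any fixed test current once $\lambda\searrow 0$. First I would record that for $0<\lambda_1\leq \lambda_2$ one has $C(\lambda_1;B_T)\leq C(\lambda_2;B_T)$, since for every admissible $j$ the quantity $\|\operatorname{BS}_S(j)-B_T\|^2_{L^2(P)}+\lambda\|j\|^2_{L^2(S)}$ is non-decreasing in $\lambda$, and taking the infimum preserves this ordering. Consequently $C(\cdot;B_T)$ is non-negative and monotone, so the limit $\lim_{\lambda\searrow 0}C(\lambda;B_T)=\inf_{\lambda>0}C(\lambda;B_T)$ exists without further argument.

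For the implication (ii) $\Rightarrow$ (i), fix a sequence $(j_n)_n$ as in (ii). For each fixed $n$, testing the infimum in (\ref{4E1}) with $j=j_n$ gives $C(\lambda;B_T)\leq \|\operatorname{BS}_S(j_n)-B_T\|^2_{L^2(P)}+\lambda\|j_n\|^2_{L^2(S)}$ for every $\lambda>0$. Letting $\lambda\searrow 0$ yields $\lim_{\lambda\searrow 0}C(\lambda;B_T)\leq \|\operatorname{BS}_S(j_n)-B_T\|^2_{L^2(P)}$, and since $n$ was arbitrary, sending $n\to\infty$ and invoking the assumption in (ii) forces $\lim_{\lambda\searrow 0}C(\lambda;B_T)=0$.

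For the converse (i) $\Rightarrow$ (ii), I would choose any sequence $\lambda_n\searrow 0$ and, for each $n$, pick a near-minimiser $j_n\in L^2\mathcal{V}_0(S)$ (the unique minimiser exists by the variational argument already noted, but an almost-optimal current suffices) with $\|\operatorname{BS}_S(j_n)-B_T\|^2_{L^2(P)}+\lambda_n\|j_n\|^2_{L^2(S)}\leq C(\lambda_n;B_T)+\lambda_n$. Dropping the non-negative regularising term gives $\|\operatorname{BS}_S(j_n)-B_T\|^2_{L^2(P)}\leq C(\lambda_n;B_T)+\lambda_n$, and both terms on the right tend to $0$ by (i), which is exactly statement (ii).

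Finally, the \emph{in particular} claim is immediate from the equivalence: since $\operatorname{BS}_S(j)|_P=\operatorname{BS}_S^P(j)\in \operatorname{Im}(\operatorname{BS}_S^P)\subset L^2\mathcal{H}(P)$, statement (ii) for a given $B_T$ says precisely that $B_T$ lies in the $L^2(P)$-closure of $\operatorname{Im}(\operatorname{BS}_S^P)$, and requiring this for every $B_T\in L^2\mathcal{H}(P)$ is by definition the $L^2(P)$-density of $\operatorname{Im}(\operatorname{BS}_S^P)$ in $L^2\mathcal{H}(P)$. I do not anticipate a genuine obstacle here; the only points needing care are justifying that the limit in $\lambda$ exists (handled by monotonicity) and being consistent about whether one argues with exact minimisers or near-minimisers — working with near-minimisers avoids invoking the existence theory altogether.
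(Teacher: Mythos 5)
Your proof is correct, and while the direction (ii)$\Rightarrow$(i) coincides with the paper's argument (test the infimum in (\ref{4E1}) with a current fixed independently of $\lambda$ and let $\lambda\searrow 0$; your monotonicity remark merely replaces the paper's $\liminf$/$\limsup$ bookkeeping), the direction (i)$\Rightarrow$(ii) is genuinely different. The paper proves the contrapositive via Hilbert-space geometry: if $B_T\notin\operatorname{clos}\left(\operatorname{Im}(\operatorname{BS}_S^P)\right)$, it invokes the decomposition $L^2\mathcal{H}(P)=\operatorname{clos}\left(\operatorname{Im}(\operatorname{BS}_S^P)\right)\oplus\left(\operatorname{Im}(\operatorname{BS}_S^P)\right)^\perp$, writes $B_T=B_1+B_2$ with $B_2\neq 0$, and obtains the uniform lower bound $C(\lambda;B_T)\geq\|B_2\|^2_{L^2(P)}>0$ for all $\lambda>0$. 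You instead argue directly: choose $\lambda_n\searrow 0$, pick near-minimisers $j_n$ up to tolerance $\lambda_n$, and drop the non-negative penalty term to get $\|\operatorname{BS}_S(j_n)-B_T\|^2_{L^2(P)}\leq C(\lambda_n;B_T)+\lambda_n\rightarrow 0$. Both are sound. Your route is more elementary and more general: it never uses the orthogonal projection theorem (so it would work verbatim with $L^2\mathcal{H}(P)$ replaced by any normed space), and it is constructive, exhibiting the approximating sequence explicitly in the same spirit as \Cref{4C3} — indeed your near-minimiser argument essentially contains the proof of that corollary without needing existence of exact minimisers. What the paper's route buys is a quantitative statement: it identifies the obstruction explicitly, showing that when density fails the cost $C(\lambda;B_T)$ is bounded below, uniformly in $\lambda$, by the squared distance from $B_T$ to the closure of the image, rather than merely showing that its limit is non-zero.
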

\begin{proof}[Proof of \Cref{4P1}]
	$\quad$
	\newline
	\newline
	\underline{(i)$\Rightarrow$ (ii):} We assume for a contradiction that there does not exist a sequence $(j_n)_n$ for which $\operatorname{BS}_S^P(j_n)$ converges to $B_T$ in $L^2(P)$. Then the $L^2(P)$-orthogonal complement of $\operatorname{Im}(\operatorname{BS}_S^P)$, recall that $L^2\mathcal{H}(P)$ is a Hilbert space, is non-empty and we have the direct sum-decomposition $L^2\mathcal{H}(P)=\operatorname{clos}\left(\operatorname{Im}(\operatorname{BS}_S^P)\right)\oplus \left(\operatorname{Im}(\operatorname{BS}_S^P)\right)^\perp$. In particular, we can express $B_T=B_1+B_2$ for an element $B_1\in \operatorname{clos}\left(\operatorname{Im}(\operatorname{BS}_S^P)\right)$ and $B_2\in \left(\operatorname{Im}(\operatorname{BS}_S^P)\right)^\perp$. Since we assume that $B_T\notin \operatorname{clos}\left(\operatorname{Im}(\operatorname{BS}_S^P)\right)$ we must have $B_2\neq 0$. Consequently
	\begin{gather}
		\nonumber
		\|\operatorname{BS}_S(j)-B_T\|^2_{L^2(P)}+\lambda \|j\|^2_{L^2(S)}\geq \|B_2\|^2_{L^2(P)}\text{ for all }\lambda>0\text{ and }j\in L^2\mathcal{V}_0(S)
	\end{gather}
	by $L^2(P)$-orthogonality of the decomposition. Thus $C(\lambda;B_T)\geq \|B_2\|^2_{L^2(P)}>0$ which contradicts $\lim_{\lambda\searrow 0}C(\lambda;B_T)=0$.
	\newline
	\newline
	\underline{(ii)$\Rightarrow$(i):} Given any $\epsilon>0$, fix some $j_{\epsilon}\in L^2\mathcal{V}_0(S)$ with $\|\operatorname{BS}_S^P(j_{\epsilon})-B_T\|^2_{L^2(P)}\leq \epsilon$. Then for any fixed $\lambda>0$ we have
	\[
	C(\lambda;B_T)\leq\|\operatorname{BS}_S(j_{\epsilon})-B_T\|^2_{L^2(P)}+\lambda\|j_{\epsilon}\|^2_{L^2(S)}\leq \epsilon+\lambda \|j_{\epsilon}\|^2_{L^2(S)}.
	\]
	We note that $j_{\epsilon}$ is independent of $\lambda$ and so for the fixed $\epsilon>0$ we obtain
	\[
	0\leq \liminf_{\lambda \searrow 0}C(\lambda;B_T)\leq \limsup_{\lambda\searrow 0}C(\lambda;B_T)\leq \epsilon.
	\]
	Since $\epsilon>0$ was arbitrary, we get $\lim_{\lambda\searrow 0}C(\lambda;B_T)=0$ as desired.
\end{proof}
The following is then a direct consequence of \Cref{3C10} and \Cref{4P1}, where we recall that we write $P\Subset \Omega$ to mean $\overline{P}\subset \Omega$.
\begin{cor}
	\label{4C2}
	Let $P\Subset \Omega\subset \mathbb{R}^3$ be bounded $C^{1,1}$-domains with $\overline{P}\cong D^2\times S^1\cong \overline{\Omega}$. If $\Omega$ admits a poloidal closed $C^1$-curve $\gamma$ contained in $\partial\Omega$ which bounds a $C^2$-disc $D$ in $\Omega$ such that $D\cap P$ is yet again a $C^2$-disc bounded by a poloidal closed $C^1$-curve $\tilde{\gamma}$ contained in $\partial P$, then for every $B_T\in L^2\mathcal{H}(P)$ we have
	\[
	\lim_{\lambda\searrow 0}C(\lambda;B_T)=0.
	\]
\end{cor}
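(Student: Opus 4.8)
The plan is to bypass any direct analysis of the regularised functional in (\ref{4E1}) and instead read off the claim from a purely qualitative density statement. The crucial leverage is the ``in particular'' part of \Cref{4P1}, which asserts that $\lim_{\lambda\searrow0}C(\lambda;B_T)=0$ holds for \emph{every} $B_T\in L^2\mathcal{H}(P)$ if and only if $\operatorname{Im}(\operatorname{BS}_S^P)$ is $L^2(P)$-dense in $L^2\mathcal{H}(P)$. Thus the entire corollary collapses to the single assertion that the image of the Biot--Savart operator is $L^2(P)$-dense, and no fresh variational argument is needed: once density is in hand, the equivalence in \Cref{4P1} delivers the limit statement for all target fields at once.

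First I would verify that the geometric hypotheses imposed here are literally those of \Cref{3C9}, case (iii)(b): we are handed solid tori $\overline{P}\cong D^2\times S^1\cong\overline{\Omega}$ together with a poloidal $C^1$-curve $\gamma\subset\partial\Omega$ bounding a $C^2$-disc $D\subset\Omega$ whose intersection $D\cap P$ is again a $C^2$-disc bounded by a poloidal curve $\tilde{\gamma}\subset\partial P$. Under exactly this configuration \Cref{3C9}, case (iii)(b), already establishes that $\operatorname{Im}(\operatorname{BS}_S^P)$ is $L^2(P)$-dense in $L^2\mathcal{H}(P)$, the proof of which rests on the annihilator computation of \Cref{3T7} and the Stokes-type flux identity $\int_{\gamma_1}\iota^\#\omega^1_{\operatorname{BS}_P(B)}=\int_{D_2}B\cdot\mathcal{N}\,d\sigma$. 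Feeding this density into \Cref{4P1} immediately yields $\lim_{\lambda\searrow0}C(\lambda;B_T)=0$ for all $B_T\in L^2\mathcal{H}(P)$, which is the claim.

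The one point that genuinely deserves care — and really the only ``obstacle'' here — is the universal quantifier over $B_T$, which dictates \emph{which} density result one is allowed to invoke. The sharper approximation statement \Cref{3C10} controls only those targets $B_T\in L^2\mathcal{H}(U)$ admitting a harmonic extension to a neighbourhood $U\Supset\overline{P}$, and for such fields it even gives $C^k$-approximation; but this is a strict subclass of $L^2\mathcal{H}(P)$, since a general element of $L^2\mathcal{H}(P)$ need not extend harmonically past $\partial P$. To reach the conclusion for \emph{all} $B_T\in L^2\mathcal{H}(P)$ one must therefore rely on the full $L^2$-density of \Cref{3C9}, case (iii)(b), rather than on the stronger-but-narrower \Cref{3C10}. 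Once the correct density statement is selected, no further analytic work remains: the substantive content lives entirely in \Cref{3T7}, \Cref{3C9} and \Cref{4P1}, and the present step is pure assembly.
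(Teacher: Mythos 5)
Your proposal is correct and follows essentially the same route as the paper: the paper likewise disposes of \Cref{4C2} in one line by combining a density statement with the equivalence of \Cref{4P1}, exactly as you do. In fact your choice of density input is the more precise one — the paper cites \Cref{3C10} (together with \Cref{4P1}), which, as you rightly observe, only covers targets admitting a harmonic extension to a neighbourhood $U\Supset\overline{P}$, whereas the claim for \emph{all} $B_T\in L^2\mathcal{H}(P)$ requires the full $L^2$-density of \Cref{3C9}, case (iii)(b); the paper's citation appears to be a slip, and your argument supplies the logically correct reference.
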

The above observation enables us to obtain a sequence of currents $(j_n)_n\subset L^2\mathcal{V}_0(S)$ whose magnetic fields $(\operatorname{BS}_S(j_n))_n$ approximate a given target field $B_T$.
\begin{cor}
	\label{4C3}
	Let $P\Subset \Omega\subset \mathbb{R}^3$ be bounded $C^{1,1}$-domains with $\overline{P}\cong D^2\times S^1\cong \overline{\Omega}$. Suppose further that $\Omega$ admits a poloidal closed $C^1$-curve $\gamma$ contained in $\partial\Omega$ which bounds a $C^2$-disc $D$ in $\Omega$ such that $D\cap P$ is yet again a $C^2$-disc bounded by a closed poloidal $C^1$-curve $\tilde{\gamma}$ contained in $\partial P$. Given any $B_T\in L^2\mathcal{H}(P)$ and $\lambda>0$ we denote by $j_{\lambda}$ the (unique) minimiser of (\ref{4E1}). Then $\lim_{\lambda \searrow 0}\|\operatorname{BS}_{\partial\Omega}(j_\lambda)-B_T\|_{L^2(P)}=0$.
\end{cor}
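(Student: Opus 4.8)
The plan is to observe that the convergence is an immediate consequence of the fact that $j_\lambda$ realises the infimum defining $C(\lambda;B_T)$. First I would write out the defining identity for the minimiser: since $j_\lambda$ achieves the infimum in (\ref{4E1}), we have
\[
\|\operatorname{BS}_S(j_\lambda)-B_T\|^2_{L^2(P)}+\lambda\|j_\lambda\|^2_{L^2(S)}=C(\lambda;B_T).
\]
Both summands on the left-hand side are non-negative, so simply discarding the regularising term $\lambda\|j_\lambda\|^2_{L^2(S)}\geq 0$ yields, for every $\lambda>0$, the bound
\[
\|\operatorname{BS}_S(j_\lambda)-B_T\|^2_{L^2(P)}\leq C(\lambda;B_T).
\]

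The remaining step is to send $\lambda\searrow 0$. Here the hypotheses of the corollary---namely that $\overline{P}\cong D^2\times S^1\cong\overline{\Omega}$ and that there is a poloidal curve $\gamma\subset\partial\Omega$ bounding a $C^2$-disc $D$ whose intersection with $P$ is again a poloidal disc---are precisely those under which \Cref{4C2} is applicable. Invoking \Cref{4C2} gives $\lim_{\lambda\searrow 0}C(\lambda;B_T)=0$, and combining this with the bound above forces
\[
\lim_{\lambda\searrow 0}\|\operatorname{BS}_S(j_\lambda)-B_T\|^2_{L^2(P)}=0,
\]
whence the claimed $L^2(P)$-convergence of the induced magnetic fields follows upon taking square roots.

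All of the genuine difficulty has already been absorbed into the preceding results: the $C^k$-density statement \Cref{3C10}, its reformulation through the variational quantity $C(\lambda;B_T)$ furnished by \Cref{4P1}, and the resulting vanishing of $C(\lambda;B_T)$ recorded in \Cref{4C2}. Consequently I do not anticipate any real obstacle at the level of this corollary itself; the only point worth verifying is that the minimiser $j_\lambda$ exists and is unique, which is exactly the standard variational fact noted immediately after (\ref{4E1}) (coercivity in $\|j\|_{L^2(S)}$ and strict convexity of the functional $j\mapsto\|\operatorname{BS}_S(j)-B_T\|^2_{L^2(P)}+\lambda\|j\|^2_{L^2(S)}$ on the Hilbert space $L^2\mathcal{V}_0(S)$, together with weak lower semicontinuity).
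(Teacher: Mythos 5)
Your proposal is correct and follows exactly the paper's argument: the minimiser property gives $\|\operatorname{BS}_{\partial\Omega}(j_\lambda)-B_T\|^2_{L^2(P)}\leq C(\lambda;B_T)$ after discarding the non-negative regularising term, and \Cref{4C2} forces the right-hand side to vanish as $\lambda\searrow 0$. The only difference is that you spell out the intermediate steps (and the existence/uniqueness of $j_\lambda$) which the paper leaves implicit; the substance is the same.
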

\begin{proof}[Proof of \Cref{4C3}]
	This is a direct consequence of the fact that $\|\operatorname{BS}_{\partial\Omega}(j_{\lambda})-B_T\|^2_{L^2(P)}\leq C(\lambda;B_T)$ which converges to zero according to \Cref{4C2}.
\end{proof}
\begin{rem}
	\label{4R4}
	\begin{enumerate}
		\item According to \Cref{3P3} we know that if $B_T\neq \operatorname{BS}_{\partial\Omega}(j)$ for every $j\in L^2\mathcal{V}_0(\partial\Omega)$ then the $L^2(\partial\Omega)$-norm of any sequence of currents $(j_n)_n\subset L^2\mathcal{V}_0(\partial\Omega)$ for which $\operatorname{BS}_{\partial\Omega}(j_n)$ converges to $B_T$ in $L^2(P)$ must diverge to infinity. For the specific sequence from \Cref{4C3} we have however the estimate
		\[
		\|j_{\lambda}\|^2_{L^2(\partial\Omega)}\leq \frac{C(\lambda;B_T)}{\lambda}
		\]
		and so in particular $\|j_{\lambda}\|_{L^2(\partial\Omega)}\in o\left(\frac{1}{\sqrt{\lambda}}\right)$ as $\lambda\searrow 0$ because $C(\lambda;B_T)\rightarrow 0$ as $\lambda\searrow0$.
		\item Given some $P\Subset U\Subset \Omega$, where $P$ and $\Omega$ satisfy the conditions of \Cref{4C3} and $U$ is some $C^{1,1}$-domain, let $B_T\in L^2\mathcal{H}(U)$. We may then just like in the proof of \Cref{3C10} enlarge the plasma domain $P$ by means of a flow out construction to a slightly larger domain $P\Subset P_t\Subset U$ and consider the minimisers $j_{\lambda}$ of (\ref{4E1}) on the underlying domain $P_t$. Then according to \Cref{4C3} $\operatorname{BS}_{\partial\Omega}(j_{\lambda})$ will converge to $B_T$ in $L^2(P_t)$ and by means of elliptic estimates in $C^1$-norm to $B_T$ on $P$. So that one may in fact obtain a $C^1$-approximating sequence. Note however that we do not obtain any specific rate of convergence as $\lambda\searrow 0$.
	\end{enumerate}
\end{rem}
\section{Kernel of $\operatorname{BS}_S$}
For a bounded $C^{1,1}$-domain $\Omega\subset \mathbb{R}^3$ we let $S:=\partial\Omega$ and consider the operator
\[
\operatorname{BS}_S:L^2\mathcal{V}_0(S)\rightarrow \mathcal{V}(\Omega)\text{, }j\mapsto \left(x\mapsto\frac{1}{4\pi}\int_Sj(y)\times \frac{x-y}{|x-y|^3}d\sigma(x)\right)
\]
where $\mathcal{V}(\Omega)$ denotes the smooth vector fields on $\Omega$ (recall that $L^2\mathcal{V}_0(S)$ is the space of $L^2(S)$-vector fields which are tangent to $S$ and divergence-free as vector fields on $S$ in the weak sense).

Before stating the rigorous version of our main result we recall that $\mathcal{H}_N(\Omega)$ is the space of $H^1$-vector fields on $\Omega$ which are div-free, curl-free and tangent to the boundary of $\Omega$ and $g(S)$ denotes the genus of a surface $S$.

The following is our main result.
\begin{thm}
	\label{3T11}
	Let $\Omega\subset \mathbb{R}^3$ be a bounded $C^{1,1}$-domain. Then
	\[
	\dim\left(\operatorname{Ker}(\operatorname{BS}_{\partial\Omega})\right)=\dim\left(\mathcal{H}_N(\Omega)\right)=g(\partial\Omega).
	\]
	In particular, $\dim\left(\operatorname{Ker}(\operatorname{BS}_{\partial\Omega})\right)$ is a homotopy invariant, i.e. for any two given bounded $C^{1,1}$-domains $\Omega_1,\Omega_2\subset\mathbb{R}^3$ which are homotopic, the kernels of the corresponding Biot-Savart operators are isomorphic. Further, $\operatorname{Ker}\left(\operatorname{BS}_{\partial\Omega}\right)\subset \bigcap_{0<\alpha<1}C^{0,\alpha}\mathcal{V}(\partial \Omega)$, i.e. all elements in the kernel are $\alpha$-H\"{o}lder continuous for all $0<\alpha<1$.
\end{thm}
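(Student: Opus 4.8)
The plan is to produce an explicit linear isomorphism between $\operatorname{Ker}(\operatorname{BS}_{\partial\Omega})$ and the space of \emph{exterior} harmonic Neumann fields, to deduce the dimension count from the cited identity $\dim\mathcal{H}_N(\Omega)=g(\partial\Omega)$, and to read off the H\"older regularity from elliptic theory. Write $S:=\partial\Omega$ and $\Omega_{\mathrm{ext}}:=\mathbb{R}^3\setminus\overline{\Omega}$. The starting point is the trace/jump analysis recorded in Appendix C: for $j\in L^2\mathcal{V}_0(S)$ the field $B:=\operatorname{BS}_S(j)$ is curl-free and divergence-free on $\mathbb{R}^3\setminus S$, decays at infinity, has continuous normal trace across $S$, and satisfies the tangential jump relation $\mathcal{N}\times(B^+-B^-)=j$ (with the fixed sign convention), where $B^{\pm}$ denote the exterior/interior boundary limits.

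First I would characterise the kernel. If $j\in\operatorname{Ker}(\operatorname{BS}_S)$ then $B^-\equiv 0$ in $\Omega$, so continuity of the normal trace forces $B^+\cdot\mathcal{N}=0$ on $S$; hence $E:=B|_{\Omega_{\mathrm{ext}}}$ is divergence-free, curl-free, tangent to $S$ and decaying, i.e.\ an exterior harmonic Neumann field, and the jump relation recovers $j=\mathcal{N}\times E|_S$. This yields an injective linear map $\operatorname{Ker}(\operatorname{BS}_S)\to\mathcal{H}_N(\Omega_{\mathrm{ext}})$, $j\mapsto E$. For surjectivity, and to exhibit the inverse explicitly, I would start from an exterior harmonic Neumann field $E$ and set $W:=\chi_{\Omega_{\mathrm{ext}}}E$. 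Tangency of $E$ kills the normal jump, so $\operatorname{div}W=0$ on $\mathbb{R}^3$; the tangential jump produces a surface current $\operatorname{curl}W=(\mathcal{N}\times E)\,\delta_S=:j\,\delta_S$, where $j=\mathcal{N}\times E|_S$ is tangent to $S$ and satisfies $\operatorname{div}_S j=\pm(\operatorname{curl}E)\cdot\mathcal{N}=0$, so $j\in L^2\mathcal{V}_0(S)$. Both $W$ and $\operatorname{BS}_S(j)$ are then divergence-free on $\mathbb{R}^3$ with the same curl $j\,\delta_S$ and both decay at infinity, so their difference is a componentwise-harmonic, decaying field and hence vanishes by Liouville. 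Therefore $\operatorname{BS}_S(j)=W$, which is zero in $\Omega$, giving $j\in\operatorname{Ker}(\operatorname{BS}_S)$ mapping to $E$. This establishes the isomorphism $\operatorname{Ker}(\operatorname{BS}_S)\cong\mathcal{H}_N(\Omega_{\mathrm{ext}})$.

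To conclude I would identify $\dim\mathcal{H}_N(\Omega_{\mathrm{ext}})$ with $g(S)$. One route is Alexander duality, $b_1(\Omega_{\mathrm{ext}})=b_1(\Omega)=g(S)$, together with the weighted Hodge isomorphism for the exterior domain; a second, staying closer to the cited interior results, is to exhibit the nondegenerate linking pairing $\mathcal{H}_N(\Omega)\times\mathcal{H}_N(\Omega_{\mathrm{ext}})\to\mathbb{R}$, which forces $\dim\mathcal{H}_N(\Omega_{\mathrm{ext}})=\dim\mathcal{H}_N(\Omega)=g(S)$, the last equality being the quoted de Rham/Hodge fact. Homotopy invariance is then immediate from that of $H^1_{\mathrm{dR}}$. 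For the regularity statement, every kernel element is $j=\mathcal{N}\times E|_S$ with $E$ an exterior harmonic Neumann field; by the elliptic regularity for harmonic Neumann fields on $C^{1,1}$-domains (the exterior analogue of \Cref{ALemma1}) one has $E\in W^{1,p}$ for all $p<\infty$, whence $E|_S\in C^{0,\alpha}(S)$ for every $0<\alpha<1$ by Morrey's embedding; since $S\in C^{1,1}$ gives $\mathcal{N}\in C^{0,1}$, the product $\mathcal{N}\times E|_S$ lies in $\bigcap_{0<\alpha<1}C^{0,\alpha}\mathcal{V}(S)$.

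The \textbf{main obstacle} I anticipate is making the jump relations and the distributional $\operatorname{div}/\operatorname{curl}$ identities rigorous at the low regularity in play (a $C^{1,1}$-boundary, a merely $L^2$ current, and $W^{1,p}$ fields), together with the precise decay bounds needed for the exterior Liouville/uniqueness argument. The second delicate point is the exterior dimension count, since $\Omega_{\mathrm{ext}}$ is unbounded: one must either set up weighted Hodge theory with the correct decay-adapted function spaces or verify nondegeneracy of the interior--exterior linking pairing.
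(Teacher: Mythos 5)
Your reduction of the problem to the exterior domain is a genuinely different route from the paper's: the paper never leaves the bounded side, proving the upper bound $\dim\operatorname{Ker}(\operatorname{BS}_{\partial\Omega})\leq\dim\mathcal{H}_N(\Omega)$ by showing that a kernel element which is $L^2(\partial\Omega)$-orthogonal to all of $\mathcal{H}_N(\Omega)$ must vanish (\Cref{3C16}, via the extension operator of \Cref{4EXTRALemma} and the representation formula of \Cref{3L14}), and the lower bound by the explicit construction $j_0=\operatorname{BS}_\Omega(\Gamma)\times\mathcal{N}-(\nabla_{\partial\Omega}g)^\perp$ with $g$ solving a double-layer equation (\Cref{3P18}). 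Your isomorphism $\operatorname{Ker}(\operatorname{BS}_{\partial\Omega})\cong\mathcal{H}_N(\Omega_{\mathrm{ext}})$ via jump relations and the Liouville argument is sound in outline and conceptually attractive. Two caveats on that half: the jump relations you invoke are \emph{not} in Appendix C — that appendix only provides the twisted tangential trace on $H(\operatorname{curl},\Omega)$ and continuity of $\operatorname{BS}_{\partial\Omega}$ on $W^{-\frac{1}{2},2}\mathcal{V}_0(\partial\Omega)$; for an $L^2$ density on a $C^{1,1}$ surface you need genuine layer-potential theory (non-tangential limits in the style of Coifman--McIntosh--Meyer/Verchota), which must be cited or proved. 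The regularity claim is fine, since the proof of \Cref{ALemma1} is local near the boundary and applies verbatim to the exterior.

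The genuine gap is the dimension count $\dim\mathcal{H}_N(\Omega_{\mathrm{ext}})=g(\partial\Omega)$, on which the whole theorem now rests and which you do not prove. Neither of your two routes is a routine citation. Route (a) requires a Hodge-type theorem on an \emph{unbounded} domain, where the answer depends sensitively on the decay class; you must both prove (or locate) that theorem and check that its decay class is exactly the one your isomorphism lands in (the fields $\operatorname{BS}_{\partial\Omega}(j)|_{\Omega_{\mathrm{ext}}}$ decay like $|x|^{-2}$, and surjectivity forces the abstract space $\mathcal{H}_N(\Omega_{\mathrm{ext}})$ to be defined with compatible conditions), which is markedly harder than the interior identity $\dim\mathcal{H}_N(\Omega)=g(\partial\Omega)$ that the paper simply cites. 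Route (b) is, after composing with your isomorphism $j\leftrightarrow E$, exactly the pairing $(j,\Gamma)\mapsto\int_{\partial\Omega}j\cdot\Gamma\,d\sigma$ between $\operatorname{Ker}(\operatorname{BS}_{\partial\Omega})$ and $\mathcal{H}_N(\Omega)$, and nondegeneracy in its two arguments is precisely the content of the paper's proof: one direction is \Cref{3C16}, the other is Step 1 of \Cref{3P18}, where the explicit kernel element satisfies $\int_{\partial\Omega}j_0\cdot\Gamma\,d\sigma=-\|\Gamma\|^2_{L^2(\Omega)}\neq 0$. So carrying out route (b) amounts to reproducing the paper's two bounds, while route (a) replaces them with a harder piece of analysis; as written, the proposal defers exactly the part of the argument where the difficulty lies.
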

\begin{rem}
	We provide an explicit isomorphism between $\mathcal{H}_N(\Omega)$ and the kernel of the Biot-Savart operator. In particular, we can construct a basis of the kernel from any given basis of $\mathcal{H}_N(\Omega)$.
\end{rem}
From the point of view of physical applications the case where a surface $S$ bounds a solid torus is of particular interest. Letting $D^2$ denote the closed unit disc of $\mathbb{R}^2$ we obtain the following corollary.
\begin{cor}
	\label{3C13}
	Let $D^2\times S^1\cong \overline{\Omega}\subset \mathbb{R}^3$, $\partial \Omega\in C^{1,1}$ be a bounded $C^{1,1}$-domain. Then
	\[
	\dim\left(\operatorname{Ker}(\operatorname{BS}_{\partial \Omega})\right)= 1.
	\]
\end{cor}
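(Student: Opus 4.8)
The plan is to read the result off directly from \Cref{3T11}, which already asserts that $\dim(\operatorname{Ker}(\operatorname{BS}_{\partial\Omega})) = g(\partial\Omega)$ for any bounded $C^{1,1}$-domain $\Omega\subset\mathbb{R}^3$. Thus the entire content of the corollary reduces to the purely topological task of computing the genus of the boundary surface $\partial\Omega$ under the standing hypothesis $\overline{\Omega}\cong D^2\times S^1$.

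First I would observe that the given identification $\Phi\colon \overline{\Omega}\to D^2\times S^1$ of manifolds with boundary necessarily restricts to a homeomorphism $\partial\Omega\to \partial(D^2\times S^1)$, since boundary points are mapped to boundary points and interior points to interior points. Hence $\partial\Omega$ is homeomorphic to $\partial(D^2\times S^1)$. Because the genus of a connected closed orientable surface is a topological invariant, it suffices to identify this latter surface and read off its genus.

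Next I would compute $\partial(D^2\times S^1)=(\partial D^2)\times S^1 = S^1\times S^1 = T^2$, the standard $2$-torus. Its genus is $g(T^2)=1$, as one sees for instance from the Euler characteristic $\chi(T^2)=0$ together with the relation $\chi = 2-2g$ valid for connected closed orientable surfaces (alternatively, $T^2$ is by definition the genus-one surface). Combining this computation with \Cref{3T11} yields $\dim(\operatorname{Ker}(\operatorname{BS}_{\partial\Omega})) = g(\partial\Omega) = 1$, which is the assertion.

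Since all of the analytic difficulty is already absorbed into \Cref{3T11}, there is no genuine obstacle at this stage. The only point worth flagging is the mild one that both the identification of the boundary and the invariance of the genus should be phrased so as not to require any regularity beyond the given $C^{1,1}$ structure; this is automatic, because the genus depends only on the homeomorphism type of $\partial\Omega$, and the hypothesis $\overline{\Omega}\cong D^2\times S^1$ already fixes that type to be $T^2$.
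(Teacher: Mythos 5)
Your proposal is correct and follows exactly the route the paper intends: \Cref{3C13} is stated as an immediate consequence of \Cref{3T11}, with the only remaining content being the observation that $\partial\Omega\cong\partial(D^2\times S^1)=S^1\times S^1$ is the genus-one torus, so $g(\partial\Omega)=1$. Your additional remarks on the boundary restriction of the identification and the topological invariance of the genus are exactly the (routine) details the paper leaves implicit.
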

We divide the proof of \Cref{3T11} into two parts which we deal with in the following two subsections. In the first part we prove the upper bound on the dimension and in a second step we provide the lower bound.
\subsection{$\dim\left(\operatorname{Ker}(\operatorname{BS}_{\partial\Omega})\right)\leq \dim\left(\mathcal{H}_N(\Omega)\right)$}
Before we come to the proof of the upper bound we first prove two auxiliary results. The first result states that we can always find certain extensions related to square integrable currents which posses some regularity and that under certain circumstances these extensions may be taken to be curl-free. The second statement provides an equivalent expression for $\operatorname{BS}_S$ in terms of the curl of suitable extensions related to the underlying current.

Before we state the first lemma we recall that $H(\operatorname{curl},\Omega)$ is the space of square integrable vector fields which admit a square integrable curl. Further, we say that $w\in H\left(\operatorname{curl},\Omega\right)$ satisfies $\mathcal{N}\times w=j$ on $\partial\Omega$ for some $j\in L^2\mathcal{V}(\partial\Omega)$ if it satisfies
\begin{gather}
	\nonumber
	 \int_\Omega\operatorname{curl}(w)(x)\cdot \psi(x)d^3x-\int_\Omega w(x)\cdot\operatorname{curl}(\psi)(x)d^3x=\int_{\partial\Omega}j\cdot \psi d\sigma(x)\text{ for all }\psi\in W^{1,2}\mathcal{V}(\Omega).
\end{gather}
Further, we equip $H(\operatorname{curl},\Omega)$ with the norm $\|w\|_{L^2,\operatorname{curl}}:=\sqrt{\|w\|^2_{L^2(\Omega)}+\|\operatorname{curl}(w)\|^2_{L^2(\Omega)}}$ which is induced by an inner product.
\begin{lem}
	\label{4EXTRALemma}
	Let $\Omega\subset \mathbb{R}^3$ be a bounded $C^{1,1}$-domain. Then there exists a bounded linear operator
	\begin{gather}
		\nonumber
		T:L^2\mathcal{V}_0(\partial\Omega)\rightarrow H(\operatorname{curl},\Omega)
	\end{gather}
	such that $\operatorname{div}(T(j))=0$ on $\Omega$ in the weak sense and $\mathcal{N}\times T(j)=j$ on $\partial\Omega$ for all $j\in L^2\mathcal{V}_0(\partial\Omega)$. Further, if $\int_{\partial\Omega}j\cdot \Gamma d\sigma(x)=0$ for all $\Gamma\in \mathcal{H}_N(\Omega)$, then $\operatorname{curl}(T(j))=0$.
	\end{lem}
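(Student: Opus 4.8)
The plan is to build $T$ by transporting $j$ through the Hodge decomposition of tangential fields on the closed surface $S=\partial\Omega$ and lifting each piece into $\Omega$ separately. Setting $\tilde{\jmath}:=-\mathcal{N}\times j$, the hypothesis $\operatorname{div}_S j=0$ is equivalent to $\operatorname{curl}_S\tilde{\jmath}=0$, so the surface Hodge decomposition yields $\tilde{\jmath}=\nabla_S v+\eta$ with $v\in H^1(S)$ (normalised to mean zero) and $\eta$ in the finite-dimensional space $\mathcal{H}(S)$ of harmonic tangential fields on $S$, both depending linearly and boundedly on $\|j\|_{L^2(S)}$. Equivalently $j=\mathcal{N}\times\nabla_S v+\mathcal{N}\times\eta$.

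First I would settle the first two assertions. Let $u$ solve $\Delta u=0$ in $\Omega$, $u|_S=v$, and put $w_{\mathrm{ex}}:=\nabla u$: this is curl-free and divergence-free, lies in $H(\operatorname{curl},\Omega)$, has weak tangential trace $\mathcal{N}\times\nabla_S v$, and satisfies $\|w_{\mathrm{ex}}\|_{L^2,\operatorname{curl}}=\|\nabla u\|_{L^2(\Omega)}\lesssim\|v\|_{H^{1/2}(S)}\lesssim\|j\|_{L^2(S)}$. For the harmonic part I would fix, once and for all, for each element of a basis of $\mathcal{H}(S)$ a smooth extension into $\Omega$ carrying the prescribed tangential trace, and render it divergence-free by subtracting $\nabla\phi$ with $\Delta\phi=\operatorname{div}(\cdot)$, $\phi|_S=0$; since $\phi|_S=0$ forces $\nabla_S(\phi|_S)=0$, the tangential trace is unchanged, and finite dimensionality makes the resulting operator $w_{\mathrm{harm}}$ bounded. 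Then $T_0(j):=w_{\mathrm{ex}}+w_{\mathrm{harm}}$ is bounded linear into $H(\operatorname{curl},\Omega)$ with $\operatorname{div}(T_0j)=0$ and $\mathcal{N}\times T_0j=j$; that the weak tangential trace coincides with the classical one here is exactly the Green-type identity in the definition preceding the lemma.

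The delicate point is the curl-free refinement. Any curl-free field on $\Omega$ can be written $w=\nabla u+\Gamma$ with $\Gamma\in\mathcal{H}_N(\Omega)$ representing its cohomology class; divergence-freeness forces $u$ harmonic, and $\mathcal{N}\times w=j$ reduces (by injectivity of $\mathcal{N}\times$ on tangential fields) to the surface identity $\nabla_S(u|_S)+\Gamma|_S=\nabla_S v+\eta$. Because $\operatorname{curl}_S(\Gamma|_S)=\mathcal{N}\cdot\operatorname{curl}\Gamma=0$, the trace splits as $\Gamma|_S=\nabla_S a_\Gamma+h_\Gamma$ with $h_\Gamma\in\mathcal{H}(S)$; matching harmonic parts forces $h_\Gamma=\eta$, after which the exact parts determine $u|_S$ up to a constant and hence $w$. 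Thus a curl-free, divergence-free lift of $j$ exists precisely when $\eta$ lies in $L_N:=\{h_\Gamma:\Gamma\in\mathcal{H}_N(\Omega)\}\subseteq\mathcal{H}(S)$.

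Finally I would identify $L_N$ as a Lagrangian subspace and convert solvability into the stated orthogonality. The Green identity gives $\int_S(\mathcal{N}\times\Gamma_1)\cdot\Gamma_2\,d\sigma=\int_\Omega(\operatorname{curl}\Gamma_1\cdot\Gamma_2-\Gamma_1\cdot\operatorname{curl}\Gamma_2)\,d^3x=0$ for $\Gamma_1,\Gamma_2\in\mathcal{H}_N(\Omega)$, so $L_N$ is isotropic for the nondegenerate pairing $\omega(\eta_1,\eta_2):=\int_S(\mathcal{N}\times\eta_1)\cdot\eta_2\,d\sigma$ on $\mathcal{H}(S)$; since $\dim\mathcal{H}_N(\Omega)=g=\tfrac12\dim\mathcal{H}(S)$ and $\Gamma\mapsto h_\Gamma$ is injective (a harmonic Neumann field with surface-exact trace has vanishing periods, hence is a Neumann gradient, hence zero), $L_N$ is Lagrangian and $L_N^{\omega}=L_N$. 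A surface integration by parts, using $\operatorname{curl}_S\eta=0$ and $\operatorname{curl}_S(\Gamma|_S)=0$ to discard the exact contributions, gives $\int_S j\cdot\Gamma\,d\sigma=\omega(\eta,h_\Gamma)$ for all $\Gamma\in\mathcal{H}_N(\Omega)$, so the orthogonality hypothesis is equivalent to $\eta\in L_N^{\omega}=L_N$, i.e. to the solvability above. I would then define $T$ to be the curl-free lift on the closed subspace $K:=\{j:\int_S j\cdot\Gamma\,d\sigma=0\ \forall\Gamma\in\mathcal{H}_N(\Omega)\}$ and $T_0$ on a fixed finite-dimensional complement of $K$, extended linearly; boundedness is inherited and property three holds by construction. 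I expect the main obstacle to be this last step: verifying the isotropy and maximality of $L_N$ (that the tangential traces of interior harmonic Neumann fields form a Lagrangian half of $\mathcal{H}(S)$) and carefully matching the weak tangential trace of the lemma with the classical one so that the surface integrations by parts are legitimate on a merely $C^{1,1}$ domain.
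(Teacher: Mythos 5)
Your proposal is correct in substance, but it proves the lemma by a genuinely different route from the paper. The paper's proof is purely variational and volumetric: adapting \cite{GR86}, it defines $A_j$ in the space $H^1_T\mathcal{V}_0(\Omega)$ of divergence-free $H^1$ fields tangent to $\partial\Omega$ by Riesz representation of the functional $\psi\mapsto\int_{\partial\Omega}j\cdot\psi\,d\sigma$ (using the norm equivalence of \cite{ABDG98}), sets $w_j:=-\operatorname{curl}(A_j)$, and then corrects $w_j$ using the volume Hodge decomposition of \Cref{BTheorem1}: the gradient part of $\operatorname{curl}(w_j)$ vanishes automatically because $j$ is weakly divergence-free, the $\mathcal{H}_N(\Omega)$-part has squared norm $\int_{\partial\Omega}j\cdot\Gamma\,d\sigma$ and so dies exactly under the orthogonality hypothesis, and the coexact part is removed by subtracting a vector potential $C_j\perp\partial\Omega$, which does not disturb the trace; the single formula $T(j)=w_j-C_j$ then serves all cases. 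You instead work on the surface: you Hodge-decompose the current on $\partial\Omega$, lift the coexact part by a harmonic Dirichlet extension and the harmonic part by a fixed finite-dimensional extension, and then characterise \emph{exactly} which currents admit curl-free divergence-free lifts through the symplectic geometry of the intersection pairing on the harmonic fields of $\partial\Omega$, showing that the traces of $\mathcal{H}_N(\Omega)$ form a Lagrangian subspace $L_N$ and that the hypothesis $\int_{\partial\Omega}j\cdot\Gamma\,d\sigma=0$ for all $\Gamma\in\mathcal{H}_N(\Omega)$ is equivalent to membership of the harmonic part of $j$ in $L_N^{\omega}=L_N$. What your approach buys is an if-and-only-if characterisation (strictly more than the lemma asserts) and a transparent geometric explanation of why $\mathcal{H}_N(\Omega)$ is the precise space of obstructions; what it costs is the machinery you must justify at $C^{1,1}$ regularity and which the paper's argument entirely avoids: an $L^2$ Hodge decomposition with $\dim\mathcal{H}(\partial\Omega)=2g$ on a surface whose induced metric is only Lipschitz, the external identity $\dim\mathcal{H}_N(\Omega)=g(\partial\Omega)$ from \cite{CDG02} (not circular, since it is cited independently in the introduction, but it imports a result closely related to what \Cref{3T11} is built to deliver), the surjectivity of $H_1(\partial\Omega;\mathbb{R})\rightarrow H_1(\Omega;\mathbb{R})$ needed for your injectivity-of-periods step, and the nondegeneracy of the pairing $\omega$; in addition your operator is assembled by a case split between the subspace $K$ and a finite-dimensional complement rather than by one uniform formula. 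Provided you fill in these surface-level facts (including the identification of the weak tangential trace with the classical one, which you rightly flag), your construction is sound.
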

	We note that the existence of extension operators in the context of $H(\operatorname{curl},\Omega)$ spaces are well-established, see for instance \cite[Theorem 3.1]{AV96} and in fact following the construction of the extension operator in the proof of \cite[Theorem 3.1]{AV96} shows that their constructed extension operator satisfies all the required properties as stated in \Cref{4EXTRALemma}. However, to make the present manuscript more self-contained, we provide here a shorter, alternative proof exploiting the Hodge-decomposition theorem.
	\begin{proof}[Proof of \Cref{4EXTRALemma}]
		We adapt the reasoning of \cite[Corollary 2.8]{GR86} which dealt with the corresponding space $H(\operatorname{div},\Omega)$.
		
		Define the space $H^1_{T}\mathcal{V}_0(\Omega):=\{A\in H^1\mathcal{V}(\Omega)\mid \operatorname{div}(A)=0\text{, }A\cdot \mathcal{N}=0\}$ and note that it is an $H^1$-closed subspace of $W^{1,2}\mathcal{V}(\Omega)$. It then in addition follows from \cite[Lemma 2.11]{ABDG98} that the norms $\|\cdot\|_{L^2,\operatorname{curl}}$ and $\|\cdot\|_{H^1}$ are equivalent on $H^1_T\mathcal{V}_0(\Omega)$. Given some $j\in L^2\mathcal{V}_0(\partial\Omega)$ we can then define the following linear operator
		\begin{gather}
			\nonumber
			J:H^1_T\mathcal{V}_0(\Omega)\rightarrow \mathbb{R}\text{, }\psi\mapsto  \int_{\partial\Omega}j\cdot \psi d\sigma(x).
		\end{gather}
		It then follows from the trace inequality that $|J(\psi)|\leq \|j\|_{L^2(\partial\Omega)}\|\psi\|_{L^2(\partial\Omega)}\leq c_1\|j\|_{L^2(\partial\Omega)}\|\psi\|_{H^1(\Omega)}\leq c_2\|j\|_{L^2(\partial\Omega)}\|\psi\|_{L^2,\operatorname{curl}}$ where we used the equivalence of the corresponding norms and where $c_1,c_2>0$ are suitable constants which are independent of $j$ and $\psi$. It then follows from the Riesz representation theorem that there exists a unique $A_j\in H^1_T\mathcal{V}_0(\Omega)$ with
		\begin{gather}
			\label{4EEXTRA1}
			\int_{\partial\Omega}j\cdot \psi d\sigma(x)=\int_\Omega A_j(x)\cdot \psi(x)d^3x+\int_{\Omega}\operatorname{curl}(A_j)(x)\cdot \operatorname{curl}(\psi)(x)d^3x\text{ for all }\psi\in H^1_T\mathcal{V}_0(\Omega).
		\end{gather}
		It is then obvious that the assignment $j\mapsto A_j$ is linear. We then let $w_j:=-\operatorname{curl}(A_j)$ and note that the assignment $j\mapsto w_j$ is also linear. We now claim that $\operatorname{curl}(w_j)=A_j$ which will establish $w_j\in H(\operatorname{curl},\Omega)$ because $A_j\in H^1\mathcal{V}(\Omega)$. To see this let $\Psi\in \mathcal{V}_c(\Omega)$ be any smooth vector field which is compactly supported in $\Omega$. According to \Cref{BTheorem1} we can decompose $\Psi=\operatorname{curl}(B)+\Gamma+\nabla \phi$ for suitable $\phi\in W^{1,2}(\Omega)$, $B\in W^{1,2}\mathcal{V}(\Omega)$ with $B\perp \partial\Omega$ and $\Gamma\in \mathcal{H}_N(\Omega)$. We then find
		\begin{gather}
			\nonumber
			\int_{\Omega}A_j\cdot \Psi d^3x=\int_\Omega A_j\cdot (\operatorname{curl}(B)+\Gamma)d^3x
		\end{gather}
		where we used that $A_j\in H^1_T\mathcal{V}_0(\Omega)$. We further notice that $\operatorname{curl}(B)+\Gamma\in H^1_T\mathcal{V}_0(\Omega)$ (note that $\operatorname{curl}(\operatorname{curl}(B))=\operatorname{curl}(\Psi)\in L^2\mathcal{V}(\Omega)$ and $\operatorname{curl}(B)\parallel \partial\Omega$ so that $\operatorname{curl}(B)\in H^1\mathcal{V}(\Omega)$). Utilising (\ref{4EEXTRA1}) we find
		\begin{gather}
			\nonumber
			\int_{\Omega}A_j\cdot \Psi d^3x=-\int_{\Omega}\operatorname{curl}(A_j)\cdot \operatorname{curl}(\Psi)d^3x-\int_{\partial\Omega}j\cdot \nabla \phi d\sigma=-\int_{\Omega}\operatorname{curl}(A_j)\cdot \operatorname{curl}(\Psi)d^3x
		\end{gather}
		where we used that $\operatorname{curl}(\nabla \phi)=0$, that $\Psi|_{\partial\Omega}=0$ and that $j$ is divergence-free in the weak sense, i.e. $L^2(\partial\Omega)$-orthogonal to all gradient fields. This proves that $\operatorname{curl}(w_j)=A_j$. We notice that
		\begin{gather}
			\nonumber
			\|w_j\|^2_{L^2,\operatorname{curl}}=\|A_j\|^2_{L^2,\operatorname{curl}}=J(A_j)\leq c_2\|j\|_{L^2(\partial\Omega)}\|A_j\|_{L^2,\operatorname{curl}}=c_2\|j\|_{L^2(\partial\Omega)}\|w_j\|_{L^2,\operatorname{curl}}
		\end{gather}
		by means of (\ref{4EEXTRA1}) and the boundedness of the functional $J$. We are left with verifying that $\mathcal{N}\times w_j=j$ on $\partial\Omega$. But this follows in the exact same spirit as the calculation with $\Psi$ above.
		
		This proves the first part of \Cref{4EXTRALemma}. In order to establish the second part we will modify the assignment $j\mapsto w_j$ further to ensure that $\operatorname{curl}(w_j)=0$ whenever $\int_{\partial\Omega}j\cdot \Gamma d\sigma(x)=0$ for all $\Gamma\in \mathcal{H}_N(\Omega)$. The main idea of the upcoming argument, namely exploiting the Hodge-decomposition theorem, is taken from \cite[Theorem 3.1.1]{S95}. Given our $w_j$ we consider the following Hodge-decomposition of $\operatorname{curl}(w_j)$
		\begin{gather}
			\nonumber
			\operatorname{curl}(w_j)=\operatorname{grad}(\rho)+\operatorname{curl}(C)+\Gamma
		\end{gather}
		for suitable $\rho\in W^{1,2}(\Omega)$, $C\in W^{1,2}\mathcal{V}(\Omega)$, $C\perp \partial\Omega$, $\operatorname{div}(C)=0$ and $\Gamma\in \mathcal{H}_N(\Omega)$. Since this decomposition is $L^2$-orthogonal we find
		\begin{gather}
			\nonumber
			\|\operatorname{grad}(\rho)\|^2_{L^2(\Omega)}=\int_{\Omega}\operatorname{curl}(w_j)\cdot \operatorname{grad}(\rho)d^3x=0
		\end{gather}
		where by means of an approximation argument we assumed that $\rho\in W^{2,2}(\Omega)$ and thus $\nabla \rho\in W^{1,2}\mathcal{V}(\Omega)$ and where we used the fact that $\operatorname{curl}(\nabla \rho)=0$ and $\mathcal{N}\times w_j=j$ which is $L^2(\partial\Omega)$-orthogonal to all gradient fields. Similarly, using $\operatorname{curl}(\Gamma)=0$ we find
		\begin{gather}
			\nonumber
			\|\Gamma\|^2_{L^2(\Omega)}=\int_{\Omega}\operatorname{curl}(w_j)\cdot \Gamma d^3x=\int_{\partial\Omega}j\cdot \Gamma d\sigma(x)
		\end{gather}
		which equals zero whenever $\int_{\partial\Omega}j\cdot \Gamma d\sigma=0$ for all $\Gamma\in \mathcal{H}_N(\Omega)$. We conclude $\operatorname{curl}(w_j)=\Gamma+\operatorname{curl}(C)$ for some $C\in W^{1,2}\mathcal{V}(\Omega)$ with $\operatorname{div}(C)=0$ and $C\perp \partial\Omega$ and where $\Gamma\in \mathcal{H}_N(\Omega)$ is zero whenever $j$ satisfies the additional orthogonality assumption. We observe first that $\|\operatorname{curl}(C)\|_{L^2(\Omega)}\leq \|\operatorname{curl}(w_j)\|_{L^2(\Omega)}$ and that the assignment $w_j\mapsto\operatorname{curl}(C)$ is (unique) and linear. Finally we let $\mathcal{H}_D(\Omega):=\{\Theta\in H^1\mathcal{V}(\Omega)|\operatorname{curl}(\Theta)=0=\operatorname{div}(\Theta)\text{, }\Theta \perp \partial\Omega\}$ which is an $L^2$-closed subspace of $L^2\mathcal{V}(\Omega)$. It then follows, see the proof of \Cref{BTheorem1}, that there is a unique vector potential $C_j\in H^1\mathcal{V}(\Omega)$ which is $L^2$-orthogonal to $\mathcal{H}_D(\Omega)$, satisfies $\operatorname{div}(C_j)=0$ and $C_j\perp\partial\Omega$ and the a priori estimate
		\begin{gather}
			\nonumber
			\|C_j\|_{W^{1,2}(\Omega)}\leq c_3\|\operatorname{curl}(w_j)\|_{L^2(\Omega)}
		\end{gather}
		for a suitable $c_3>0$ which is independent of $j$. The assignment $j\mapsto C_j$ is therefore linear and continuous. We finally let $T(j):=w_j-C_j$. By the above arguments this yields a bounded, linear operator from $L^2\mathcal{V}_0(\partial\Omega)$ to $H(\operatorname{curl},\Omega)$ with $\operatorname{div}(T(j))=0$ and $\mathcal{N}\times T(j)=\mathcal{N}\times w_j=j$ because $C_j\perp\partial\Omega$ and $\operatorname{div}(C_j)=0$ so that $T$ satisfies the conditions of the first part of the lemma. In addition, as we have seen above, if $\int_{\partial\Omega}j\cdot \Gamma d\sigma=0$ for all $\Gamma\in \mathcal{H}_N(\Omega)$, then $\operatorname{curl}(C_j)=\operatorname{curl}(w_j)$ and consequently $\operatorname{curl}(T(j))=0$ in this case.
	\end{proof}
\begin{lem}
	\label{3L14}
	Let $\Omega\subset \mathbb{R}^3$ be a bounded $C^{1,1}$-domain. Let $j\in L^2\mathcal{V}_0(\partial\Omega)$ and let $T(j)$ be defined as in \Cref{4EXTRALemma}. Then we have the identity
	\begin{gather}
		\nonumber
	\operatorname{BS}_{\partial\Omega}(j)(x)=\frac{\int_\Omega\operatorname{curl}(T(j))(y)\times \frac{x-y}{|x-y|^3}d^3y+\nabla_x\int_{\Omega}T(j)(y)\cdot \frac{x-y}{|x-y|^3}d^3y}{4\pi}- T(j)(x)\text{ for a.e. }x\in \Omega.
	\end{gather}
	In particular, $\operatorname{BS}_{\partial\Omega}:L^2\mathcal{V}_0(\partial\Omega)\rightarrow L^2\mathcal{H}(\Omega)$ is a well-defined, linear, continuous operator.
\end{lem}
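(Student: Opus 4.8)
The plan is to identify the surface field $\operatorname{BS}_{\partial\Omega}(j)$ with a combination of volume operators applied to $w:=T(j)$, by routing through the curl of the volume Biot-Savart operator. I would first work with smooth $w$ (reducing the general case to this by density at the very end). For smooth $w$ the starting point is the single-layer representation obtained exactly as in the proof of \Cref{3P6}: writing $w(y)\times\frac{x-y}{|x-y|^3}=-\operatorname{curl}_y\frac{w(y)}{|x-y|}+\frac{\operatorname{curl}(w)(y)}{|x-y|}$ and applying Gauss' theorem yields
\[
4\pi\operatorname{BS}_\Omega(w)(x)=\int_\Omega\frac{\operatorname{curl}(w)(y)}{|x-y|}d^3y-\int_{\partial\Omega}\frac{\mathcal{N}(y)\times w(y)}{|x-y|}d\sigma(y),
\]
and since $\mathcal{N}\times w=j$ on $\partial\Omega$ (as provided by \Cref{4EXTRALemma}) the boundary integrand is exactly $j$.

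Next I would apply $\operatorname{curl}_x$ to this identity. Using $\operatorname{curl}_x\frac{v(y)}{|x-y|}=v(y)\times\frac{x-y}{|x-y|^3}$ for the volume term (with $v=\operatorname{curl}(w)$) and for the surface term (with $v=j$), this produces the commutator-type relation $\operatorname{curl}(\operatorname{BS}_\Omega(w))=\operatorname{BS}_\Omega(\operatorname{curl}(w))-\operatorname{BS}_{\partial\Omega}(j)$. Independently, the vector Newtonian potential $\mathbf{W}(x):=\frac{1}{4\pi}\int_\Omega\frac{w(y)}{|x-y|}d^3y$ satisfies $-\Delta\mathbf{W}=w$ in $\Omega$, so the vector-Laplacian identity $\Delta=\nabla\operatorname{div}-\operatorname{curl}\operatorname{curl}$, together with $\operatorname{curl}\mathbf{W}=\operatorname{BS}_\Omega(w)$ and $\operatorname{div}\mathbf{W}(x)=-\frac{1}{4\pi}\int_\Omega w(y)\cdot\frac{x-y}{|x-y|^3}d^3y=-\phi(x)$, gives the representation $w=\nabla\phi+\operatorname{curl}(\operatorname{BS}_\Omega(w))$ with $\phi(x):=\frac{1}{4\pi}\int_\Omega w(y)\cdot\frac{x-y}{|x-y|^3}d^3y$. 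Substituting the commutator relation yields precisely $\operatorname{BS}_{\partial\Omega}(j)=\operatorname{BS}_\Omega(\operatorname{curl}(w))+\nabla\phi-w$, which is the asserted formula; note that divergence-freeness of $w$ is not actually needed for this computation.

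Then I would remove the smoothness assumption. Since $w=T(j)$ lies only in $H(\operatorname{curl},\Omega)$, I would approximate it by smooth fields $w_n\to w$ in $H(\operatorname{curl},\Omega)$, for which the identity already holds, and pass to the limit. The volume map $\operatorname{curl}(w)\mapsto\operatorname{BS}_\Omega(\operatorname{curl}(w))$ is a weakly singular (Riesz) potential and $w\mapsto\nabla\phi$ is a Calder\'on--Zygmund operator, both bounded on $L^2(\Omega)$; combined with $w_n\to w$ in $L^2$ this handles the right-hand side, while at any fixed interior $x$ the kernel $y\mapsto\frac{x-y}{|x-y|^3}$ is smooth on $\partial\Omega$, so the surface term converges through the (weak-sense) convergence of the tangential traces $\mathcal{N}\times w_n\to j$. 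Finally, the same $L^2(\Omega)$-bounds for the three right-hand operators, together with the boundedness of $T$ from \Cref{4EXTRALemma}, give $\|\operatorname{BS}_{\partial\Omega}(j)\|_{L^2(\Omega)}\lesssim\|j\|_{L^2(\partial\Omega)}$; since $\operatorname{BS}_{\partial\Omega}(j)$ is smooth and both divergence- and curl-free in the interior of $\Omega$, it lies in $L^2\mathcal{H}(\Omega)$, which proves the concluding statement.

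The main obstacle is this last step: the two formal manipulations --- differentiating the singular single-layer integral under $\operatorname{curl}_x$ and invoking $-\Delta\mathbf{W}=w$ --- are only literally valid for regular $w$, so the crux is to justify the passage to $H(\operatorname{curl})$-fields. In particular one must control the boundary-singular operator $\operatorname{BS}_{\partial\Omega}$ against a trace $\mathcal{N}\times w_n$ that converges only weakly (in $H^{-1/2}$), and handle the principal-value nature of the Calder\'on--Zygmund term $\nabla\phi$, whose local contribution is exactly what produces the $-w$ summand.
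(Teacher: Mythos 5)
Your proposal is correct, but it reaches the identity by a genuinely different route than the paper. The paper's proof is a hands-on singular-integral computation: for $j\in H^1\mathcal{V}_0(\partial\Omega)$ and an \emph{arbitrary} extension $\tilde v\in H^1\mathcal{V}(\Omega)$ with $\mathcal{N}\times\tilde v=j$, it excises a ball $B_\epsilon(x)$, integrates by parts using vector-calculus identities, and tracks the $\partial B_\epsilon(x)$ boundary terms (which produce the $-4\pi\tilde v(x)$ summand) as $\epsilon\searrow 0$; it then invokes the regularity result \cite[Remark 2.14 \& Corollary 2.15]{ABDG98} to conclude $T(j)\in W^{1,2}\mathcal{V}(\Omega)$ for $H^1$ currents, and finally approximates a general $j\in L^2\mathcal{V}_0(\partial\Omega)$ by $H^1$ currents, passing to the limit via Hardy--Littlewood--Sobolev and Newton-potential estimates. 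You instead obtain the identity for smooth $w$ by combining the single-layer representation (borrowed, as you note, from the proof of \Cref{3P6}) with the classical Newtonian-potential decomposition $w=\nabla\phi+\operatorname{curl}(\operatorname{BS}_\Omega(w))$ coming from $-\Delta\mathbf{W}=w$ and $\Delta=\nabla\operatorname{div}-\operatorname{curl}\operatorname{curl}$, and then approximate the \emph{extension} $T(j)$ by smooth fields in $H(\operatorname{curl},\Omega)$ rather than approximating the current. Your limiting argument is sound: the boundary term converges at each fixed interior $x$ because pairing $\mathcal{N}\times w_n$ against a cutoff of the (there smooth) kernel is exactly the weak tangential-trace pairing that defines $\mathcal{N}\times T(j)=j$, while the three volume operators are $L^2(\Omega)$-bounded, so the right-hand side converges in $L^2$ and the identity holds a.e. As for what each approach buys: yours is more modular, avoids the elliptic regularity input from \cite{ABDG98} entirely, and in fact establishes the formula for \emph{every} $w\in H(\operatorname{curl},\Omega)$ whose weak tangential trace is $j$, a more general statement than the paper's; on the other hand, the paper's intermediate result --- validity of the formula for any $H^1$ extension $\tilde v$, recorded as (\ref{3E12}) --- is precisely what gets reused in Step 2 of the proof of \Cref{3P18} with the concrete extensions $\operatorname{BS}_\Omega(\Gamma)$ and $-\nabla\tilde g$, so if your lemma were to replace the paper's you should state explicitly that it covers arbitrary $H^1$ (indeed $H(\operatorname{curl})$) extensions, which it does at no extra cost.
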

\begin{proof}[Proof of \Cref{3L14}]
	In a first step we assume that $j$ is additionally of class $H^1(\partial\Omega)$. We start by defining $v:=j\times \mathcal{N}$ and $S:=\partial\Omega$ where $\mathcal{N}\in C^{0,1}(S)$ denotes the outward pointing unit normal on $S$. Due to the Lipschitz regularity of $\mathcal{N}$ we find $v\in H^1\mathcal{V}(S)$. We observe that $v\cdot \mathcal{N}=0$ a.e. on $S$ and that
	\[
	\mathcal{N}\times v=\mathcal{N}\times (j\times \mathcal{N})=j-(\mathcal{N}\cdot j)\mathcal{N}=j,
	\]
	where we used the vector triple product rule, that $|\mathcal{N}|=1$ and that $j\cdot \mathcal{N}=0$ a.e. on $S$. In addition, if $\tilde{v}\in W^{1,2}\mathcal{V}(\Omega)$ is any $H^1$-vector field on $\Omega$ with $\tilde{v}^\parallel=v$ on $S$ (or equivalently $\mathcal{N}\times \tilde{v}=\mathcal{N}\times v=j$) then we find
	\[
	4\pi \operatorname{BS}_S(j)=\int_S\left(\mathcal{N}(y)\cdot \frac{x-y}{|x-y|^3}\right)\tilde{v}(y)-\left(\tilde{v}(y)\cdot \frac{x-y}{|x-y|^3}\right)\mathcal{N}d\sigma(y)
	\]
	once more by means of the vector triple product rule. It then follows from \cite[Theorem 3.42]{DD12} and the Hardy-Littlewood-Sobolev inequality that we may assume that $\tilde{v}\in W^{1,2}\mathcal{V}(\Omega)\cap C^{\infty}\mathcal{V}(\Omega)$.
	
	Now fix any $x\in \Omega$ and let $B_{\epsilon}(x)\Subset \Omega$ (precompact in $\Omega$). Using the integration by parts formula for Sobolev functions we find, letting $n_i:=\mathcal{N}\cdot e_i$,
	\[
	\int_S\left(\tilde{v}\cdot \frac{x-y}{|x-y|^3}\right)n_id\sigma(y)-\int_{\partial B_{\epsilon}(x)}\left(\tilde{v}\cdot \frac{x-y}{|x-y|^3}\right)\frac{y_i-x_i}{|x-y|}d\sigma(y)=\int_{\Omega\setminus B_{\epsilon}(x)}\partial_{y_i}\left(\tilde{v}\cdot \frac{x-y}{|x-y|^3}\right)d^3y,
	\]
	i.e. we have
	\[
	\int_S\left(\tilde{v}\cdot \frac{x-y}{|x-y|^3}\right)\mathcal{N}d\sigma(y)-\int_{\partial B_{\epsilon}(x)}\left(\tilde{v}\cdot \frac{x-y}{|x-y|^3}\right)\frac{y-x}{|x-y|}d\sigma(y)=\int_{\Omega\setminus B_{\epsilon}(x)}\operatorname{grad}_y\left(\tilde{v}\cdot \frac{x-y}{|x-y|^3}\right)d^3y.
	\]
	Now, $\tilde{v}$ as well as $y\mapsto \frac{x-y}{|x-y|^3}$ are smooth on $\Omega\setminus B_{\epsilon}(x)$ so we can use standard vector calculus identities to express
	\[
	\operatorname{grad}_y\left(\tilde{v}\cdot \frac{x-y}{|x-y|^3}\right)=\nabla_{\tilde{v}}\frac{x-y}{|x-y|^3}+\nabla_{\frac{x-y}{|x-y|^3}}\tilde{v}+\tilde{v}\times \operatorname{curl}\left(\frac{x-y}{|x-y|^3}\right)+\frac{x-y}{|x-y|^3}\times \operatorname{curl}(\tilde{v}),
	\]
	where the derivatives are all taken with respect to $y$. First, we observe that $\operatorname{curl}\left(\frac{x-y}{|x-y|^3}\right)=0$ because $\frac{x-y}{|x-y|^3}=\operatorname{grad}_y\left(\frac{1}{|x-y|}\right)$. In addition, as $y\mapsto\frac{x-y}{|x-y|^3}$ is smooth on $\Omega\setminus B_{\epsilon}(x)$, we have the identity
	\begin{gather}
		\nonumber
		\int_{\Omega\setminus B_{\epsilon}(x)}\nabla_{\tilde{v}}\frac{x-y}{|x-y|^3}d^3y=\int_{\Omega\setminus B_{\epsilon}(x)}\tilde{v}^i(y)\partial_{y_i}\left(\frac{x^j-y^j}{|x-y|^3}\right)d^3y\text{ }e_j
		\\
		\nonumber
		=-\int_{\Omega\setminus B_{\epsilon}(x)}\operatorname{div}(\tilde{v})\frac{x-y}{|x-y|^3}d^3y+\int_{\partial B_{\epsilon}(x)}\frac{x-y}{|x-y|^3}\left(\tilde{v}(y)\cdot \frac{x-y}{|x-y|}\right)d\sigma(y)+\int_S\frac{x-y}{|x-y|^3}\left(\tilde{v}\cdot \mathcal{N}\right)d\sigma(y),
	\end{gather}
	where we used the Einstein summation convention. Combining our considerations so far we obtain
	\begin{gather}
		\nonumber
		\int_S\left(\tilde{v}\cdot \frac{x-y}{|x-y|^3}\right)\mathcal{N}d\sigma(y)
		\\
		\label{3E6}
		=\int_{\Omega\setminus B_{\epsilon}(x)}\nabla_{\frac{x-y}{|x-y|^3}}\tilde{v}+\frac{x-y}{|x-y|^3}\times \operatorname{curl}(\tilde{v})-\operatorname{div}(\tilde{v})\frac{x-y}{|x-y|^3}d^3y+\int_S\frac{x-y}{|x-y|^3}(\tilde{v}\cdot \mathcal{N})d\sigma(y).
	\end{gather}
	On the other hand
	\begin{gather}
		\nonumber
		\int_S\tilde{v}^j(y)\left(\frac{x-y}{|x-y|^3}\cdot \mathcal{N}(y)\right)d\sigma(y)-\int_{\partial B_{\epsilon}(x)}\tilde{v}^j(y)\left(\frac{x-y}{|x-y|^3}\cdot \frac{y-x}{|x-y|}\right)d\sigma(y)
		\\
		\nonumber
		=\int_{\Omega\setminus B_{\epsilon}(x)}\operatorname{div}\left(\tilde{v}^j(y)\frac{x-y}{|x-y|^3}\right)d^3y=\int_{\Omega\setminus B_{\epsilon}(x)}\frac{x^i-y^i}{|x-y|^3}\partial_i\tilde{v}^j(y)d^3y,
	\end{gather}
	where we used that $\operatorname{div}\left(\frac{x-y}{|x-y|^3}\right)=0$ as can be verified by explicit computation. Therefore,
	\[
	\int_{S}\left(\frac{x-y}{|x-y|^3}\cdot \mathcal{N}(y)\right)\tilde{v}(y)d\sigma(y)=\int_{\Omega\setminus B_{\epsilon}(x)}\nabla_{\frac{x-y}{|x-y|^3}}\tilde{v}d^3y-\int_{\partial B_{\epsilon}(x)}\frac{\tilde{v}(y)}{|x-y|^2}d\sigma(y).
	\]
	Combining this identity with (\ref{3E6}) we arrive at
	\begin{gather}
		\nonumber
		\int_S\left(\mathcal{N}\cdot \frac{x-y}{|x-y|^3}\right)\tilde{v}(y)-\left(\tilde{v}(y)\cdot \frac{x-y}{|x-y|^3}\right)\mathcal{N}d\sigma(y)
		\\
		\nonumber
		=\int_{\Omega\setminus B_{\epsilon}(x)}\operatorname{curl}(\tilde{v})\times\frac{x-y}{|x-y|^3} +\operatorname{div}(\tilde{v})\frac{x-y}{|x-y|^3}d^3y-\int_S\frac{x-y}{|x-y|^3}(\tilde{v}\cdot \mathcal{N})d\sigma(y)-\int_{\partial B_{\epsilon}(x)}\frac{\tilde{v}(y)}{|x-y|^2}d\sigma(y).
	\end{gather}
	As for the last integral we notice that
	\[
	\int_{\partial B_{\epsilon}(x)}\frac{\tilde{v}(y)}{|x-y|^2}d\sigma(y)=\frac{1}{\epsilon^2}\int_{\partial B_{\epsilon}(x)}\tilde{v}(y)d\sigma(y)=4\pi \dashint_{\partial B_{\epsilon}(x)}\tilde{v}(y)d\sigma(y),
	\]
	where $\dashint_{\partial B_{\epsilon}(x)}\equiv \frac{1}{\operatorname{Area}({\partial B_{\epsilon}(x))}}\int_{\partial B_{\epsilon}(x)}$. Since $x\in \Omega$ is an interior point and $\tilde{v}\in C^{\infty}\mathcal{V}(\Omega)$, we see that
	\[
	\lim_{\epsilon\searrow 0}\int_{\partial B_{\epsilon}(x)}\frac{\tilde{v}(y)}{|x-y|^2}d\sigma(y)=4\pi \tilde{v}(x).
	\]
	For the remaining integral we also use the smoothness of $\tilde{v}$ as follows
	\begin{gather}
		\nonumber
		\int_{\Omega\setminus B_{\epsilon}(x)}\operatorname{curl}(\tilde{v})\times\frac{x-y}{|x-y|^3} +\operatorname{div}(\tilde{v})\frac{x-y}{|x-y|^3}d^3y
		\\
		\nonumber
		=\int_{\Omega}\operatorname{curl}(\tilde{v})\times\frac{x-y}{|x-y|^3} +\operatorname{div}(\tilde{v})\frac{x-y}{|x-y|^3}d^3y-\int_{B_{\epsilon}(x)}\operatorname{curl}(\tilde{v})\times\frac{x-y}{|x-y|^3} +\operatorname{div}(\tilde{v})\frac{x-y}{|x-y|^3}d^3y
	\end{gather}
	and since $\tilde{v}$ is smooth on $\Omega$ and hence of class $C^1(\overline{B_{\epsilon}(x)})$ we can estimate
	\[
	\left|\int_{B_{\epsilon}(x)}\operatorname{curl}(\tilde{v})\times\frac{x-y}{|x-y|^3} +\operatorname{div}(\tilde{v})\frac{x-y}{|x-y|^3}d^3y\right|\leq C(x)\int_{B_{\epsilon}(x)}\frac{1}{|x-y|^2}d^3y
	\]
	for some constant $C(x)>0$ which will depend on $x$ but can be chosen independent of $\epsilon$. Translating the integral and working in spherical coordinates we see that $\lim_{\epsilon\searrow0}\int_{B_{\epsilon}(x)}\frac{1}{|x-y|^2}d\sigma(y)=0$. Therefore, taking the limit $\epsilon\searrow 0$, we conclude that
	\begin{gather}
		\nonumber
		4\pi\operatorname{BS}_S(j)(x)=\int_{\Omega}\operatorname{curl}(\tilde{v})\times \frac{x-y}{|x-y|^3}d^3y+\int_{\Omega}\operatorname{div}(\tilde{v})\frac{x-y}{|x-y|^3}d^3y-\int_S(\tilde{v}\cdot \mathcal{N})\frac{x-y}{|x-y|^3}d\sigma(y)-4\pi \tilde{v}(x).
	\end{gather}
	We observe now that
	\begin{gather}
		\nonumber
		\int_{\Omega}\operatorname{div}(\tilde{v})\frac{x-y}{|x-y|^3}d^3y-\int_S(\tilde{v}\cdot \mathcal{N})\frac{x-y}{|x-y|^3}d\sigma(y)
		\\
		\nonumber
		=-\nabla_x\left(\int_{\Omega}\frac{\operatorname{div}(\tilde{v})(y)}{|x-y|}d^3y-\int_S\frac{(\tilde{v}(y)\cdot \mathcal{N}(y))}{|x-y|}d\sigma(y)\right)=\nabla_x\int_{\Omega}\tilde{v}(y)\cdot \frac{x-y}{|x-y|^3}d^3y
	\end{gather}
	where we used an integration by parts in the last step. Therefore we arrive at
	\begin{gather}
		\nonumber
		4\pi\operatorname{BS}_S(j)(x)=\int_{\Omega}\operatorname{curl}(\tilde{v})\times \frac{x-y}{|x-y|^3}d^3y+\nabla_x\int_{\Omega}\tilde{v}(y)\cdot \frac{x-y}{|x-y|^3}d^3y-4\pi\tilde{v}(x)\text{ for a.e. }x\in \Omega.
	\end{gather}
	We recall that $\tilde{v}\in H^1\mathcal{V}(\Omega)$ was an arbitrary element satisfying $\mathcal{N}\times \tilde{v}=j$ on $S$. According to \Cref{4EXTRALemma} we have $T(j)\in H(\operatorname{curl},\Omega)$, $\operatorname{div}(T(j))=0$ and $\mathcal{N}\times T(j)=j$. It is then a consequence of \cite[Remark 2.14 \& Corollary 2.15]{ABDG98} that in fact $T(j)\in W^{1,2}\mathcal{V}(\Omega)$ and consequently we may set $\tilde{v}=T(j)$ and deduce the desired formula for the case $j\in H^1\mathcal{V}_0(S)$. In the general case we can for given $j\in L^2\mathcal{V}_0(S)$ find an $L^2(S)$-approximating sequence $(j_n)_n\subset H^1\mathcal{V}_0(S)$. Then obviously $\operatorname{BS}_S(j)(x)=\lim_{n\rightarrow\infty}\operatorname{BS}_S(j_n)(x)$ pointwise for every fixed $x\in \Omega$. By continuity $T(j_n)$ will converge in $L^2(\Omega)$ to $T(j)$ and thus, upon passing to a subsequence, $T(j_n)$ will converge pointwise almost everywhere to $T(j)$. Similarly, the Hardy-Littlewood-Sobolev inequality implies that $\int_{\Omega}\operatorname{curl}(T(j_n))\times \frac{x-y}{|x-y|^3}d^3y$ converges in $L^6$ to $\int_{\Omega}\operatorname{curl}(T(j))\times \frac{x-y}{|x-y|^3}d^3y$ because $\operatorname{curl}(T(j_n))$ converges to $\operatorname{curl}(T(j))$ in $L^2$ by properties of the operator $T$. Lastly, if we let $N(f)(x):=\frac{1}{4\pi}\int_\Omega\frac{f(y)}{|x-y|}d^3y$ denote the Newton potential of a function $f\in L^p(\Omega)$, then we observe that
	\begin{gather}
		\nonumber
		\partial_{x^i}\int_{\Omega}T(j)\cdot \frac{x-y}{|x-y|^3}d^3y=-\partial_{x^i}\partial_{x^k}\int_{\Omega}\frac{(T(j))^k(y)}{|x-y|}d^3y=-\partial_{x^i}\partial_{x^k}N((T(j))^k)
	\end{gather}
	where $(T(j))^k$ denotes the $k$-th component of $T(j)$. It then follows from the linearity and regularity of the Newton potential \cite[Theorem 9.9]{GT01} that $\nabla_x\int_{\Omega}T(j_n)\cdot \frac{x-y}{|x-y|^3}d^3y$ converges to $\nabla_x\int_{\Omega}T(j)\cdot \frac{x-y}{|x-y|^3}d^3y$ in $L^2(\Omega)$ because $T(j_n)$ converges to $T(j)$ in $L^2(\Omega)$. Hence, upon passing to subsequences, the involved terms will converge pointwise a.e. and the claim follows.
\end{proof}
We can now combine \Cref{4EXTRALemma} and \Cref{3L14} to prove the following main ingredient needed to obtain the upper bound on the dimension of the kernel of the Biot-Savart operator.
\begin{cor}
	\label{3C16}
	Let $\Omega\subset \mathbb{R}^3$ be a bounded $C^{1,1}$-domain. If $j\in \operatorname{Ker}\left(\operatorname{BS}_{\partial\Omega}\right)$ and if
	\[
	\int_{\partial\Omega}j\cdot \Gamma d\sigma(y)=0\text{ for all }\Gamma\in\mathcal{H}_N(\Omega),
	\]
	then $j=0$.
\end{cor}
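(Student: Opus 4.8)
The plan is to combine the extension operator of \Cref{4EXTRALemma} with the representation formula of \Cref{3L14}. Set $w:=T(j)$. Since $j$ satisfies $\int_{\partial\Omega}j\cdot\Gamma\,d\sigma=0$ for all $\Gamma\in\mathcal{H}_N(\Omega)$, the second part of \Cref{4EXTRALemma} gives $\operatorname{curl}(w)=0$, while in general $\operatorname{div}(w)=0$, $\mathcal{N}\times w=j$, and (by the regularity quoted in the proof of \Cref{3L14} via \cite[Remark 2.14 \& Corollary 2.15]{ABDG98}) $w\in W^{1,2}\mathcal{V}(\Omega)$. Feeding $\operatorname{curl}(w)=0$ and $\operatorname{BS}_{\partial\Omega}(j)=0$ into the identity of \Cref{3L14}, the curl-integral drops out and one obtains $w=\nabla\phi$ a.e.\ on $\Omega$, where $\phi(x):=\frac{1}{4\pi}\int_\Omega w(y)\cdot\frac{x-y}{|x-y|^3}d^3y$ is defined for every $x\in\mathbb{R}^3$. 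Thus $w$ is forced to be an \emph{exact} gradient whose potential is explicitly this dipole-type potential, and the task reduces to proving $w\equiv 0$, since then $j=\mathcal{N}\times w=0$.

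The next step is to exploit the explicit global form of $\phi$. Writing $W:=\chi_\Omega w$ and using $\frac{x-y}{|x-y|^3}=-\nabla_x\frac{1}{|x-y|}$, one identifies $\phi=-\operatorname{div}(N(W))$ with $N(W)$ the (vector) Newton potential of $W$, so that $-\Delta N(W)=W$. Since $w\in W^{1,2}\mathcal{V}(\Omega)\subset L^6$, standard Newton-potential regularity gives $N(W)\in W^{2,p}_{\mathrm{loc}}\subset C^{1,\alpha}_{\mathrm{loc}}$ for suitable $p>3$; hence $\phi$ is continuous across $\partial\Omega$, harmonic in $\Omega$ and in $\Omega^e:=\mathbb{R}^3\setminus\overline\Omega$, and satisfies $|\phi(x)|=O(|x|^{-2})$ and $|\nabla\phi(x)|=O(|x|^{-3})$ as $|x|\to\infty$. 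A distributional computation using $\operatorname{div}(w)=0$ in $\Omega$ yields $\Delta\phi=\operatorname{div}(W)=-(w\cdot\mathcal{N})\delta_{\partial\Omega}$, i.e.\ $\phi$ is a single-layer potential with density $-(w\cdot\mathcal{N})$. The associated jump relation reads $\partial_{\mathcal{N}}\phi^e-\partial_{\mathcal{N}}\phi^i=-(w\cdot\mathcal{N})$ with $\mathcal{N}$ the outward normal of $\Omega$; since the interior trace is $\partial_{\mathcal{N}}\phi^i=\mathcal{N}\cdot\nabla\phi=w\cdot\mathcal{N}$, the exterior normal derivative vanishes, $\partial_{\mathcal{N}}\phi^e=0$ on $\partial\Omega$.

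To conclude I would observe that $\phi^e$ is harmonic on $\Omega^e$, decays at infinity, and has vanishing Neumann data; the uniqueness of the exterior Neumann problem (the energy identity $\int_{\Omega^e}|\nabla\phi|^2=-\int_{\partial\Omega}\phi^e\,\partial_{\mathcal{N}}\phi^e\,d\sigma=0$, with the boundary-at-infinity term killed by the decay) forces $\nabla\phi=0$ on $\Omega^e$, hence $\phi\equiv 0$ there and, by continuity, $\phi|_{\partial\Omega}=0$. Integrating by parts on $\Omega$ and using $\operatorname{div}(w)=0$ then gives $\int_\Omega|w|^2\,d^3x=\int_\Omega\nabla\phi\cdot w\,d^3x=\int_{\partial\Omega}\phi\,(w\cdot\mathcal{N})\,d\sigma=0$, so $w=0$ and therefore $j=\mathcal{N}\times w=0$.

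The hard part will be making the layer-potential/jump analysis rigorous at merely $W^{1,2}$-regularity of $w$ (so that $w\cdot\mathcal{N}$ is only a trace in $L^2$ or $H^{-1/2}$) and justifying the exterior Neumann uniqueness together with the decay-controlled vanishing of the term at infinity. A technically cleaner variant that sidesteps the explicit jump relations is to argue globally on $\mathbb{R}^3$: the Helmholtz decomposition $W=\operatorname{curl}(\operatorname{BS}_\Omega(w))+\nabla\phi$ (valid since $W$ is compactly supported and square-integrable) is $L^2(\mathbb{R}^3)$-orthogonal, whence $\int_\Omega|w|^2=\int_{\mathbb{R}^3}\nabla\phi\cdot W=\int_{\mathbb{R}^3}|\nabla\phi|^2\ge\int_\Omega|\nabla\phi|^2=\int_\Omega|w|^2$; this forces $\nabla\phi=0$ on $\Omega^e$, and one then concludes $\phi|_{\partial\Omega}=0$ and $w=0$ exactly as above.
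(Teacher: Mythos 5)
Your first steps coincide with the paper's own proof: you take $w=T(j)$, use the second part of \Cref{4EXTRALemma} to get $\operatorname{curl}(w)=0$, feed this and $\operatorname{BS}_{\partial\Omega}(j)=0$ into \Cref{3L14} to obtain $w=\nabla\phi$ in $\Omega$ with $\phi(x)=\frac{1}{4\pi}\int_\Omega w(y)\cdot\frac{x-y}{|x-y|^3}d^3y$, and then an energy/orthogonality argument (your second, Helmholtz variant is essentially the paper's computation, done there rigorously at $W^{1,2}$-regularity by approximating $j$ by $H^1$ currents and working in $\dot{W}^{1,2}(\mathbb{R}^3)$) to force $\nabla\phi=0$ on $\mathbb{R}^3\setminus\overline{\Omega}$. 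The gap is in the endgame. From $\nabla\phi=0$ outside and decay at infinity you conclude $\phi\equiv 0$ on all of $\mathbb{R}^3\setminus\overline{\Omega}$, hence $\phi|_{\partial\Omega}=0$, hence $w=0$. This is valid only when $\mathbb{R}^3\setminus\overline{\Omega}$ is connected, i.e.\ when $\partial\Omega$ is connected (cf.\ \cite{Li88}). The corollary, however, is stated for arbitrary bounded $C^{1,1}$-domains. If $\partial\Omega$ is disconnected, $\phi$ is only \emph{locally constant} on the complement: it vanishes on the unbounded component but may equal nonzero constants $c_k$ on the bounded components, so your final integration by parts only gives $\int_\Omega|w|^2=\sum_k c_k\Phi_k$ with $\Phi_k$ the flux of $w$ through the $k$-th boundary component, and neither $c_k=0$ nor $\Phi_k=0$ is available.

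Moreover the gap is not merely technical: the facts you have established about $w$ genuinely do not imply $w=0$. Take $\Omega=\{1<|x|<2\}$, set $\phi=c\left(\tfrac{2}{|x|}-1\right)$ in $\Omega$, $\phi=c$ in $\{|x|<1\}$, $\phi=0$ in $\{|x|>2\}$, and $w=\nabla\phi$ on $\Omega$. Then $w$ is curl- and div-free, a direct single-layer computation shows that this $\phi$ is exactly the potential $\frac{1}{4\pi}\int_\Omega w(y)\cdot\frac{x-y}{|x-y|^3}d^3y$ generated by $w$, one has $\nabla\phi=0$ off $\overline{\Omega}$ and the correct decay, yet $w\neq 0$. (Here $\mathcal{N}\times w=0$, so this does not contradict \Cref{3C16}; it shows that ``$w=0$'' is the wrong intermediate target, since $w\neq0$ is possible for curl-free, div-free extensions of the zero tangential data other than $T(0)$.) The repair is exactly what the paper does: since $\phi$ has matching traces from both sides of $\partial\Omega$ and is locally constant on the complement, its trace on $\partial\Omega$ is locally constant, so its tangential gradient vanishes and therefore $j=\mathcal{N}\times w=\mathcal{N}\times\nabla\phi=0$ directly --- no claim about $w$ itself is needed (once $j=0$ is known, $w=T(j)=0$ follows from linearity of $T$, but not conversely). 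With that replacement, or under the additional hypothesis that $\partial\Omega$ is connected, your argument closes; your first (jump-relation) variant has the further low-regularity issues you yourself flagged, which the energy variant avoids.
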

\begin{proof}[Proof of \cref{3C16}]
	Given $j\in \operatorname{Ker}(\operatorname{BS}_{\partial\Omega})$ with $\int_{\partial\Omega}j\cdot \Gamma d\sigma=0$ for all $\Gamma\in \mathcal{H}_N(\Omega)$ we denote by $T(j)\in H(\operatorname{curl},\Omega)$ the vector field from \Cref{4EXTRALemma} which satisfies $\mathcal{N}\times T(j)=j$ on $\partial\Omega$ and $\operatorname{curl}(T(j))=0$ in $\Omega$ by our orthogonality assumption on $j$. It then follows from \Cref{3L14} that
	\[
	4\pi\operatorname{BS}_{\partial\Omega}(j)=\int_{\Omega}\operatorname{curl}(T(j))\times \frac{x-y}{|x-y|^3}d^3y+\nabla_x\int_{\Omega}T(j)\cdot\frac{x-y}{|x-y|^3}d^3y-4\pi T(j)\text{ in }\Omega.
	\]
	Since $\operatorname{curl}(T(j))=0$ and $j\in \operatorname{Ker}\left(\operatorname{BS}_{\partial\Omega}\right)$ we obtain the identity
	\begin{gather}
		\label{EQUATIONextra2}
	T(j)(x)=\frac{1}{4\pi}\nabla_x\int_\Omega T(j)(y)\cdot \frac{x-y}{|x-y|^3}d^3y \text{, }x\in \Omega.
	\end{gather}
	We set $\dot{W}^{1,2}(\mathbb{R}^3):=\{f\in L^6(\mathbb{R}^3)|\operatorname{grad}(f)\in L^2\mathcal{V}(\mathbb{R}^3)\}$ and $\psi(x):=\frac{1}{4\pi}\int_{\Omega}T(j)(y)\cdot \frac{x-y}{|x-y|^3}d^3y$ and note that the Hardy-Littlewood-Sobolev inequality implies that $\psi\in L^6(\mathbb{R}^3)$ while the regularising properties of the Newton potential imply that $\nabla_x\psi(x)\in L^2\mathcal{V}(\mathbb{R}^3)$ and consequently $\psi\in \dot{W}^{1,2}(\mathbb{R}^3)$.
	
	Our goal now will be to show that we must have $T(j)\perp \partial\Omega$, i.e. $\mathcal{N}\times T(j)=0$, which will imply $j=\mathcal{N}\times T(j)=0$.
	
	Before we give a proof of the fact that $T(j)\perp \partial\Omega$ let us refer the reader to the proofs of \cite[Proposition 2 \& Theorem B]{CDG01} where a similar statement in a more regular setting was derived. The following is an alternative, shorter argument, to establish the same result which works for less regular vector fields.
	\newline
	\newline
	Let $E:=-\nabla \psi$ and $\phi\in C^1_c(\mathbb{R}^3)$ be any fixed compactly supported function. We can then approximate $j$ by vector fields $(j_n)_n\subset H^1\mathcal{V}_0(S)$ in $L^2(S)$. Letting $\psi_n(x):=\int_{\Omega}T(j_n)\cdot \frac{x-y}{|x-y|^3}d^3y$ and $E_n:=-\nabla_x\psi_n$ we know, due to the properties of the Newton potential \cite[Theorem 9.9]{GT01}, that the $E_n$ converge to $E$ in $L^2(\mathbb{R}^3)$. Then using the Fubini-Tonelli theorem and arguing in the same spirit as at the end of the proof of \Cref{3T7} we find
	\begin{gather}
		\nonumber
		\int_{\mathbb{R}^3}\nabla \phi \cdot E_nd^3x=-\int_{\Omega}\operatorname{div}(T(j_n))(y)\left(\frac{1}{4\pi}\int_{\mathbb{R}^3}\nabla \phi(x)\cdot \frac{x-y}{|x-y|^3}d^3x\right)d^3y
		\\
		\nonumber
		+\int_{\partial\Omega}\left(\mathcal{N}(y)\cdot T(j_n)(y)\right)\left(\frac{1}{4\pi}\int_{\mathbb{R}^3}\nabla \phi(x)\cdot \frac{x-y}{|x-y|^3}d^3x\right)d\sigma(y)
		\\
		\nonumber
		=\int_{\Omega}\operatorname{div}(T(j_n))(y)\phi(y)d^3y-\int_{\partial\Omega}(\mathcal{N}(y)\cdot T(j_n)(y))\phi(y)d\sigma(y)=-\int_{\Omega}T(j_n)(y)\cdot \nabla \phi(y)d^3y,
	\end{gather}
	where we first used an integration by parts, differentiated under the integral sign and made use of the fact that $T(j_n)\in H^1\mathcal{V}(\Omega)$ as explained in the proof of \Cref{3L14}. Taking the limit we conclude
	\begin{gather}
		\nonumber
		\int_{\mathbb{R}^3}\nabla \phi \cdot Ed^3x=-\int_{\Omega}T(j)(x)\cdot \nabla \phi(x)d^3x\text{ for all }\phi\in C^1_{c}(\mathbb{R}^3).
	\end{gather}
	This identity in combination with the fact that $C^1_c(\mathbb{R}^3)$ is dense in $\dot{W}^{1,2}(\mathbb{R}^3)$, in the sense that for every $f\in \dot{W}^{1,2}(\mathbb{R}^3)$ there is a sequence $(f_n)_n\subset C^1_c(\mathbb{R}^3)$ with $f_n\rightarrow f$ in $L^6(\mathbb{R}^3)$ and $\operatorname{grad}(f_n)\rightarrow \operatorname{grad}(f)$ in $L^2(\mathbb{R}^3)$, yields, by setting $\phi=\psi$,
	\[
	-\|E\|^2_{L^2(\mathbb{R}^3)}=\int_{\Omega}T(j)\cdot Ed^3x,
	\]
	where we used that $E=-\nabla \psi$. Now (\ref{EQUATIONextra2}) yields $\|E\|^2_{L^2(\mathbb{R}^3\setminus \overline{\Omega})}=0$, i.e. $-\nabla_x\psi=E=0$ in $\overline{\Omega}^c$. Hence, $\psi$ is constant on each connected component of $\overline{\Omega}^c$. Since $\psi\in \dot{W}^{1,2}(\mathbb{R}^3)$, this in particular implies that the trace of $\psi$ when viewed as a function on $\Omega$ is a locally constant function on $\partial\Omega$. In addition, $\nabla \psi=T(j)$ on $\Omega$ from which one easily infers that $j=\mathcal{N}\times T(j)=0$ on $\partial\Omega$.
\end{proof}
Let us now state the desired upper bound as a stand alone result.
\begin{cor}
	\label{3C17}
	Let $\Omega\subset \mathbb{R}^3$ be a bounded $C^{1,1}$-domain. Then $\dim\left(\operatorname{Ker}(\operatorname{BS}_{\partial\Omega})\right)\leq \dim\left(\mathcal{H}_N(\Omega)\right)$.
\end{cor}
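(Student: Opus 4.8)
The plan is to read \Cref{3C16} not as a vanishing statement but as an \emph{injectivity} statement for a suitable linear pairing, and then to conclude by elementary linear algebra. Indeed, \Cref{3C16} says that the only current in $\operatorname{Ker}(\operatorname{BS}_{\partial\Omega})$ which is $L^2(\partial\Omega)$-orthogonal to all harmonic Neumann fields is the zero current. This is precisely the kernel-triviality of the map that sends a current to the tuple of its pairings against a basis of $\mathcal{H}_N(\Omega)$, and a linear injection into a finite-dimensional space bounds the dimension of its domain.

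Concretely, first I would recall that $\mathcal{H}_N(\Omega)$ is finite-dimensional, set $N:=\dim(\mathcal{H}_N(\Omega))$ and fix a basis $\Gamma_1,\dots,\Gamma_N$ of $\mathcal{H}_N(\Omega)$. Since each $\Gamma_k\in H^1\mathcal{V}(\Omega)$, its trace on $\partial\Omega$ lies in $L^2\mathcal{V}(\partial\Omega)$ by the trace theorem, so that for every $j\in L^2\mathcal{V}_0(\partial\Omega)$ the numbers $\int_{\partial\Omega}j\cdot\Gamma_k\,d\sigma$ are well-defined (both factors are tangent to $\partial\Omega$ and square integrable). This lets me introduce the linear map
\[
L:\operatorname{Ker}(\operatorname{BS}_{\partial\Omega})\rightarrow\mathbb{R}^N\text{, }j\mapsto\left(\int_{\partial\Omega}j\cdot\Gamma_1\,d\sigma,\dots,\int_{\partial\Omega}j\cdot\Gamma_N\,d\sigma\right).
\]

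Next I would check that $L$ is injective. If $L(j)=0$ for some $j\in\operatorname{Ker}(\operatorname{BS}_{\partial\Omega})$, then $\int_{\partial\Omega}j\cdot\Gamma_k\,d\sigma=0$ for each $k$, and by linearity in the second argument $\int_{\partial\Omega}j\cdot\Gamma\,d\sigma=0$ for every $\Gamma\in\mathcal{H}_N(\Omega)$. \Cref{3C16} then forces $j=0$, so $\operatorname{Ker}(L)=\{0\}$.

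Finally, an injective linear map into $\mathbb{R}^N$ restricts to an isomorphism onto its image $L(\operatorname{Ker}(\operatorname{BS}_{\partial\Omega}))\leq\mathbb{R}^N$, whence $\dim(\operatorname{Ker}(\operatorname{BS}_{\partial\Omega}))=\dim\left(L(\operatorname{Ker}(\operatorname{BS}_{\partial\Omega}))\right)\leq N=\dim(\mathcal{H}_N(\Omega))$. I do not anticipate any genuine obstacle here, as \Cref{3C16} already carries all of the analytic content; the only points requiring a routine check are the well-definedness of the pairing via the trace theorem and the observation that no a priori finiteness of $\dim(\operatorname{Ker}(\operatorname{BS}_{\partial\Omega}))$ need be assumed — it emerges as a byproduct of the injectivity of $L$.
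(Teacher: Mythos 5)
Your proposal is correct, and it rests on exactly the same analytic ingredient as the paper, namely \Cref{3C16}; the difference lies entirely in how the linear algebra around that lemma is organised. The paper proceeds by a hands-on recursive orthogonalisation: it works with the space of boundary restrictions $\mathcal{H}_N(\Omega)|_{\partial\Omega}$ (of dimension $n\leq\dim(\mathcal{H}_N(\Omega))$), takes $n+1$ arbitrary elements of $\operatorname{Ker}(\operatorname{BS}_{\partial\Omega})$, and runs a Gram--Schmidt-type procedure, invoking \Cref{3C16} at each stage to show the relevant orthogonal complements are one-dimensional, until it produces a kernel element orthogonal to all of $\mathcal{H}_N(\Omega)$, which must then vanish. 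Your argument replaces all of this by the single observation that the pairing map $L:j\mapsto\left(\int_{\partial\Omega}j\cdot\Gamma_k\,d\sigma\right)_{k=1}^N$ is linear and, by \Cref{3C16}, injective, so that $\dim\left(\operatorname{Ker}(\operatorname{BS}_{\partial\Omega})\right)\leq N$ follows at once; this is shorter, avoids the bookkeeping of the recursion, and, as you note, needs no a priori finiteness of the kernel's dimension (the paper's formulation implicitly handles this too, since it shows any $n+1$ elements are dependent). The only thing the paper's route buys is the formally sharper intermediate bound by $\dim\left(\mathcal{H}_N(\Omega)|_{\partial\Omega}\right)$, which is irrelevant for the final statement (and a posteriori equal to $\dim(\mathcal{H}_N(\Omega))$ by the lower bound proved later). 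Your well-definedness check of the pairing via the trace theorem is the right routine point to flag; in fact the paper's \Cref{ALemma1} gives $\mathcal{H}_N(\Omega)\subset W^{1,p}\mathcal{V}(\Omega)$ for all $p<\infty$, so the traces are even H\"older continuous.
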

\begin{proof}[Proof of \Cref{3C17}]
	Fix any basis $\Gamma_1,\dots,\Gamma_{\dim\left(\mathcal{H}_N(\Omega)\right)}$ of $\mathcal{H}_N(\Omega)$. We can then restrict each $\Gamma_i$ to the boundary to obtain well-defined vector fields on $\partial\Omega$. These vector fields will span a subspace of the square integrable vector fields on $\partial\Omega$. We denote this subspace by $\mathcal{H}_N(\Omega)|_{\partial\Omega}$ and we set $n:=\dim\left(\mathcal{H}_N(\Omega)|_{\partial\Omega}\right)$. We note that $n\leq \dim\left(\mathcal{H}_N(\Omega)\right)$. We now suppose that we are given $j_i\in \operatorname{Ker}(\operatorname{BS}_{\partial\Omega})$, $1\leq i\leq n+1$. Our goal is to show that any such collection of vector fields must be linearly dependent.

	If $j_1=0$ we are done. If $j_1\neq 0$ we define $V_1:=\left\{\Gamma\in \mathcal{H}_N(\Omega)|_{\partial\Omega}|\langle \Gamma,j_1\rangle_{L^2(\partial\Omega)}=0\right\}$ and we claim that $\dim\left(V_1^\perp\right)=1$, where $V_1^\perp$ denotes the $L^2(\partial\Omega)$-orthogonal complement of $V_1$ within $\mathcal{H}_N(\Omega)|_{\partial\Omega}$. To see this we first note that if $\Gamma_1,\Gamma_2\in V^\perp_1\setminus \{0\}$, then by definition of $V_1$ we must have $\int_{\partial\Omega}j_1\cdot \Gamma_kd\sigma\neq 0$ for $k=1,2$ and consequently setting $\Gamma:=\int_{\partial\Omega}j_1\cdot \Gamma_2d\sigma\Gamma_1-\int_{\partial\Omega}j_1\cdot \Gamma_1d\sigma\Gamma_2\in V_1^\perp$ we find $\int_{\partial\Omega}j_1\cdot \Gamma d\sigma=0$ so that $\Gamma\in V_1$. Therefore $\Gamma\in V_1\cap V_1^\perp=\{0\}$ and hence $\Gamma_1$ and $\Gamma_2$ are linearly dependent. On the other hand, since $j_1\neq 0$ lies in the kernel of the Biot-Savart operator, it follows from \Cref{3C16} that $\dim(V^\perp_1)\geq 1$ and so overall this space is precisely $1$-dimensional. We can now fix any $\Gamma_1\in V^\perp_1$ with $\int_{\partial\Omega}j_1\cdot \Gamma_1d\sigma=1$ spanning this space. We can then consider $j^1_m:=j_m-\int_{\partial\Omega}j_m\cdot \Gamma_1d\sigma j_1$ for $2\leq m\leq n+1$ and observe that $\langle j^1_m,\Gamma_1\rangle_{L^2(\partial\Omega)}=0$ and that $\dim\left(V_1\right)=n-1$.
	
	Now if $j^1_2=0$ this means that $j_2$ and $j_1$ are linearly dependent and we are done. So we may assume $j^1_2\neq 0$. In that case we can repeat the above procedure by setting $V_2:=\{\Gamma\in V_1|\langle \Gamma,j^1_2\rangle_{L^2(\partial\Omega)}=0\}$ and we let $V_2^\perp$ denote the $L^2(\partial\Omega)$-orthogonal complement of $V_2$ within $V_1$. It now similarly follows that $\dim\left(V^\perp_2\right)=1$ and we can fix some $\Gamma_2\in V^\perp_2\subset V_1$ with $\int_{\partial\Omega}j^1_2\cdot \Gamma_2d\sigma=1$. We then define accordingly $j^2_m:=j^1_m-\int_{\partial\Omega}j^1_m\cdot \Gamma_2d\sigma j^1_2$ for $3\leq m\leq n+1$. It follows accordingly that $\langle j^2_m,\Gamma_k\rangle_{L^2(\partial\Omega)}=0$ for all $3\leq m\leq n+1$ and $1\leq k\leq 2$.
	
	Again, if $j^2_3=0$ we see that $j_1,j_2,j_3$ must have been linearly dependent and we are done. So overall either $j_1,\dots,j_n$ are linearly dependent (in which case we are done) or otherwise, after repeating the above procedure $n$-times we obtain an $L^2(\partial\Omega)$-orthogonal basis $\Gamma_1,\dots \Gamma_n$ of $\mathcal{H}_N(\Omega)|_{\partial\Omega}$ and suitable constants $\alpha_i\in \mathbb{R}$, $1\leq i\leq n$ such that $j_{n+1}-\sum_{i=1}^n\alpha_ij_i$ is $L^2(\partial\Omega)$-orthogonal to all $\Gamma_k$, $1\leq k\leq n$. Hence, setting $j:=j_{n+1}-\sum_{i=1}^n\alpha_ij_i$, $\int_{\partial\Omega}j\cdot \Gamma d\sigma=0$ for all $\Gamma \in \mathcal{H}_N(\Omega)$ and since $j$ is in the kernel of the Biot-Savart operator (recall $\operatorname{BS}_{\partial\Omega}$ is a linear operator) it follows from \Cref{3C16} that we must have $j=0$ and hence in any case $j_1,\dots,j_{n+1}$ must be linearly dependent. We conclude $\dim\left(\operatorname{Ker}(\operatorname{BS}_{\partial\Omega})\right)\leq \dim\left(\mathcal{H}_N(\Omega)|_{\partial\Omega}\right)\leq \dim\left(\mathcal{H}_N(\Omega)\right)$.
\end{proof}
\subsection{$\dim\left(\operatorname{Ker}(\operatorname{BS}_{\partial\Omega})\geq \dim\left(\mathcal{H}_N(\Omega)\right)\right)$}
\begin{prop}
	\label{3P18}
	Let $\Omega\subset \mathbb{R}^3$ be a bounded $C^{1,1}$-domain. Then $\dim\left(\operatorname{Ker}(\operatorname{BS}_{\partial\Omega})\right)\geq \dim\left(\mathcal{H}_N(\Omega)\right)$.
\end{prop}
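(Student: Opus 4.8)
The plan is to produce an injective linear map into $\operatorname{Ker}(\operatorname{BS}_{\partial\Omega})$ whose domain has dimension $g(\partial\Omega)=\dim\mathcal{H}_N(\Omega)$; together with the upper bound \Cref{3C17} this will give the desired inequality (and, in fact, equality). Write $\Omega^+:=\mathbb{R}^3\setminus\overline{\Omega}$ and let $\mathcal{H}_N^{\mathrm{ext}}$ denote the space of vector fields $W$ on $\Omega^+$ which are curl-free, divergence-free, tangent to $S:=\partial\Omega$ and decay at infinity, i.e. the exterior analogue of a harmonic Neumann field. The source of kernel elements will be the \emph{current map} $\kappa\colon\mathcal{H}_N^{\mathrm{ext}}\to L^2\mathcal{V}_0(S)$, $W\mapsto\mathcal{N}\times W|_S$, and the heart of the argument is to show that $\operatorname{BS}_S(\kappa(W))=0$ in $\Omega$ for every $W\in\mathcal{H}_N^{\mathrm{ext}}$.

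First I would record that $\kappa$ lands in $L^2\mathcal{V}_0(S)$: since $W$ is an exterior harmonic field, elliptic regularity up to the $C^{1,1}$-boundary makes its trace well-defined and of class $L^2(S)$, the field $\mathcal{N}\times W$ is tangent to $S$, and $\operatorname{div}_S(\mathcal{N}\times W)=-(\operatorname{curl}W)\cdot\mathcal{N}|_S=0$. To establish the vanishing I would compare $\operatorname{BS}_S(\kappa(W))$ with the globally defined field $G_0:=\chi_{\Omega^+}W$ (extended by zero inside $\Omega$) and set $G:=\operatorname{BS}_S(\kappa(W))-G_0$. Using the jump relations for the surface Biot-Savart potential from Appendix C, the field $\operatorname{BS}_S(\kappa(W))$ has continuous normal trace and tangential jump across $S$ equal to the current $\kappa(W)=\mathcal{N}\times W$; the corresponding quantities for $G_0$ are, respectively, continuous (because $W\cdot\mathcal{N}=0$) and equal to $\mathcal{N}\times(W-0)=\mathcal{N}\times W$. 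Hence $G$ carries no distributional surface divergence or surface curl, so it is a globally divergence- and curl-free $L^2_{\mathrm{loc}}$-field on $\mathbb{R}^3$ which decays at infinity (both $\operatorname{BS}_S(\kappa(W))$ and $W$ do; in fact $\int_S\kappa(W)\,d\sigma=0$, forcing $\operatorname{BS}_S(\kappa(W))$ to decay like $|x|^{-3}$). Writing $G=\nabla h$ with $h$ harmonic, which is legitimate since $\mathbb{R}^3$ is simply connected, the energy/Liouville argument already used at the end of the proof of \Cref{3T7} forces $G\equiv 0$ (here it is even cleaner, as one integrates over all of $\mathbb{R}^3$ and there is no boundary contribution on $S$). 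Consequently $\operatorname{BS}_S(\kappa(W))=\chi_{\Omega^+}W$, which vanishes on $\Omega$, so $\kappa(W)\in\operatorname{Ker}(\operatorname{BS}_{\partial\Omega})$.

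Next I would check injectivity of $\kappa$: if $\mathcal{N}\times W|_S=0$ then $\operatorname{BS}_S(\kappa(W))=0$ globally by the identity just derived, hence $\chi_{\Omega^+}W=0$ and $W=0$. It then remains to compute $\dim\mathcal{H}_N^{\mathrm{ext}}$. Adding the point at infinity turns $\Omega^+$ into the compact manifold $S^3\setminus\Omega$ with boundary $S$, and the decaying exterior harmonic Neumann fields correspond exactly to its harmonic Neumann fields; by the Hodge/de Rham identification used for $\Omega$ one obtains $\dim\mathcal{H}_N^{\mathrm{ext}}=\dim H^1_{\mathrm{dR}}(S^3\setminus\Omega)=g(S)=\dim\mathcal{H}_N(\Omega)$, the two middle equalities being the genus computation cited in the introduction together with Poincar\'e--Lefschetz duality on $S$. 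Thus $\kappa$ embeds a $g(S)$-dimensional space into the kernel, giving $\dim\operatorname{Ker}(\operatorname{BS}_{\partial\Omega})\geq g(S)=\dim\mathcal{H}_N(\Omega)$; composing the resulting bijection (via \Cref{3C17}) with the $L^2(S)$-pairing map $j\mapsto\int_S j\cdot(\,\cdot\,)\,d\sigma$ from the upper-bound argument then yields the explicit isomorphism promised in the remark.

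The step I expect to be the main obstacle is the rigorous justification of $\operatorname{BS}_S(\kappa(W))=\chi_{\Omega^+}W$ at the available regularity: one must make sense of the normal and tangential traces of $\operatorname{BS}_S(\kappa(W))$ and of $\chi_{\Omega^+}W$ in the appropriate trace space (this is precisely why the tangent-trace machinery of Appendix C is needed) and verify that matching the normal trace and the tangential jump across $S$, together with the decay, forces $G\equiv 0$ rather than merely $G$ harmonic. A secondary technical point is confirming that the decaying exterior harmonic Neumann fields are captured correctly by the compactification, i.e. that no harmonic field is lost or spuriously created at infinity, which is what underwrites the count $\dim\mathcal{H}_N^{\mathrm{ext}}=g(S)$.
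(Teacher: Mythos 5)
Your core mechanism is sound and genuinely different in organization from the paper's proof: you produce kernel elements as $\kappa(W)=\mathcal{N}\times W|_S$ for decaying exterior harmonic Neumann fields $W$ on $\Omega^+:=\mathbb{R}^3\setminus\overline{\Omega}$, and the identity $\operatorname{BS}_S(\kappa(W))=\chi_{\Omega^+}W$ can indeed be made rigorous: both fields are divergence-free as distributions on $\mathbb{R}^3$ (the normal traces match because $W\cdot\mathcal{N}=0$), both have distributional curl equal to the same surface current $(\mathcal{N}\times W)\,\delta_S$ (surface Dirac measure on $S$), so their difference is an $L^2_{\operatorname{loc}}$ curl- and divergence-free field on all of $\mathbb{R}^3$ decaying at infinity, hence zero by Liouville. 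Two caveats: Appendix C contains no jump relations for the surface Biot--Savart potential (only trace and continuity statements), so this distributional computation must be carried out by hand, e.g.\ by writing $\operatorname{BS}_S(j)$ as the curl of the single-layer potential; and one needs regularity of $W$ up to the $C^{1,1}$-boundary (an exterior analogue of \Cref{ALemma1}) for the surface terms to make sense. Injectivity of $\kappa$ then follows exactly as you say. Incidentally, the paper's construction is the same idea in concrete form: its current $j_0=\operatorname{BS}_\Omega(\Gamma)\times\mathcal{N}-(\nabla_{\partial\Omega}g)^\perp$ is, up to sign, $\mathcal{N}\times W$ for the exterior field $W$ obtained by correcting $\operatorname{BS}_\Omega(\Gamma)|_{\Omega^+}$ by a gradient so that it becomes tangent to $S$.

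The genuine gap is the dimension count $\dim\mathcal{H}_N^{\mathrm{ext}}\geq g(S)$, on which your whole lower bound rests. Your justification --- one-point compactification plus the claim that decaying exterior harmonic Neumann fields ``correspond exactly'' to harmonic Neumann fields of $S^3\setminus\Omega$ --- does not work as stated: harmonicity of a vector field means the associated $1$-form is closed and co-closed, and co-closedness is \emph{not} conformally invariant in dimension $3$ (under $\tilde g=e^{2\varphi}g$ the Hodge star on $k$-forms in dimension $n$ scales by $e^{(n-2k)\varphi}$, which is non-trivial for $n=3$, $k=1$), so Euclidean-harmonic decaying fields on $\Omega^+$ do not transform into harmonic fields for any compactified metric, and no actual correspondence is exhibited. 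What is needed is an exterior Hodge theorem: for each class in $H^1_{\mathrm{dR}}(\Omega^+)$ a decaying, curl- and divergence-free representative tangent to $S$ exists. This is neither in the paper (Appendices A--B treat bounded domains only) nor supplied by you; without it $\mathcal{H}_N^{\mathrm{ext}}$ could a priori be too small and your argument yields nothing. Note that this missing existence statement is essentially what the paper actually proves: it fills exactly this hole constructively, building the would-be exterior field from an interior $\Gamma\in\mathcal{H}_N(\Omega)$ via the volume potential $\operatorname{BS}_\Omega(\Gamma)$ and the double-layer integral equation (\ref{3E10}) for the gradient correction, then certifying non-triviality and linear independence through the pairing $\int_{\partial\Omega}j_0\cdot\Gamma\,d\sigma=-\|\Gamma\|_{L^2(\Omega)}^2$. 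To complete your proof you would either have to import a genuine exterior Hodge/weighted-Sobolev theorem from the literature, or carry out such a construction --- at which point you have essentially reproduced the paper's argument.
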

The proof of \Cref{3P18} consists of 4 steps. To ease the exposition we will assume at some instances in steps 1--3 that $\partial\Omega$ is connected. If $\partial\Omega$ is disconnected some additional work is required which we postpone to step 4.

An outline of the four steps is as follows. In the first step we define for every non-zero $\Gamma\in \mathcal{H}_N(\Omega)$ an explicit non-zero element $j_0\in L^2\mathcal{V}_0(\partial\Omega)$ which we show to be of class $C^{0,\alpha}\mathcal{V}(\partial\Omega)$ for all $0<\alpha<1$. In the upcoming step 2 we then verify that the defined current $j_0$ lies in the kernel of the Biot-Savart operator. In the third step we then show that the non-zero currents obtained in this way from a basis of $\mathcal{H}_N(\Omega)$ are linearly independent, which will prove the claim. In the last step we explain how to adjust the argument if the boundary is disconnected.

The regularity claim of \Cref{3T11} then comes for free from \Cref{3C17} because the upper bound then guarantees that the linearly independent vector fields which are of class $C^{0,\alpha}$ for every $0<\alpha<1$ in fact form a basis of $\operatorname{Ker}(\operatorname{BS}_{\partial\Omega})$. We state this result as a corollary.
\begin{cor}
	\label{3C19}
	Let $\Omega\subset \mathbb{R}^3$ be a bounded $C^{1,1}$-domain. Then $\operatorname{Ker}(\operatorname{BS}_{\partial\Omega})\subset \bigcap_{0<\alpha<1}C^{0,\alpha}\mathcal{V}(\partial\Omega)$.
\end{cor}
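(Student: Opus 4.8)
The plan is to deduce this regularity statement not by any fresh analysis but from the dimension count combined with the explicit construction already carried out in the proof of \Cref{3P18}. First I would record that combining the upper bound \Cref{3C17} with the lower bound \Cref{3P18} yields the exact equality $\dim\left(\operatorname{Ker}(\operatorname{BS}_{\partial\Omega})\right)=\dim\left(\mathcal{H}_N(\Omega)\right)=:m$. This equality is the crucial leverage, since it forces any collection of $m$ linearly independent kernel elements to be a basis of $\operatorname{Ker}(\operatorname{BS}_{\partial\Omega})$.

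Next I would invoke the construction from the proof of \Cref{3P18}. There, starting from a basis $\Gamma_1,\dots,\Gamma_m$ of $\mathcal{H}_N(\Omega)$, one produces $m$ currents $j_0^{(1)},\dots,j_0^{(m)}\in \operatorname{Ker}(\operatorname{BS}_{\partial\Omega})$ which step~1 of that proof shows to lie in $\bigcap_{0<\alpha<1}C^{0,\alpha}\mathcal{V}(\partial\Omega)$, and which step~3 shows to be linearly independent. By the dimension equality above, these $m$ linearly independent H\"{o}lder-continuous currents therefore constitute a basis of the kernel.

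The concluding step is a one-line linear-algebra observation: $\bigcap_{0<\alpha<1}C^{0,\alpha}\mathcal{V}(\partial\Omega)$ is a linear subspace of $L^2\mathcal{V}_0(\partial\Omega)$, and since it contains the basis $j_0^{(1)},\dots,j_0^{(m)}$ of $\operatorname{Ker}(\operatorname{BS}_{\partial\Omega})$, it contains every finite linear combination of these elements, that is, all of $\operatorname{Ker}(\operatorname{BS}_{\partial\Omega})$. This yields the desired inclusion $\operatorname{Ker}(\operatorname{BS}_{\partial\Omega})\subset \bigcap_{0<\alpha<1}C^{0,\alpha}\mathcal{V}(\partial\Omega)$.

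I do not anticipate a genuine obstacle here, precisely because the analytically demanding work, namely verifying that the explicitly constructed currents are H\"{o}lder continuous and linearly independent, is already performed inside \Cref{3P18}, while the sharpness of the dimension bound is supplied by \Cref{3C17}. The only point deserving care is the bookkeeping: one must ensure that the very same currents used to establish the lower bound in \Cref{3P18} are exactly those whose regularity was verified, so that they may legitimately be promoted to a basis. Once this is in place, the regularity of an arbitrary element of the kernel follows for free from linearity.
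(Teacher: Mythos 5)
Your proposal is correct and follows essentially the same route as the paper: the remark preceding \Cref{3C19} states precisely that the upper bound \Cref{3C17} promotes the linearly independent H\"{o}lder-continuous currents constructed in the proof of \Cref{3P18} to a basis of $\operatorname{Ker}(\operatorname{BS}_{\partial\Omega})$, whence every kernel element, being a finite linear combination of fields in $\bigcap_{0<\alpha<1}C^{0,\alpha}\mathcal{V}(\partial\Omega)$, inherits the regularity. Your bookkeeping caveat is also handled correctly, since the currents whose regularity is verified in step~1 of \Cref{3P18} are exactly those shown to be linearly independent in step~3.
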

\begin{proof}[Proof of \Cref{3P18}]
	$\quad$
	\newline
	\newline
	\underline{Step 1:}
	\newline
	Before we define the current $j_0$ we fix an orientation on $\partial\Omega$ in such a way that given any vector field $X$ tangent to $\partial\Omega$ the corresponding orthogonal field $X^\perp$ obtained from $X$ by identifying $X$ with a $1$-form $\omega^1_X$ via the pulled back Euclidean metric and then identifying the $1$-form $\star \omega^1_X$ with $X^\perp$ is given by $X^\perp=\mathcal{N}\times X$ where $\mathcal{N}$ denotes the outward pointing unit normal.
	
	We now fix some $\Gamma\in \mathcal{H}_N(\Omega)$ with $\|\Gamma\|^2_{L^2(\Omega)}=1$ where we recall that $\mathcal{H}_N(\Omega)$ is the space of $H^1(\Omega)$-vector fields which are tangent to the boundary and curl- and div-free. It follows from \Cref{ALemma1} that $\Gamma\in W^{1,p}\mathcal{V}(\Omega)$ for all $1<p<\infty$. In particular, by standard Sobolev embeddings, the restriction $\Gamma|_{\partial\Omega}$ is of class $C^{0,\alpha}\mathcal{V}(\partial\Omega)$ for all $0<\alpha<1$.
	
	We can now consider the volume Biot-Savart potential $\operatorname{BS}_{\Omega}(\Gamma)(x):=\frac{1}{4\pi}\int_{\Omega}\Gamma(y)\times \frac{x-y}{|x-y|^3}d^3y$ and it follows from the proof of \Cref{3P6} that we have the identity
	\[
	\operatorname{BS}_{\Omega}(\Gamma)(x)=-\frac{1}{4\pi}\int_{\partial\Omega}\frac{\mathcal{N}(y)\times \Gamma(y)}{|x-y|}d\sigma(y)\text{ for  }x\in \Omega,
	\]
	where we used that $\Gamma$ is curl-free. Our previous arguments imply that $\mathcal{N}\times \Gamma$ is a $C^{0,\alpha}\mathcal{V}(\partial\Omega)$ vector field for every $0<\alpha<1$ and it is then standard that the right hand side of the above identity in fact defines a continuous vector field on all of $\mathbb{R}^3$. Then, in turn, \cite[Theorem 4.17]{RCM21} implies that $\nabla_x \operatorname{BS}_{\Omega}(\Gamma)|_{\Omega}$ admits a unique continuous extension up to the boundary which is of H\"{o}lder class $C^{0,\alpha}(\overline{\Omega})$ for every $0<\alpha<1$. Because $\operatorname{BS}_{\Omega}(\Gamma)$ is also continuous on all of $\mathbb{R}^3$ this then implies that $\operatorname{BS}_{\Omega}(\Gamma)|_{\partial\Omega}\in C^{1,\alpha}(\partial\Omega)$ for all $0<\alpha<1$. 
	
	Lastly, we may define
	\begin{equation}
		\label{3E9}
		F:\Omega\rightarrow\mathbb{R},x\mapsto \frac{1}{4\pi}\int_{\partial\Omega}\frac{\operatorname{BS}_{\Omega}(\Gamma)(y)\cdot \mathcal{N}(y)}{|x-y|}d\sigma(y)
	\end{equation}
	and we note that since $\operatorname{BS}_{\Omega}(\Gamma)\cdot \mathcal{N}$ is of class $C^{0,\alpha}(\partial\Omega)$ for each $0<\alpha<1$ we can argue identically as in the case of the Biot-Savart operator that $F$ is in fact continuous up to the boundary of $\Omega$ and that its restriction to the boundary is of class $C^{1,\alpha}(\partial\Omega)$ for every $0<\alpha<1$. Further, setting
	\[
	w_{\Omega}:C^{1,\alpha}(\partial\Omega)\rightarrow C^{1,\alpha}(\Omega),g\mapsto w_{\Omega}[g](x):=\frac{1}{4\pi}\int_{\partial\Omega}g(y)\frac{y-x}{|x-y|^3}\cdot \mathcal{N}(y)d\sigma(y)\text{, }x\in \Omega
	\]
	it follows from \cite[Theorem 4.31]{RCM21} that $w_{\Omega}$ is well-defined and that in particular $w_\Omega[g]$ admits a $C^{1,\alpha}(\overline{\Omega})$ extension denoted by $w^+_\Omega[g]$ (we follow here the notation in \cite{RCM21}).
	
	So far we did not need to assume the connectedness of $\partial\Omega$. But for the moment, to ease the exposition, we assume that $\partial\Omega$ is connected. In step 4 we will deal with the disconnected case separately.
	
	If $\partial\Omega$ is connected it follows from the jump relations \cite[Theorem 6.6]{RCM21} and \cite[Corollary 6.15]{RCM21} (keep in mind that then $\mathbb{R}^3\setminus \overline{\Omega}$ is connected and hence the condition $\kappa^-=0$ in the statement of \cite[Corollary 6.15]{RCM21} is satisfied) that there exists a unique $g\in C^{1,\alpha}(\partial\Omega)$ with the following property
	\begin{equation}
		\label{3E10}
		w^+_{\Omega}[g]|_{\partial\Omega}=F|_{\partial\Omega}\text{ and }g\in C^{1,\alpha}(\partial\Omega)\text{ for all }0<\alpha<1.
	\end{equation}
	We finally define
	\begin{equation}
		\label{3E11}
		j_0:=\operatorname{BS}_\Omega(\Gamma)\times \mathcal{N}-(\nabla_{\partial\Omega}g)^\perp\in C^{0,\alpha}\mathcal{V}_0(\partial\Omega)
	\end{equation}
	for all $0<\alpha<1$, where $g$ is the unique solution of (\ref{3E10}) with $F$ defined in (\ref{3E9}) and where $\mathcal{N}$ denotes the outward pointing unit normal. We notice that by our previous arguments we find $\operatorname{BS}_\Omega(\Gamma)|_{\partial\Omega}\in C^{0,1}(\partial\Omega)$ and so $\operatorname{BS}_\Omega(\Gamma)\times \mathcal{N}\in C^{0,1}\mathcal{V}(\partial\Omega)\subset C^{0,\alpha}\mathcal{V}(\partial\Omega)$ for all $0<\alpha<1$. Further, $g\in C^{1,\alpha}(\partial\Omega)$ for all $0<\alpha<1$ and so $\nabla_{\partial\Omega}g\in C^{0,\alpha}\mathcal{V}(\partial\Omega)$ for all $0<\alpha<1$ so that the regularity claim in (\ref{3E11}) follows. In addition, in the language of differential forms, we see that $(\nabla_{\partial\Omega}g)^\perp$ corresponds to $\star d g$ which is a coexact form and hence divergence-free. Further, we have for every $f\in C^1(\partial\Omega)$
	\begin{gather}
		\nonumber
		\int_{\partial\Omega}\operatorname{grad}(f)\cdot \left(\operatorname{BS}_{\Omega}(\Gamma)\times \mathcal{N}\right)d\sigma=\int_{\partial\Omega}\mathcal{N}\cdot \left(\operatorname{grad}(f)\times \operatorname{BS}_{\Omega}(\Gamma)\right)d\sigma
		\\
		\nonumber
		=\int_\Omega\operatorname{div}\left(\operatorname{grad}(f)\times \operatorname{BS}_\Omega(\Gamma)\right)d^3x=-\int_{\Omega}\operatorname{grad}(f)\cdot \Gamma d^3x=0
	\end{gather}
	where we extended $f$ in an arbitrary way to a function $f\in C^1(\overline{\Omega})$ (notice also that strictly speaking $\mathcal{N}\cdot \left(\operatorname{grad}_{\partial\Omega}(f)\times \operatorname{BS}_\Omega(\Gamma)\right)=\mathcal{N}\cdot \left(\operatorname{grad}_{\Omega}(f)\times \operatorname{BS}_\Omega(\Gamma)\right)$ because only the tangent part contributes to this expression and one should approximate $f$ by smoother functions in $C^1$-topology to make conventionally sense of the divergence term which contains second order derivatives acting upon $f$), where we used the vector calculus identity $\operatorname{div}(X\times Y)=\operatorname{curl}(Y)\cdot X-Y\cdot \operatorname{curl}(X)$, that $\operatorname{curl}\left(\operatorname{BS}_\Omega(\Gamma)\right)=\Gamma$ because $\Gamma$ is divergence-free and tangent to the boundary and the $L^2$-orthogonality of $\mathcal{H}_N(\Omega)$ to the space of gradient fields. In particular, $j_0$ is indeed divergence-free.
	
	We now argue that $j_0\neq 0$. To this end we note first that 
	\[
	\int_{\partial\Omega}(\operatorname{grad}_{\partial\Omega}g)^\perp\cdot \Gamma d\sigma=\int_{\partial\Omega}\left(\mathcal{N}\times \operatorname{grad}(g)\right)\cdot \Gamma d\sigma=0
	\]
	where in the last step one can argue identically as previously by extending $g$ in an arbitrary $C^1$ way to $\overline{\Omega}$. On the other hand, we find
	\[
	\int_{\partial\Omega}\left(\operatorname{BS}_\Omega(\Gamma)\times \mathcal{N}\right)\cdot \Gamma d\sigma=\int_\Omega\operatorname{div}\left(\Gamma\times \operatorname{BS}_\Omega(\Gamma)\right)d^3x=-\|\Gamma\|^2_{L^2(\Omega)}=-1\neq 0
	\]
	by choice of $\Gamma$ so that altogether we see that $\int_{\partial\Omega}j_0\cdot \Gamma d\sigma\neq 0$ and thus $j_0\neq 0$.
	\newline
	\newline
	\underline{Step 2:}
	\newline
	Here we prove that $\operatorname{BS}_{\partial\Omega}(j_0)(x)=0$ for all $x\in \Omega$. To this end we recall from \Cref{3L14} that letting $v:=\mathcal{N}\times j_0$ and if $\tilde{v}\in W^{1,2}\mathcal{V}(\Omega)$ is any vector field with $\tilde{v}^\parallel=v$, then
	\begin{equation}
		\label{3E12}
		\operatorname{BS}_{\partial\Omega}(j_0)=-\operatorname{BS}_\Omega(\operatorname{curl}(\tilde{v}))-\frac{1}{4\pi}\int_\Omega \operatorname{div}(\tilde{v})\frac{x-y}{|x-y|^3}d^3y+\frac{1}{4\pi}\int_{\partial\Omega}(\tilde{v}\cdot \mathcal{N})\frac{x-y}{|x-y|^3}d\sigma(y)+\tilde{v}\text{ in }\Omega.
	\end{equation}
	Letting $j_1:=\operatorname{BS}_\Omega(\Gamma)\times \mathcal{N}$ we can apply (\ref{3E12}) with $v=\operatorname{BS}_{\Omega}(\Gamma)^\parallel$ and $\tilde{v}=\operatorname{BS}_{\Omega}(\Gamma)$ to compute
	\begin{gather}
	\nonumber
	\operatorname{BS}_{\partial\Omega}(j_1)=-\operatorname{BS}_\Omega(\Gamma)+\operatorname{BS}_\Omega(\Gamma)+\frac{1}{4\pi}\int_{\partial\Omega}\left(\operatorname{BS}_\Omega(\Gamma)\cdot \mathcal{N}\right)\frac{x-y}{|x-y|^3}d\sigma(y)
	\end{gather}
	where we used that $\operatorname{BS}_\Omega(\Gamma)$ is div-free and a vector potential of $\Gamma$. Hence
	\begin{equation}
		\label{3E13}
		\operatorname{BS}_{\partial\Omega}(j_1)=\frac{1}{4\pi}\int_{\partial\Omega}\left(\operatorname{BS}_\Omega(\Gamma)\cdot \mathcal{N}\right)\frac{x-y}{|x-y|^3}d\sigma(y).
	\end{equation}
	As for $\left(\nabla_{\partial\Omega}g\right)^\perp$ we observe that $g\in C^{1,\alpha}(\partial\Omega)$ for all $0<\alpha<1$ implies that $g\in W^{2-\frac{1}{p},p}(\partial\Omega)$ for all $1<p<\infty$ so that by means of \cite[Theorem 2.4.2.5]{Gris85} we can find a function $\tilde{g}$ of class $W^{2,p}(\Omega)$ for all $1<p<\infty$ such that $\Delta \tilde{g}=0$ in $\Omega$ and $\tilde{g}|_{\partial\Omega}=g$. It then follows that $\left(\nabla_{\partial\Omega}g\right)^\perp=\mathcal{N}\times \nabla_{\partial\Omega}g=\mathcal{N}\times \nabla_{\Omega}\tilde{g}=-\nabla \tilde{g}\times \mathcal{N}$ where we again used the fact that the normal part does not contribute when taking the cross product with $\mathcal{N}$. Hence, we may take $\tilde{v}=-\nabla \tilde{g}$ to compute the Biot-Savart operator of $\left(\nabla_{\partial\Omega}g\right)^\perp$.
	\begin{gather}
		\nonumber
	\operatorname{BS}_{\partial\Omega}(\left(\nabla_{\partial\Omega}g\right)^\perp)=-\frac{1}{4\pi}\int_{\partial\Omega}\left(\nabla \tilde{g}\cdot \mathcal{N}\right)\frac{x-y}{|x-y|^3}d\sigma(y)-\nabla \tilde{g}.
	\end{gather}
	In order to utilise (\ref{3E10}) we rewrite
	\begin{gather}
		\nonumber
	-\frac{1}{4\pi}\int_{\partial\Omega}\left(\nabla \tilde{g}\cdot \mathcal{N}\right)\frac{x-y}{|x-y|^3}d\sigma(y)=\frac{\nabla_x}{4\pi}\int_{\partial\Omega}\frac{\nabla \tilde{g}\cdot \mathcal{N}}{|x-y|}d\sigma(y)
	\end{gather}
	and upon integrating by parts twice we find
	\[
	\frac{1}{4\pi}\int_{\partial\Omega}\frac{\nabla \tilde{g}\cdot \mathcal{N}}{|x-y|}d\sigma(y)=\tilde{g}-\frac{1}{4\pi}\int_{\partial\Omega}g(y)\frac{y-x}{|y-x|^3}\cdot \mathcal{N}(y)d\sigma(y)=\tilde{g}-w_\Omega[g]\text{ in }\Omega
	\]
	where we simply plugged in the definition of $w_\Omega$ from the previous step in the last equality. We therefore arrive at
	\[
	\operatorname{BS}_{\partial\Omega}(\left(\nabla_{\partial\Omega}g\right)^\perp)=-\nabla_xw_\Omega[g]\text{ in }\Omega.
	\]
	To exploit the defining properties of $g$ we recall that $w_\Omega[g]$ admits a continuous extension $w^+_\Omega[g]$ onto $\overline{\Omega}$ and observe that $w^+_\Omega[g]\in C^0\left(\overline{\Omega}\right)\cap C^2(\Omega)$ is harmonic in $\Omega$, \cite[Proposition 4.28]{RCM21}. In addition, the function $F$ as defined in (\ref{3E9}) is in fact of class $C^0(\overline{\Omega})\cap C^2(\Omega)$ and also harmonic in $\Omega$ which follows easily by direct calculation. Therefore, by means of the maximum principle, $F$ and $w_\Omega[g]$ coincide within $\Omega$ if and only if $F$ and $w^+_\Omega[g]$ coincide on $\partial\Omega$. But this is precisely the defining property of $g$, recall (\ref{3E10}), i.e. we find
	\[
	\operatorname{BS}_{\partial\Omega}(\left(\nabla_{\partial\Omega}g\right)^\perp)=-\nabla_xF=\frac{1}{4\pi}\int_{\partial\Omega}\left(\operatorname{BS}_\Omega(\Gamma)\cdot \mathcal{N}\right)\frac{x-y}{|x-y|^3}d\sigma(y)\text{ in }\Omega.
	\]
	Combining this with (\ref{3E13}) and exploiting the linearity of the Biot-Savart operator we find $\operatorname{BS}_{\partial\Omega}(j_0)(x)=0$ for all $x\in \Omega$ as desired.
	\newline
	\newline
	\underline{Step 3:} Here we prove that if $\Gamma_1,\dots,\Gamma_{\dim\left(\mathcal{H}_N(\Omega)\right)}$ forms a basis of $\mathcal{H}_N(\Omega)$, then the currents $j_k$ obtained from the $\Gamma_k$ by means of the procedure in the previous steps are linearly independent. To this end, suppose that we are given $\alpha_i\in \mathbb{R}$ with $j:=\sum_{i=1}^{\dim\left(\mathcal{H}_N(\Omega)\right)}\alpha_{i}j_i=0$. Setting $\Gamma:=\sum_{i=1}^{\dim\left(\mathcal{H}_N(\Omega)\right)} \alpha_i\Gamma_i$ and $g:=\sum_i\alpha_ig_i$ we note that we then have by linearity
	\[
	0=j=\operatorname{BS}_{\Omega}(\Gamma)\times \mathcal{N}-\left(\nabla_{\partial\Omega}g\right)^\perp.
	\]
	It then follows identically as in the last part of step 1 that we have
	\[
	0=\langle j,\Gamma\rangle_{L^2(\partial\Omega)}=-\|\Gamma\|^2_{L^2(\Omega)}
	\]
	and consequently that $\Gamma=0$ which in turn, since the $\Gamma_i$ form a basis, implies that $\alpha_i=0$ for all $i$. Consequently the constructed currents are all linearly independent.
	\newline
	\newline
	\underline{Step 4:}
	\newline
	Here we deal with the situation where $\partial\Omega$ is disconnected. To this end we observe that $\mathbb{R}^3\setminus \Omega$ has exactly one connected component which is unbounded. The boundary of this connected component will coincide with some boundary component of $\partial\Omega$ which we denote by $\partial\Omega^0$. We set $m:=\#\partial\Omega-1$ and label the remaining boundary components of $\partial\Omega$ as $\partial\Omega^1,\dots,\partial\Omega^m$. We observe that all the arguments of step 1-3 apply verbatim as long as we can guarantee that there exists a function $g\in C^{1,\alpha}(\partial\Omega)$ satisfying (\ref{3E10}). It however follows from \cite[Theorem 5.8, Theorem 6.5, Proposition 6.13 \& Theorem 6.14]{RCM21} that for $F|_{\partial\Omega}$ where $F$ is defined in (\ref{3E9}) (recall it is continuous up to the boundary and its boundary restriction is of class $C^{1,\alpha}$), there exists a solution $g$ of (\ref{3E10}) as long as
	\[
	\int_{\partial\Omega}F(x)\cdot \mu_i(x)d\sigma(x)=0\text{, }1\leq i\leq m
	\]
	where the $\mu_i\in C^{0,\alpha}(\partial\Omega)$ satisfy $\nu[\mu_i](x):=-\frac{1}{4\pi}\int_{\partial\Omega}\frac{\mu_i(y)}{|x-y|}d\sigma(y)|_{\partial\Omega^j}=\delta_{ij}$. But this is easy to verify because $\operatorname{BS}_\Omega(\Gamma)\in H^1_{\operatorname{loc}}\mathcal{V}(\mathbb{R}^3)$ is divergence-free. So letting $\Omega^i$ denote the bounded connected component of $\mathbb{R}^3\setminus \partial\Omega^i$ we find, using Fubini's theorem and the defining properties of the $\mu_i$,
	\begin{gather}
		\nonumber
		\int_{\partial\Omega}\mu_i(x)\cdot F(x)d\sigma(x)=-\int_{\partial\Omega}\operatorname{BS}_\Omega(\Gamma)(y)\cdot \mathcal{N}(y)\nu[\mu_i](y)d\sigma(y)
		\\
		\nonumber
		=-\int_{\partial\Omega^i}\operatorname{BS}_\Omega(\Gamma)(y)\cdot \mathcal{N}(y)d\sigma(y)=\int_{\Omega^i}\operatorname{div}\left(\operatorname{BS}_{\Omega}(\Gamma)\right)(y)d^3y=0.
	\end{gather}
	Consequently, there exists a solution $g$ to (\ref{3E10}) and the remaining part of the proof applies verbatim with the only difference that the solution $g$ is no longer unique, however, the regularity assertion remains valid because any two solutions differ by an element in the kernel of an appropriate operator which consists of smooth elements \cite[Theorem 6.14]{RCM21}.
\end{proof}
\section{A recursive approximation}
Throughout this section we let $\Omega\subset \mathbb{R}^3$ be a bounded $C^{1,1}$-domain. We recall from the proof of \Cref{3P18} that if we are given some $\Gamma\in \mathcal{H}_N(\Omega)$ then we can obtain an element $j\in \operatorname{Ker}(\operatorname{BS}_{\partial\Omega})$ by setting $j:=\mathcal{N}\times \operatorname{BS}_{\Omega}(\Gamma)+\mathcal{N}\times \operatorname{grad}_{\partial\Omega}g$ where $g\in \bigcap_{0<\alpha<1}C^{1,\alpha}(\partial\Omega)$ satisfies
\begin{gather}
	\label{E61}
	\int_{\partial\Omega}g(y)\frac{y-x}{|x-y|^3}\cdot \mathcal{N}(y)d\sigma(y)=\int_{\partial\Omega}\frac{\operatorname{BS}_{\Omega}(\Gamma)\cdot \mathcal{N}}{|x-y|}d\sigma(y).
\end{gather}
We recall here that if $\partial\Omega$ is disconnected, then the solution of (\ref{E61}) is not unique, however, the induced gradient field $\operatorname{grad}_{\partial\Omega}g$ is independent of the choice of solution.

Therefore, if we are given some $\Gamma\in \mathcal{H}_N(\Omega)$, the main obstruction in finding an explicit formula for $j$ is the fact that the function $g$ is implicitly defined. The goal of this section it to reformulate the defining equation (\ref{E61}) for $g$ into a suitable fix point problem and then to provide an iterative scheme which allows us to approximate $\mathcal{N}\times \operatorname{grad}_{\partial\Omega}g$ by means of this iterative procedure in some appropriate topology.

We start by defining $\mathcal{H}_{\operatorname{ex}}(\Omega):=\{\nabla f|f\in H^1(\Omega)\text{, }\Delta f=0\}$ which is an $L^2$-closed subspace of $L^2\mathcal{H}(\Omega)$. We also recall the definition of a firmly non-expansive operator, c.f. \cite[Definition 4.1 \& Proposition 4.4]{BC19}.
\begin{defn}
	\label{6D1}
	Let $\mathcal{H}$ be a Hilbert space and $F:\mathcal{H}\rightarrow\mathcal{H}$ a map. Then we call $F$ firmly non-expansive if
	\begin{gather}
		\nonumber
		\|F(x)-F(y)\|^2\leq \langle x-y,F(x)-F(y)\rangle.
	\end{gather}
\end{defn}
We note that any firmly non-expansive map $F$ is necessarily Lipschitz-continuous with $\operatorname{Lip}(F)\leq 1$ and so in particular non-expansive.
\begin{lem}
	\label{6L2}
	Let $\Omega\subset\mathbb{R}^3$ be a bounded $C^{1,1}$-domain and $\Gamma\in \mathcal{H}_N(\Omega)\setminus \{0\}$ be any fixed element. Then the following is a well-defined, firmly non-expansive operator
	\begin{gather}
		\nonumber
		S:\mathcal{H}_{\operatorname{ex}}(\Omega)\rightarrow \mathcal{H}_{\operatorname{ex}}(\Omega)\text{, }\nabla f\mapsto \frac{\nabla}{4\pi}\int_{\partial\Omega}\frac{\operatorname{BS}_{\Omega}(\Gamma)\cdot \mathcal{N}}{|x-y|}d\sigma(y)+\frac{\nabla}{4\pi}\int_{\Omega}\nabla f(y)\cdot \frac{x-y}{|x-y|^3}d^3y.
	\end{gather}
\end{lem}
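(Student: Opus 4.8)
The plan is to separate the affine structure of $S$ from its linearity. Writing $v_0:=\frac{\nabla}{4\pi}\int_{\partial\Omega}\frac{\operatorname{BS}_{\Omega}(\Gamma)\cdot\mathcal{N}}{|x-y|}d\sigma(y)$ for the part independent of $\nabla f$ and $L(\nabla f):=\frac{\nabla}{4\pi}\int_{\Omega}\nabla f(y)\cdot\frac{x-y}{|x-y|^3}d^3y$ for the linear part, we have $S(\nabla f)=v_0+L(\nabla f)$ and hence $S(\nabla f_1)-S(\nabla f_2)=L(\nabla f_1-\nabla f_2)$. Thus firm non-expansiveness of $S$ is equivalent to the single inequality $\|L(u)\|_{L^2(\Omega)}^2\le\langle u,L(u)\rangle_{L^2(\Omega)}$ for all $u\in\mathcal{H}_{\operatorname{ex}}(\Omega)$, and the first task is to understand $L$.

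The key observation I would establish is that $L$ is exactly the restriction to $\Omega$ of the $L^2(\mathbb{R}^3)$-orthogonal Helmholtz projection onto gradient fields, precomposed with extension by zero. Let $E:L^2(\Omega)\to L^2(\mathbb{R}^3)$ be extension by zero, $R:L^2(\mathbb{R}^3)\to L^2(\Omega)$ restriction (so $R=E^{\ast}$ and $\|R\|\le1$), and let $\mathbb{P}$ be the orthogonal projection of $L^2(\mathbb{R}^3,\mathbb{R}^3)$ onto curl-free fields. Denoting by $W$ the componentwise Newton potential of $\chi_\Omega\nabla f$, the vector identity $-\Delta=\operatorname{curl}\operatorname{curl}-\nabla\operatorname{div}$ together with $-\Delta W=\chi_\Omega\nabla f$ yields $\chi_\Omega\nabla f=\operatorname{curl}\operatorname{curl}(W)+\nabla\psi$ with $\psi:=-\operatorname{div}(W)$; the first summand is divergence-free and the second a gradient, both in $L^2(\mathbb{R}^3)$, so by uniqueness of the Helmholtz decomposition $\nabla\psi=\mathbb{P}(\chi_\Omega\nabla f)$. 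A direct differentiation of the Newton potential (as in \Cref{3L8}, valid here since $f\in W^{1,2}(\Omega)$) gives $\psi(x)=\frac{1}{4\pi}\int_{\Omega}\nabla f(y)\cdot\frac{x-y}{|x-y|^3}d^3y$, so that $L(\nabla f)=\nabla\psi|_\Omega=R\mathbb{P}E(\nabla f)$.

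Granting this identity, firm non-expansiveness is purely abstract: for $u=\nabla f$,
\[
\langle u,L(u)\rangle_{L^2(\Omega)}=\langle Eu,\mathbb{P}Eu\rangle_{L^2(\mathbb{R}^3)}=\|\mathbb{P}Eu\|_{L^2(\mathbb{R}^3)}^2\ge\|R\mathbb{P}Eu\|_{L^2(\Omega)}^2=\|L(u)\|_{L^2(\Omega)}^2,
\]
where I used $R=E^{\ast}$, $\mathbb{P}=\mathbb{P}^{\ast}=\mathbb{P}^2$, and that restriction does not increase the $L^2$-norm. Well-definedness is settled along the way: $\psi$ is harmonic in $\Omega$ because $\Delta\psi=\operatorname{div}(\chi_\Omega\nabla f)=\Delta f=0$ there, and $\nabla\psi\in L^2$ by boundedness of $\mathbb{P}$, so $L(\nabla f)\in\mathcal{H}_{\operatorname{ex}}(\Omega)$; the fixed term $v_0$ is the gradient of the single-layer potential with density $\operatorname{BS}_{\Omega}(\Gamma)\cdot\mathcal{N}$, which is Hölder continuous on $\partial\Omega$ (as recorded in the proof of \Cref{3P18}), hence harmonic in $\Omega$ and $C^{1,\alpha}$ up to $\partial\Omega$, so $v_0\in\mathcal{H}_{\operatorname{ex}}(\Omega)$ as well. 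Since $\mathcal{H}_{\operatorname{ex}}(\Omega)$ is $L^2$-closed, $S$ indeed maps into $\mathcal{H}_{\operatorname{ex}}(\Omega)$.

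The main obstacle is the rigorous justification of the identity $L=R\mathbb{P}E$ at the available $L^2$-regularity: elements of $\mathcal{H}_{\operatorname{ex}}(\Omega)$ are only square-integrable globally, so the Newton-potential manipulations must be handled with care. The delicate points are that $W\in H^2_{\operatorname{loc}}(\mathbb{R}^3)$ with the stated distributional Laplacian (standard elliptic theory for the Newton potential of an $L^2$ source, cf. \cite[Theorem 9.9]{GT01}), the global $L^2(\mathbb{R}^3)$-membership of both $\operatorname{curl}\operatorname{curl}(W)$ and $\nabla\psi$ (which follows from the decay $\nabla^2 W=O(|x|^{-3})$ at infinity), and their mutual $L^2$-orthogonality, which is what legitimises identifying $\nabla\psi$ with $\mathbb{P}(\chi_\Omega\nabla f)$. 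Once these are in place the firmly non-expansive estimate above is immediate.
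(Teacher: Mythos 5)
Your proof is correct, and it takes a genuinely different route from the paper's. The paper's core is the hands-on identity $\|\nabla H(X)\|_{L^2(\mathbb{R}^3)}^2=\int_{\Omega}X\cdot\nabla H(X)\,d^3x$ for the scalar potential $H(X)(x)=\frac{1}{4\pi}\int_{\Omega}X(y)\cdot\frac{x-y}{|x-y|^3}\,d^3y$: it is proved first for $X\in C^\infty\mathcal{V}_c(\Omega)$ by integration by parts over large balls, using the decay of $H(X)$ at infinity and the fundamental solution of the Laplacian, and then extended to all $X\in L^2\mathcal{V}(\Omega)$ by approximation; firm non-expansiveness is then read off by splitting the $\mathbb{R}^3$-norm into its $\Omega$- and $(\mathbb{R}^3\setminus\Omega)$-parts, which is exactly (\ref{6E2}). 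Your identification $L=R\mathbb{P}E=E^{\ast}\mathbb{P}E$ recognises that identity as nothing but the defining property of an orthogonal projection, $\|\mathbb{P}EX\|^2=\langle EX,\mathbb{P}EX\rangle$, so the inequality becomes the abstract fact that $E^{\ast}\mathbb{P}E$ is firmly non-expansive whenever $\mathbb{P}$ is an orthogonal projection and $E$ an isometry. What your route buys is conceptual transparency and the elimination of the approximation argument; what it costs is imported machinery: the orthogonality of the $L^2(\mathbb{R}^3)$ Helmholtz decomposition (say via the Fourier characterisation of curl-free and divergence-free fields), global $L^2$-bounds on $\nabla^2W$ (Calder\'on--Zygmund, or local regularity plus decay, as you indicate), and the identity $-\Delta=\operatorname{curl}\operatorname{curl}-\nabla\operatorname{div}$, whereas the paper's computation is self-contained. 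Two small points to tighten. First, membership of $L(\nabla f)$ in $\mathcal{H}_{\operatorname{ex}}(\Omega)$ requires the potential $\psi$ itself, not only $\nabla\psi$, to lie in $L^2(\Omega)$, since elements of $\mathcal{H}_{\operatorname{ex}}(\Omega)$ are by definition gradients of $H^1(\Omega)$-functions; this follows, e.g., from $\psi\in L^6(\mathbb{R}^3)$ via the Hardy--Littlewood--Sobolev inequality, and should be said explicitly. Second, the formula $\psi(x)=\frac{1}{4\pi}\int_{\Omega}\nabla f(y)\cdot\frac{x-y}{|x-y|^3}\,d^3y$ is obtained by differentiating the Newton potential under the integral sign, not from \Cref{3L8} (which is the boundary-integral representation for harmonic $f$); the citation should be adjusted, though nothing downstream depends on it.
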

\begin{proof}[Proof of \Cref{6L2}]
	First we recall from the proof of \Cref{3P18} that $x\mapsto \int_{\partial\Omega}\frac{\operatorname{BS}_{\Omega}(\Gamma)\cdot \mathcal{N}}{|x-y|}d\sigma(y)\in \bigcap_{0<\alpha<1}C^{1,\alpha}(\overline{\Omega})$. One also verifies easily by direct calculations that this function is harmonic in $\Omega$ so that we conclude $\nabla_x \int_{\partial\Omega}\frac{\operatorname{BS}_{\Omega}(\Gamma)\cdot \mathcal{N}}{|x-y|}d\sigma(y)\in \mathcal{H}_{\operatorname{ex}}(\Omega)$. On the other hand, it follows from the regularity of the Newton potential \cite[Theorem 9.9]{GT01} that $x\mapsto \int_{\Omega}\nabla f(y)\cdot \frac{x-y}{|x-y|^3}d^3y\in H^1(\Omega)$. In addition, it follows from \Cref{3L8} that we can write
	\begin{gather}
		\nonumber
		\int_{\Omega}\nabla f(y)\cdot \frac{x-y}{|x-y|^3}d^3y=\int_{\partial\Omega}f(y)\frac{x-y}{|x-y|^3}\cdot\mathcal{N}(y)d\sigma(y)+4\pi f(x).
	\end{gather}
	One can verify by direct calculation that the first term in the above expression is harmonic in $\Omega$, while $f$ is harmonic by assumption. This proves $\nabla\int_{\Omega}\nabla f(y)\cdot \frac{x-y}{|x-y|^3}d^3y\in \mathcal{H}_{\operatorname{ex}}(\Omega)$ and establishes the well-definedness of the operator $S$.
	
	We will now prove the following identity, where we let $H:L^2\mathcal{V}(\Omega)\rightarrow H^1(\mathbb{R}^3)$, $X\mapsto \frac{1}{4\pi}\int_{\Omega}X(y)\cdot \frac{x-y}{|x-y|^3}d^3y$ and set $\bar{f}:=f_1-f_2$ for any two given functions $f_1,f_2\in H^1(\Omega)$ with $\nabla f_1,\nabla f_2\in \mathcal{H}_{\operatorname{ex}}(\Omega)$,
	\begin{gather}
		\nonumber
		\|S(\nabla f_1)-S(\nabla f_2)\|^2_{L^2(\Omega)}=\int_{\Omega}\nabla \bar{f}\cdot (S(\nabla f_1)-S(\nabla f_2))d^3x-\int_{\mathbb{R}^3\setminus \Omega}|\nabla H(\nabla \bar{f})|^2d^3x
		\\
		\label{6E2}
		\Leftrightarrow \|S(\nabla f_1)-S(\nabla f_2)\|^2_{L^2(\Omega)}=\|\nabla \bar{f}\|^2_{L^2(\Omega)}-\|\nabla \bar{f}-\nabla H(\nabla \bar{f})\|^2_{L^2(\Omega)}-2\|\nabla H(\nabla \bar{f})\|^2_{L^2(\mathbb{R}^3\setminus \Omega)}.
	\end{gather}
	In order to prove this we first observe that $S(\nabla f_1)-S(\nabla f_2)=\nabla H(\nabla \bar{f})$. We first assume that $X\in C^\infty\mathcal{V}_c(\Omega)$ is a smooth compactly supported vector field on $\Omega$. Then
	\begin{gather}
		\nonumber
		\|\nabla H(X)\|^2_{L^2(\mathbb{R}^3)}=\frac{1}{4\pi}\sum_{j,i=1}^3\int_{\mathbb{R}^3}\partial_jH(X)(x)\partial_j\int_{\Omega}X^i(y)\frac{x^i-y^i}{|x-y|^3}d^3yd^3x
		\\
		\nonumber
		=\lim_{R\rightarrow\infty}\frac{1}{4\pi}\sum_{j,i=1}^3\int_{B_R(0)}\partial_jH(X)(x)\partial_j\int_{\Omega}X^i(y)\frac{x^i-y^i}{|x-y|^3}d^3yd^3x
		\\
		\nonumber
		=\lim_{R\rightarrow\infty}\left[\frac{1}{4\pi}\sum_{i=1}^3\int_{B_R(0)}H(X)(x)\left(-\Delta\int_{\Omega}X^i(y)\frac{x^i-y^i}{|x-y|^3}d^3y\right)d^3x+\int_{\partial B_R}H(X)(x)\mathcal{N}\cdot \nabla H(X)d\sigma(x)\right].
	\end{gather}
	The boundary term vanishes in the limit due to the behaviour $|H(X)|\in \mathcal{O}\left(\frac{1}{|x|^2}\right)$ and $|\nabla H(X)|\in \mathcal{O}\left(\frac{1}{|x|^3}\right)$ as $|x|\rightarrow\infty$. In addition we have
	\begin{gather}
		\nonumber
		-\Delta \sum_{i=1}^3\int_{\Omega} X^i(y)\frac{x^i-y^i}{|x-y|^3}d^3y=-\Delta \sum_{i=1}^3\int_{\Omega}X^i(y)\partial_{y_i}\frac{1}{|x-y|}d^3y=\Delta\int_{\Omega}\frac{\operatorname{div}(X)(y)}{|x-y|}d^3y=-4\pi \operatorname{div}(X)(x)
	\end{gather}
	where we used that $X$ is compactly supported and that $-\frac{1}{4\pi|x|}$ is the fundamental solution of the Laplace operator. Using once more that $X$ is compactly supported in $\Omega$ we find
	\begin{gather}
		\nonumber
		\|\nabla H(X)\|^2_{L^2(\mathbb{R}^3)}=-\int_{\Omega}\operatorname{div}(X)(x)H(X)(x)d^3x=\int_{\Omega}X(x)\cdot \nabla H(X)(x)d^3x.
	\end{gather}
	By means of an approximation argument this identity remains valid for all $X\in L^2\mathcal{V}(\Omega)$. Letting $X=\nabla \bar{f}$ and recalling that $S(\nabla f_1)-S(\nabla f_2)=\nabla H(\nabla \bar{f})$ we immediately obtain the first line in (\ref{6E2}). The equivalent reformulation follows immediately by expanding the terms. The first identity in (\ref{6E2}) implies the firm non-expansiveness of $S$.
	\end{proof}
Before we proceed let us introduce the following notation: Given a set $M$ and a function $f:M\rightarrow M$ we define its fix point set as $\operatorname{Fix}(f):=\{x\in M|f(x)=x\}$.
\begin{lem}
	\label{6L3}
	Let $\Omega\subset \mathbb{R}^3$ be a bounded $C^{1,1}$-domain and $\Gamma\in \mathcal{H}_N(\Omega)\setminus \{0\}$ be a fixed element. Let $S$ be the corresponding operator as defined in \Cref{6L2}. Then the following holds
	\begin{enumerate}
		\item $\operatorname{Fix}(S)\neq \emptyset$.
		\item $\operatorname{Fix}(S)\subset \bigcap_{1<p<\infty}W^{1,p}\mathcal{V}(\Omega)$.
		\item If $\nabla f_1$, $\nabla f_2\in \operatorname{Fix}(S)$, then $\mathcal{N}\times \nabla f_1=\mathcal{N}\times \nabla f_2$.
		\item If $\nabla f\in \operatorname{Fix}(S)$ then $\mathcal{N}\times \nabla f=\mathcal{N}\times \nabla_{\partial\Omega}g$ where $g$ is any solution of (\ref{E61}).
	\end{enumerate} 
\end{lem}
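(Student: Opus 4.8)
The plan is to identify the fixed points of $S$ with the solutions of the boundary integral equation (\ref{E61}), the bridge being the representation formula of \Cref{3L8}. Since every element of $\mathcal{H}_{\operatorname{ex}}(\Omega)$ is the gradient of an $H^1$-harmonic function $f$, \Cref{3L8} lets me rewrite the volume term appearing in the definition of $S$ as
\[
\int_{\Omega}\nabla f(y)\cdot\frac{x-y}{|x-y|^3}d^3y=-\int_{\partial\Omega}f(y)\frac{y-x}{|x-y|^3}\cdot\mathcal{N}(y)d\sigma(y)+4\pi f(x)\text{ for }x\in\Omega.
\]
Substituting this into $S(\nabla f)$, the two occurrences of $\nabla f$ cancel, and the fixed point equation $S(\nabla f)=\nabla f$ becomes equivalent to
\[
\nabla_x\int_{\partial\Omega}f(y)\frac{y-x}{|x-y|^3}\cdot\mathcal{N}(y)d\sigma(y)=\nabla_x\int_{\partial\Omega}\frac{\operatorname{BS}_{\Omega}(\Gamma)\cdot\mathcal{N}}{|x-y|}d\sigma(y)\text{ in }\Omega,
\]
that is, to (\ref{E61}) for the trace $f|_{\partial\Omega}$ read off after differentiation. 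Since $\Omega$ is connected, the two harmonic functions above differ by a single additive constant $c$, and using that the double-layer potential of the constant $1$ equals $1$ throughout $\Omega$ (a direct divergence-theorem computation) I can absorb $c$ and conclude that $f|_{\partial\Omega}-c$ is an honest solution of (\ref{E61}).

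For (i) I run this equivalence in reverse. The proof of \Cref{3P18} already provides a solution $g\in\bigcap_{0<\alpha<1}C^{1,\alpha}(\partial\Omega)$ of (\ref{E61}); let $\tilde{g}\in\bigcap_{1<p<\infty}W^{2,p}(\Omega)$ be its harmonic extension, which exists by \cite[Theorem 2.4.2.5]{Gris85}. Applying \Cref{3L8} to $\tilde{g}$ and inserting (\ref{E61}) shows directly that $S(\nabla\tilde{g})=\nabla\tilde{g}$, so that $\operatorname{Fix}(S)\neq\emptyset$. The same equivalence yields (ii): by the forward direction the boundary trace $f|_{\partial\Omega}$ of any fixed point $\nabla f$ differs from the $C^{1,\alpha}$-solution $g$ only by a locally constant function (their tangential gradients agreeing by the remark following (\ref{E61})), hence $f|_{\partial\Omega}\in\bigcap_{0<\alpha<1}C^{1,\alpha}(\partial\Omega)$; since $f$ is the harmonic extension of this datum, \cite[Theorem 2.4.2.5]{Gris85} gives $f\in\bigcap_{1<p<\infty}W^{2,p}(\Omega)$ and therefore $\nabla f\in\bigcap_{1<p<\infty}W^{1,p}\mathcal{V}(\Omega)$.

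Parts (iii) and (iv) then follow from the uniqueness of the tangential gradient. For any $\nabla f\in\operatorname{Fix}(S)$ the forward direction shows that $f|_{\partial\Omega}$ solves (\ref{E61}) up to a constant, hence its tangential gradient coincides with $\nabla_{\partial\Omega}g$ for any solution $g$ of (\ref{E61}); the latter is independent of the choice of $g$ by the remark after (\ref{E61}). Since only the tangential part of $\nabla f$ survives the cross product with $\mathcal{N}$, so that $\mathcal{N}\times\nabla f=\mathcal{N}\times\nabla_{\partial\Omega}(f|_{\partial\Omega})$, this gives $\mathcal{N}\times\nabla f=\mathcal{N}\times\nabla_{\partial\Omega}g$, which is (iv), and (iii) is the special case obtained by comparing two fixed points. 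I expect the main obstacle to be the careful bookkeeping of the additive constant: one must check that the constant forced by the connectedness of $\Omega$ does not affect the tangential gradient and, when $\partial\Omega$ is disconnected, reconcile this single constant with the larger locally constant ambiguity of solutions of (\ref{E61}) recorded in the remark after it.
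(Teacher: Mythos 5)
Your part (i) and the underlying equivalence are sound and essentially identical to the paper's own argument: the paper also takes the $C^{1,\alpha}$ solution $g$ of (\ref{E61}), harmonically extends it via \cite[Theorem 2.4.2.5]{Gris85}, rewrites (\ref{E61}) through \Cref{3L8}, and takes gradients to exhibit a fixed point. The problem lies in how you run the equivalence forward to obtain (ii)--(iv). For an arbitrary fixed point $\nabla f\in\operatorname{Fix}(S)$, the function $f$ is a priori only an $H^1(\Omega)$ harmonic function, so its trace $f|_{\partial\Omega}$ lies merely in $W^{\frac{1}{2},2}(\partial\Omega)$. Your forward direction correctly shows that $f|_{\partial\Omega}-c$ satisfies the double-layer identity (\ref{E61}) in $\Omega$, but you then invoke the remark following (\ref{E61}) (``solutions of (\ref{E61}) have the same tangential gradient, i.e.\ differ by locally constant functions'') to conclude $f|_{\partial\Omega}-c$ differs from the regular solution $g$ by a locally constant function. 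That rigidity statement is established in the paper, via \cite[Theorem 6.14]{RCM21} and the jump relations used in the proof of \Cref{3P18}, only within continuous/H\"older classes of densities; applying it to an $H^{\frac12}$ density requires the kernel characterisation of the interior double-layer potential for rough ($L^2$ or $H^{\frac12}$) densities on $C^{1,1}$ domains (nontangential jump relations \`a la Fabes--Jodeit--Rivi\`ere/Verchota, or an equivalent Fredholm-plus-bootstrap argument). This machinery is neither in the paper's toolkit nor supplied by you, and without it your proof of (ii) does not close; since your (iii) and (iv) are derived from the same rigidity claim, they inherit the gap. Note also that no free bootstrap is available: the volume term $\nabla H(\nabla f)$ in $S$ is a zeroth-order singular integral of $\nabla f$, so the fixed-point equation itself does not improve boundary regularity.

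The paper circumvents exactly this difficulty by staying in the Hilbert-space setting: for two fixed points one has $\|S(\nabla f_1)-S(\nabla f_2)\|_{L^2(\Omega)}=\|\nabla f_1-\nabla f_2\|_{L^2(\Omega)}$, and the identity (\ref{6E2}) then forces $\nabla H(\nabla\bar f)=0$ on $\mathbb{R}^3\setminus\Omega$ and $\nabla\bar f=\nabla H(\nabla\bar f)$ in $\Omega$, where $\bar f=f_1-f_2$. From the locally constant trace of $H(\nabla\bar f)$ one concludes $\mathcal{N}\times\nabla\bar f=0$, i.e.\ $\nabla\bar f\in\mathcal{H}_D(\Omega)$, which gives (iii) immediately; combining with the regularity of Dirichlet fields, \Cref{ALemma2}, and the one regular fixed point from (i) yields (ii), and (iv) follows from (iii) applied to that explicit fixed point. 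You should either adopt this route or add the $L^2$ jump-relation argument (the exterior double-layer potential of a density in the kernel has vanishing Neumann data and decays, hence is locally constant outside, and the jump then shows the density itself is locally constant) to legitimise your use of the uniqueness statement in the rough class.
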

\begin{proof}[Proof of \Cref{6L3}]
\underline{(i)} We know that (\ref{E61}) admits a solution $g\in \bigcap_{0<\alpha<1}C^{1,\alpha}(\partial\Omega)$ for any given $\Gamma\in \mathcal{H}_N(\Omega)$. We observe that then in particular $g\in W^{2-\frac{1}{p},p}(\partial\Omega)$ and that we can therefore find some $f\in \bigcap_{1<p<\infty}W^{2,p}(\Omega)$ with $\Delta f=0$ in $\Omega$ and $f|_{\partial\Omega}=g$, c.f. \cite[Theorem 2.4.2.5]{Gris85}. We observe that we can rewrite (\ref{E61}) equivalently as
\begin{gather}
	\nonumber
	\frac{1}{4\pi}\int_{\partial\Omega}\frac{\operatorname{BS}_{\Omega}(\Gamma)\cdot \mathcal{N}}{|x-y|}d\sigma(y)+\frac{1}{4\pi}\int_{\Omega}\nabla f(y)\cdot \frac{x-y}{|x-y|^3}d^3y=f(x).
\end{gather}
Taking the gradient on both sides yields $S(\nabla f)=\nabla f$ with $\nabla f\in \mathcal{H}_{\operatorname{ex}}(\Omega)\cap \bigcap_{1<p<\infty}W^{1,p}\mathcal{V}(\Omega)$. This yields $\operatorname{Fix}(S)\neq \emptyset$.
\newline
\newline
\underline{(ii):} We show now that if $\nabla f_1,\nabla f_2\in \mathcal{H}_{\operatorname{ex}}(\Omega)$ are any two fixed points of $S$, then $\nabla \bar{f}\in \mathcal{H}_D(\Omega)=\{\Theta\in L^2\mathcal{V}(\Omega)\mid\operatorname{curl}(\Theta)=0=\operatorname{div}(\Theta)\text{, }\Theta\perp\partial\Omega\}$ where $\bar{f}:=f_1-f_2$. Once this is shown, it will follow from the regularity result \Cref{ALemma2} and the fact that there exists some fixed point of class $\bigcap_{1<p<\infty}W^{1,p}\mathcal{V}(\Omega)$ as shown in the proof of (i) that all fixed points are of the desired regularity. To see that we note that if $\nabla f_1,\nabla f_2$ are any two fixed points of $S$ then $\|S(\nabla f_1)-S(\nabla f_2)\|^2_{L^2(\Omega)}=\|\nabla f_1-\nabla f_2\|^2_{L^2(\Omega)}$ and it then follows from (\ref{6E2}) that $\nabla H(\nabla \bar{f})=0$ on $\mathbb{R}^3\setminus \Omega$ and $\nabla \bar{f}=\nabla H(\nabla \bar{f})$ in $\Omega$. $\nabla H(\nabla \bar{f})=0$ on $\mathbb{R}^3\setminus \Omega$ implies together with the fact that $H(\nabla \bar{f})\in H^1(\mathbb{R}^3)$ that the trace of $H(\nabla \bar{f})$ when viewed as a function on $\Omega$ is a locally constant function on $\partial\Omega$. Since $\nabla \bar{f}=\nabla H(\nabla \bar{f})$ in $\Omega$ the same must be true for $\bar{f}$ from which one easily concludes that $\mathcal{N}\times \nabla \bar{f}=0$ and hence $\nabla \bar{f}\in \mathcal{H}_D(\Omega)$. As mentioned before this concludes the proof of statement (ii).
\newline
\newline
\underline{(iii):} This is an immediate consequence of the proof of (ii) since $\nabla f_1-\nabla f_2\in \mathcal{H}_D(\Omega)$ and consequently $\mathcal{N}\times (\nabla f_1-\nabla f_2)=0$.
\newline
\newline
\underline{(iv):} By (iii) it is enough to find one $\nabla f\in \operatorname{Fix}(S)$ with this property. We have however already seen in the proof of (i) that if $g$ is any given solution of (\ref{E61}) then the unique solution of the boundary value problem $\Delta f=0$ in $\Omega$, $f|_{\partial\Omega}=g$ gives rise to a fixed point $\nabla f\in \operatorname{Fix}(S)$. But from here it follows immediately that $\mathcal{N}\times \nabla f=\mathcal{N}\times \nabla_{\partial\Omega}f=\mathcal{N}\times \nabla_{\partial\Omega}g$.
\end{proof}
\begin{cor}
		\label{6C4}
	Let $\Omega\subset \mathbb{R}^3$ be a bounded $C^{1,1}$-domain and let $S$ be defined as in \Cref{6L2} for some fixed $\Gamma\in \mathcal{H}_N(\Omega)\setminus\{0\}$. Let further $(\lambda_n)_n\subset [0,2]$ be a sequence such that $\sum_{n=1}^\infty\lambda_n(2-\lambda_n)=+\infty$ and let $X_0\in \mathcal{H}_{\operatorname{ex}}(\Omega)$ be any fixed element. Then the sequence $(X_n)_n\subset \mathcal{H}_{\operatorname{ex}}(\Omega)$ defined by $X_{n+1}:=X_n+\lambda_n(S(X_n)-X_n)$ satisfies the following properties
	\begin{enumerate}
		\item $X_n$ converges weakly in $L^2(\Omega)$ to some $\nabla f\in \operatorname{Fix}(S)$.
		\item $\|S(X_n)-X_n\|_{L^2(\Omega)}\rightarrow 0$, $n\rightarrow\infty$.
	\end{enumerate}
\end{cor}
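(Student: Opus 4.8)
The statement is an instance of the Krasnoselskii--Mann fixed point iteration, so my plan is to reduce it to the classical convergence theorem for that scheme rather than to redo the underlying Hilbert space analysis. First I would record the structural facts already at our disposal: $\mathcal{H}_{\operatorname{ex}}(\Omega)$ is an $L^2$-closed subspace of $L^2\mathcal{H}(\Omega)$ and is therefore itself a real Hilbert space; by \Cref{6L2} the map $S$ is a firmly non-expansive self-map of this space; and by \Cref{6L3}(i) its fixed point set $\operatorname{Fix}(S)$ is non-empty. These are exactly the standing hypotheses under which the relaxed Krasnoselskii--Mann iteration is known to converge.

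The central observation is the standard correspondence between firmly non-expansive and averaged operators. I would introduce the reflected operator $R:=2S-I$ and recall, which is essentially the content of \Cref{6D1} together with \cite{BC19}, that $S$ being firmly non-expansive is equivalent to $R$ being non-expansive, and that trivially $\operatorname{Fix}(R)=\operatorname{Fix}(S)$ since $RX=X\Leftrightarrow SX=X$. Writing $S=\frac{1}{2}(I+R)$ gives $S(X_n)-X_n=\frac{1}{2}(R(X_n)-X_n)$, so the defining recursion becomes $X_{n+1}=X_n+\frac{\lambda_n}{2}(R(X_n)-X_n)$. Setting $\alpha_n:=\lambda_n/2\in[0,1]$ turns this into the classical Krasnoselskii--Mann iteration for the non-expansive operator $R$, and the growth hypothesis transforms into $\sum_n\alpha_n(1-\alpha_n)=\frac{1}{4}\sum_n\lambda_n(2-\lambda_n)=+\infty$, precisely the summability condition required by the theorem. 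This also explains why the admissible relaxation range for $\lambda_n$ is $[0,2]$ rather than $[0,1]$.

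With the hypotheses matched, I would invoke the Krasnoselskii--Mann theorem from \cite{BC19}: for a non-expansive self-map $R$ of a closed convex subset of a Hilbert space with $\operatorname{Fix}(R)\neq\emptyset$ and relaxation parameters satisfying $\sum_n\alpha_n(1-\alpha_n)=+\infty$, one has the asymptotic regularity $R(X_n)-X_n\to 0$ strongly in $L^2(\Omega)$ together with weak convergence of $(X_n)_n$ to a point of $\operatorname{Fix}(R)$. Translating back via $S(X_n)-X_n=\frac{1}{2}(R(X_n)-X_n)$ immediately yields claim (ii). For (i), the weak limit lies in $\operatorname{Fix}(R)=\operatorname{Fix}(S)\subset\mathcal{H}_{\operatorname{ex}}(\Omega)$; since $\mathcal{H}_{\operatorname{ex}}(\Omega)$ is a closed, hence weakly closed, subspace, weak convergence within it coincides with weak $L^2(\Omega)$-convergence, and by the very definition of $\mathcal{H}_{\operatorname{ex}}(\Omega)$ the limit is of the form $\nabla f$ with $f\in H^1(\Omega)$ harmonic, which is exactly the asserted form of the limit.

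There is no genuine analytic obstacle here: the substance of the corollary was already expended in proving that $S$ is firmly non-expansive in \Cref{6L2} and that $\operatorname{Fix}(S)\neq\emptyset$ in \Cref{6L3}. The only points demanding care are bookkeeping ones, namely correctly passing between the firmly non-expansive relaxation range $[0,2]$ and the non-expansive range $[0,1]$ through the rescaling $\alpha_n=\lambda_n/2$, and confirming that the cited form of the Krasnoselskii--Mann theorem is stated in enough generality to deliver weak convergence and asymptotic regularity simultaneously. If the precisely cited statement only covers $\lambda_n\in[0,1]$, the reflected-operator reduction above closes that gap, so I would present the argument in that form to be safe.
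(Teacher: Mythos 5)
Your proposal is correct and follows essentially the same route as the paper, whose entire proof is to combine \Cref{6L2} (firm non-expansiveness of $S$), \Cref{6L3} ($\operatorname{Fix}(S)\neq\emptyset$) and the Krasnosel'skii--Mann convergence theorem \cite[Corollary 5.17]{BC19}, which is stated there precisely for firmly non-expansive operators with relaxation parameters $\lambda_n\in[0,2]$ satisfying $\sum_n\lambda_n(2-\lambda_n)=+\infty$. Your additional reduction via the reflected operator $R=2S-I$ is exactly how that corollary is derived from the non-expansive case, so it is a harmless (and correct) expansion of the same argument rather than a different one.
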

\begin{proof}[Proof of \Cref{6C4}]
	This is a direct consequence of \Cref{6L2}, \Cref{6L3} in combination with \cite[Corollary 5.17]{BC19}.
\end{proof}
We are now ready to state the main result of this section. To this end we define the space $W^{-\frac{1}{2},2}\mathcal{V}_0(\partial\Omega)$ as the completion of $L^2\mathcal{V}_0(\partial\Omega)$ equipped with the norm $\|j\|:=\sup_{\phi\in W^{\frac{1}{2},2}\mathcal{V}(\partial\Omega)}\frac{\left|\int_{\partial\Omega}j\cdot \phi d\sigma\right|}{\|\phi\|_{W^{\frac{1}{2},2}(\partial\Omega)}}$ and note that the Biot-Savart operator extends to a linear, continuous operator $\operatorname{BS}_{\partial\Omega}:W^{-\frac{1}{2},2}\mathcal{V}_0(\partial\Omega)\rightarrow L^2\mathcal{H}(\Omega)$, see \Cref{AppC} for more details.
\begin{thm}
	\label{6T5}
	Let $\Omega\subset \mathbb{R}^3$ be a bounded $C^{1,1}$-domain and $\Gamma \in \mathcal{H}_N(\Omega)\setminus \{0\}$ be any fixed element. Let $S:\mathcal{H}_{\operatorname{ex}}(\Omega)\rightarrow \mathcal{H}_{\operatorname{ex}}(\Omega)$ be defined as in \Cref{6L2}. Further we define the sequence $(X_n)_n\subset \mathcal{H}_{\operatorname{ex}}(\Omega)$ recursively by $X_0:=0$ and $X_{n+1}:=S(X_n)$. Then the following holds
	\begin{enumerate}
		\item $j_n:=\mathcal{N}\times \operatorname{BS}_{\Omega}(\Gamma)+\mathcal{N}\times X_n\in W^{-\frac{1}{2},2}\mathcal{V}_0(\partial\Omega)\cap \bigcap_{0<\alpha<1}C^{0,\alpha}\mathcal{V}(\partial\Omega)$ converges weakly in $W^{-\frac{1}{2},2}(\partial\Omega)$ to a non-trivial element $j\in \operatorname{Ker}(\operatorname{BS}_{\partial\Omega})$ as constructed in the proof of \Cref{3P18}.
		\item $\operatorname{BS}_{\partial\Omega}(j_n)$ converges weakly to zero in $L^2(\Omega)$.
		\item If $P\subset \Omega$ is any other $C^{1,1}$-domain with $\operatorname{dist}(P,\partial\Omega)>0$, then $\operatorname{BS}_{\partial\Omega}(j_n)$ converges strongly in $C^k(P)$ to zero as $n\rightarrow\infty$ for any fixed $k\in \mathbb{N}_0$.
	\end{enumerate}
\end{thm}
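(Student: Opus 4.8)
The plan is to recognise the recursion $X_{n+1}=S(X_n)$ as the iteration scheme of \Cref{6C4} with the constant relaxation parameters $\lambda_n\equiv 1$; since $\sum_n\lambda_n(2-\lambda_n)=\sum_n 1=+\infty$, \Cref{6C4} applies and yields a fixed point $\nabla f\in\operatorname{Fix}(S)$ with $X_n\rightharpoonup \nabla f$ weakly in $L^2(\Omega)$ and $\|S(X_n)-X_n\|_{L^2(\Omega)}\to 0$. By \Cref{6L3} (iv) this limit satisfies $\mathcal N\times\nabla f=\mathcal N\times\nabla_{\partial\Omega}g$ for any solution $g$ of (\ref{E61}), so the candidate weak limit is $j:=\mathcal N\times\operatorname{BS}_\Omega(\Gamma)+\mathcal N\times\nabla_{\partial\Omega}g$, which is precisely the kernel element produced in the proof of \Cref{3P18}. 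Its non-triviality and the membership $j\in\operatorname{Ker}(\operatorname{BS}_{\partial\Omega})$ are exactly what was shown there (recall that $\int_{\partial\Omega}j\cdot\Gamma\,d\sigma=\pm\|\Gamma\|^2_{L^2(\Omega)}\neq 0$).

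For the regularity statement in (i) I would argue by induction on $n$ that $X_n=\nabla f_n$ with $f_n|_{\partial\Omega}\in C^{1,\alpha}(\partial\Omega)$ for every $0<\alpha<1$, which forces $\mathcal N\times X_n=\mathcal N\times\nabla_{\partial\Omega}(f_n|_{\partial\Omega})\in C^{0,\alpha}\mathcal V(\partial\Omega)$ and hence $j_n\in\bigcap_{0<\alpha<1}C^{0,\alpha}\mathcal V(\partial\Omega)$, the summand $\mathcal N\times\operatorname{BS}_\Omega(\Gamma)$ being $C^{0,1}$ by the argument in the proof of \Cref{3P18}. The induction step uses \Cref{3L8} to rewrite the volume term of $S$ as $\frac1{4\pi}\int_\Omega\nabla f_n(y)\cdot\frac{x-y}{|x-y|^3}\,d^3y=-w_\Omega[f_n](x)+f_n(x)$, so the potential of $X_{n+1}$ equals a single-layer term (of class $C^{1,\alpha}(\overline\Omega)$) minus the double-layer potential $w_\Omega[f_n]$ (which inherits a $C^{1,\alpha}(\overline\Omega)$ extension from the induction hypothesis via the mapping properties of $w_\Omega$ recalled in the proof of \Cref{3P18}) plus $f_n$; taking boundary traces closes the induction. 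Each $j_n$ is divergence-free on $\partial\Omega$, both summands being of the form $\mathcal N\times(\text{curl-free field})$, i.e. coexact one-forms, so $j_n\in L^2\mathcal V_0(\partial\Omega)\subset W^{-\frac12,2}\mathcal V_0(\partial\Omega)$. To pass from $X_n\rightharpoonup\nabla f$ in $L^2(\Omega)$ to $j_n\rightharpoonup j$ in $W^{-\frac12,2}(\partial\Omega)$ I would exploit $\operatorname{curl}(X_n)=0$: this upgrades the convergence to $X_n\rightharpoonup\nabla f$ in $H(\operatorname{curl},\Omega)$, and since the tangential trace $X\mapsto\mathcal N\times X$ is a bounded linear map $H(\operatorname{curl},\Omega)\to W^{-\frac12,2}(\partial\Omega)$ (see \Cref{AppC}), it is weak--weak continuous, giving $\mathcal N\times X_n\rightharpoonup\mathcal N\times\nabla f$ and thus $j_n\rightharpoonup j$.

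Part (ii) is then immediate: the extended operator $\operatorname{BS}_{\partial\Omega}:W^{-\frac12,2}\mathcal V_0(\partial\Omega)\to L^2\mathcal H(\Omega)$ is bounded and linear, hence weakly continuous, so $\operatorname{BS}_{\partial\Omega}(j_n)\rightharpoonup\operatorname{BS}_{\partial\Omega}(j)=0$ in $L^2(\Omega)$ because $j\in\operatorname{Ker}(\operatorname{BS}_{\partial\Omega})$.

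For (iii) the decisive feature is that on $P$ the Biot--Savart kernel is smooth. Fixing $k$, each component of $\partial^\gamma_x\operatorname{BS}_{\partial\Omega}(j_n)(x)$ with $|\gamma|\le k+1$ can be written, for $x\in\overline P$, as a pairing $\int_{\partial\Omega}j_n\cdot\Phi^{\gamma}_x\,d\sigma$ against a vector field $\Phi^{\gamma}_x$ which, thanks to $\operatorname{dist}(P,\partial\Omega)>0$, is smooth (in particular of class $W^{\frac12,2}$) on $\partial\Omega$ and depends continuously on $x\in\overline P$. Weak convergence $j_n\rightharpoonup j$ then gives pointwise convergence $\partial^\gamma_x\operatorname{BS}_{\partial\Omega}(j_n)(x)\to\partial^\gamma_x\operatorname{BS}_{\partial\Omega}(j)(x)=0$, while the uniform bound $\|j_n\|_{W^{-\frac12,2}}\le C$ (weakly convergent sequences are bounded) together with $\sup_{x\in\overline P}\|\Phi^{\gamma}_x\|_{W^{\frac12,2}}<\infty$ yields a uniform $C^{k+1}(\overline P)$ bound on $\operatorname{BS}_{\partial\Omega}(j_n)$. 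By Arzel\`a--Ascoli the sequence is precompact in $C^k(\overline P)$, and since $0$ is its only possible accumulation point, $\operatorname{BS}_{\partial\Omega}(j_n)\to 0$ in $C^k(P)$. I expect the main obstacle to be the trace passage in (i): weak $L^2$ convergence in the bulk controls the boundary data only because the iterates are curl-free, and making the weak continuity of the tangential trace into $W^{-\frac12,2}(\partial\Omega)$ rigorous is where the $H(\operatorname{curl})$ trace theory of \Cref{AppC} is essential.
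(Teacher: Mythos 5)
Your argument follows the paper's own proof almost step for step: \Cref{6C4} with $\lambda_n\equiv 1$ gives $X_n\rightharpoonup\nabla f\in\operatorname{Fix}(S)$ in $L^2(\Omega)$; since $\operatorname{curl}(X_n)=0$ this is already weak convergence in $H(\operatorname{curl},\Omega)$, so the trace continuity of \Cref{CL1} gives $j_n\rightharpoonup j$ in $W^{-\frac{1}{2},2}(\partial\Omega)$, with the limit identified through \Cref{6L3} (iv) as the kernel element constructed in \Cref{3P18}; part (ii) is weak--weak continuity of the extended operator; and part (iii) rests on the bound $\|\operatorname{BS}_{\partial\Omega}(j)\|_{C^{k+1}(\overline{P})}\leq c\|j\|_{W^{-\frac{1}{2},2}(\partial\Omega)}$ obtained by pairing the smooth kernel against $j$, combined with compactness --- your Arzel\`a--Ascoli step is precisely the paper's compact embedding $C^{k+1}(\overline{P})\hookrightarrow C^{k}(\overline{P})$ (as in the paper, one should replace the kernel fields $\Phi_x^\gamma$ by their tangential projections, since the $W^{-\frac{1}{2},2}$-pairing is against tangent fields; this is harmless because $j_n$ is tangent). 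Your induction for the H\"older regularity of $j_n$ is a legitimate variant of the paper's: you rewrite the volume potential via \Cref{3L8} as $-w_\Omega[f_n]+f_n$ and use the $C^{1,\alpha}$ mapping properties of the double-layer operator, whereas the paper uses divergence-freeness of the iterates to rewrite it as a single-layer potential and invokes elliptic regularity for the Dirichlet problem; both close the induction.

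One justification is incorrect, although the claim it supports is true and needed. You assert that $j_n$ is divergence-free because both summands are of the form $\mathcal{N}\times(\text{curl-free field})$. This is false for the first summand: $\operatorname{BS}_{\Omega}(\Gamma)$ is \emph{not} curl-free in $\Omega$; on the contrary, $\operatorname{curl}(\operatorname{BS}_{\Omega}(\Gamma))=\Gamma\neq 0$, which is exactly why it serves as a vector potential of $\Gamma$ and is used as such in Step 2 of the proof of \Cref{3P18}. The field $\mathcal{N}\times\operatorname{BS}_{\Omega}(\Gamma)$ is nonetheless weakly divergence-free, but for a different reason: testing against a gradient and using $\operatorname{div}(X\times Y)=\operatorname{curl}(X)\cdot Y-X\cdot\operatorname{curl}(Y)$ gives
\begin{gather}
	\nonumber
	\int_{\partial\Omega}\operatorname{grad}(\phi)\cdot\left(\operatorname{BS}_{\Omega}(\Gamma)\times\mathcal{N}\right)d\sigma=\int_{\Omega}\operatorname{div}\left(\operatorname{grad}(\phi)\times\operatorname{BS}_{\Omega}(\Gamma)\right)d^3x=-\int_{\Omega}\operatorname{grad}(\phi)\cdot\Gamma\, d^3x=0,
\end{gather}
the last equality by the $L^2$-orthogonality of $\mathcal{H}_N(\Omega)$ to gradient fields. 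Equivalently: the weak surface divergence of $\mathcal{N}\times X$ is, up to sign, $\mathcal{N}\cdot\operatorname{curl}(X)$, and $\operatorname{curl}(\operatorname{BS}_{\Omega}(\Gamma))=\Gamma$ is tangent to $\partial\Omega$. This is precisely the computation in Step 1 of the proof of \Cref{3P18}, which the paper's proof of the theorem simply cites at this point; with this repair your argument is complete.
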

\begin{proof}[Proof of \Cref{6T5}]
	\underline{(i):}
	We start by observing that we had already shown in the proof of \Cref{3P18} that $\mathcal{N}\times \operatorname{BS}_{\Omega}(\Gamma)$ is divergence-free and of class $C^{0,1}\mathcal{V}(\partial\Omega)$. Further we claim that $S(X)\in \bigcap_{1<p<\infty}W^{1,p}\mathcal{V}(\Omega)$ whenever $X\in \bigcap W^{1,p}\mathcal{V}(\partial\Omega)\cap \mathcal{H}_{\operatorname{ex}}(\Omega)$. To this end it is enough to prove that the function $F(x):=\int_{\Omega}X(y)\cdot \frac{x-y}{|x-y|^3}d^3y$ is of class $W^{2,p}(\Omega)$ for all $1<p<\infty$. We recall that the regularity of the Newton potential implies $F\in H^1(\Omega)$ and that $\Delta F=0$ in $\Omega$. Further, since $X$ is assumed divergence-free we find $F(x)=\int_{\partial\Omega}\frac{X(y)\cdot \mathcal{N}(y)}{|x-y|}d\sigma(y)$. By Sobolev embeddings we know that $X\in \bigcap_{0<\alpha<1}C^{0,\alpha}\mathcal{V}(\overline{\Omega})$. It then follows identically as in the proof of \Cref{3P18} that $F|_{\partial\Omega}\in \bigcap_{0<\alpha<1}C^{1,\alpha}(\partial\Omega)\subseteq \bigcap_{1<p<\infty}W^{2-\frac{1}{p},p}(\partial\Omega)$. Hence it follows from standard elliptic regularity results that $F\in \bigcap_{1<p<\infty}W^{2,p}(\Omega)$ and the regularity assertion follows. The fact that all of the $(j_n)_n$ are divergence-free in the weak sense follows also from the same arguments as in the proof of \Cref{3P18} and so in particular $j_n\in W^{-\frac{1}{2},2}\mathcal{V}_0(\partial\Omega)$.
	
	Letting $\lambda_n:=1$ for all $n\in \mathbb{N}$ and setting $X_0:=0$ it follows from \Cref{6C4} that the sequence $(X_n)_n\subset \mathcal{H}_{\operatorname{ex}}(\Omega)$ converges to some $\nabla f\in \operatorname{Fix}(S)$ weakly in $L^2(\Omega)$. We observe further that $\operatorname{curl}(Y)=0$ for every $Y\in \mathcal{H}_{\operatorname{ex}}(\Omega)$ and therefore $(X_n)_n$ converges in fact weakly in $H(\operatorname{curl},\Omega)$ to $\nabla f\in \operatorname{Fix}(S)$. It then follows from the continuity of the tangential trace, see \Cref{CL1}, that $\mathcal{N}\times X_n$ converges weakly in $W^{-\frac{1}{2},2}(\partial\Omega)$ to $\mathcal{N}\times \nabla f$. According to \Cref{6L3} we have $\mathcal{N}\times \nabla f=\mathcal{N}\times \nabla_{\partial\Omega}g$ where $g$ is a solution to (\ref{E61}). We conclude that $\mathcal{N}\times \operatorname{BS}_{\Omega}(\Gamma)+\mathcal{N}\times X_n$ converges weakly to $\mathcal{N}\times \operatorname{BS}_{\Omega}(\Gamma)+\mathcal{N}\times \nabla_{\partial\Omega}g$ in $W^{-\frac{1}{2},2}(\partial\Omega)$ which by the given construction in \Cref{3P18} is a non-trivial element of $\operatorname{Ker}(\operatorname{BS}_{\partial\Omega})$.
	\newline
	\newline
	\underline{(ii):} It follows from the continuity of the Biot-Savart operator, \Cref{CL1}, that $\operatorname{BS}_{\partial\Omega}(j_n)$ converges weakly in $L^2(\Omega)$ to $\operatorname{BS}_{\partial\Omega}(j)=0$ since $j\in \operatorname{Ker}(\operatorname{BS}_{\partial\Omega})$.
	\newline
	\newline
	\underline{(iii):} We claim that $\operatorname{BS}_{\partial\Omega}:W^{-\frac{1}{2},2}\mathcal{V}_0(\partial\Omega)\rightarrow C^k\mathcal{V}(\overline{P})$ is a continuous operator for any fixed $k\in \mathbb{N}_0$. The assertion will then follow from the fact that $C^{k+1}(\overline{P})$ embeds compactly into $C^k(\overline{P})$ and the fact that $\operatorname{BS}_{\partial\Omega}(j_n)$ converges weakly to $0$ in $L^2(\Omega)$. We prove this fact for $k=0$ since the general case can be obtained in the same spirit. For this purpose we assume $j\in L^2\mathcal{V}_0(\partial\Omega)$ and write
	\begin{gather}
		\nonumber
		4\pi\operatorname{BS}_{\partial\Omega}(j)(x)=\epsilon_{imk}\int_{\partial\Omega}j^i(y)\frac{x^m-y^m}{|x-y|^3}d\sigma(y)e_k 
	\end{gather}
	where we use the Einstein summation convention. We define now for fixed $1\leq k\leq 3$ the vector field $\psi^k_x(y):=\epsilon_{lmk}\frac{x^m-y^m}{|x-y|^3}e_l$ which defines a smooth vector field on $\partial\Omega$ which is not-necessarily tangent to $\partial\Omega$. We may further set $\phi^k_x(y):=\psi^k_x(y)-(\mathcal{N}(y)\cdot \psi^k_x(y))\mathcal{N}(y)$ which gives rise to a $C^{0,1}\mathcal{V}(\partial\Omega)$-vector field. We hence conclude
	\begin{gather}
		\nonumber
		4\pi|\operatorname{BS}_{\partial\Omega}(j)(x)|\leq \sum_{k=1}^3\left|\int_{\partial\Omega}j(y)\cdot \phi^k_x(y)d\sigma(y)\right|\leq \|j\|_{W^{-\frac{1}{2},2}(\partial\Omega)}\sum_{k=1}^3\|\phi^k_x\|_{W^{\frac{1}{2},2}(\partial\Omega)}
	\end{gather}
	by definition of the norm $W^{-\frac{1}{2},2}$. It is now easy to see that because $|x-y|\geq \delta$ for all $x\in P$ and $y\in \partial\Omega$ for a suitable constant $\delta>0$ independent of $x$ and $y$ that we can bound $\|\phi^k_x\|_{W^{\frac{1}{2},2}}(\partial\Omega)$ above by some constant independent of $x$. Taking the supremum on both sides of the inequality we find $\|\operatorname{BS}_{\partial\Omega}(j)\|_{C^0(P)}\leq c\|j\|_{W^{-\frac{1}{2},2}(\partial\Omega)}$ for some $c>0$ independent of $j$. As explained before the general case follows in a similar fashion and the result follows by employing compact embeddings.
\end{proof}
\section*{Acknowledgements}
This work has been supported by the Inria AEX StellaCage.
\appendix
\section{Some regularity results}
We define here $\mathcal{H}_N(\Omega):=\{\Gamma\in L^2\mathcal{V}(\Omega)\mid \operatorname{curl}(\Gamma)=0=\operatorname{div}(\Gamma)\text{, }\Gamma\parallel \partial \Omega\}$, where the condition $\operatorname{div}(\Gamma)=0$ and $\Gamma\parallel \partial\Omega$ is understood in the sense that $\Gamma$ is $L^2$-orthogonal to all $\nabla f$, $f\in H^1(\Omega)$.
\begin{lem}
	\label{ALemma1}
	Let $\Omega\subset \mathbb{R}^3$ be a bounded $C^{1,1}$-domain. Then $\mathcal{H}_N(\Omega)\subset \bigcap_{1<p<\infty}W^{1,p}\mathcal{V}(\Omega)$.
\end{lem}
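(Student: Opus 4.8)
The plan is to separate interior regularity, which is immediate, from boundary regularity, which is the only genuine difficulty, and to reduce the latter to the $W^{2,p}$-regularity theory for the Neumann Laplacian on $C^{1,1}$-domains.

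First I would record that an element $\Gamma\in\mathcal{H}_N(\Omega)$ is smooth in the interior. Indeed, combining the weak conditions $\operatorname{curl}(\Gamma)=0$ and $\operatorname{div}(\Gamma)=0$ with the vector identity $-\Delta=\operatorname{curl}\operatorname{curl}-\nabla\operatorname{div}$ shows that each Cartesian component of $\Gamma$ is weakly harmonic, so by Weyl's lemma $\Gamma\in C^\infty\mathcal{V}(\Omega)$. In particular $\Gamma\in W^{1,p}_{\mathrm{loc}}(\Omega)$ for every $1<p<\infty$, and only the behaviour near $\partial\Omega$ remains to be controlled.

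Since boundary regularity is a local statement, I would fix a point $x_0\in\partial\Omega$ and a small ball $B=B_r(x_0)$ such that $U:=B\cap\Omega$ is a $C^{1,1}$-domain with contractible closure. On $U$ the field $\Gamma$ is a curl-free $L^2$-field on a simply connected set, hence a gradient $\Gamma=\nabla u$ for some $u\in W^{1,2}(U)$ by the $L^2$-Poincar\'e lemma. The defining property that $\Gamma$ be $L^2$-orthogonal to all gradients $\nabla f$, $f\in H^1(\Omega)$, encodes both $\operatorname{div}(\Gamma)=0$ and the vanishing of the normal trace $\Gamma\cdot\mathcal{N}$ on $\partial\Omega$; localised by a cut-off $\chi\in C_c^\infty(B)$ this says precisely that $u$ is a weak solution of the homogeneous Neumann problem $\Delta u=0$ in $U$ with $\partial_{\mathcal{N}}u=0$ on $B\cap\partial\Omega$, the artificial portion of $\partial U$ lying inside $\Omega$ being harmless because $u$ is already smooth there. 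Applying the $W^{2,p}$-regularity theory for the Neumann problem on $C^{1,1}$-domains (the Neumann counterpart of \cite[Theorem 2.4.2.5]{Gris85}) then yields $u\in W^{2,p}(B'\cap\Omega)$ for a smaller ball $B'\ni x_0$ and every $1<p<\infty$, whence $\Gamma=\nabla u\in W^{1,p}(B'\cap\Omega)$.

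Finally I would patch the local estimates together: by compactness of $\partial\Omega$ finitely many such boundary balls $B'_1,\dots,B'_m$ cover $\partial\Omega$, and together with an interior patch they cover $\overline{\Omega}$, so combining the local $W^{1,p}$-bounds through a subordinate partition of unity gives $\Gamma\in W^{1,p}\mathcal{V}(\Omega)$ for every $1<p<\infty$. The main obstacle is the boundary step: one must check carefully that orthogonality to all of $\nabla H^1(\Omega)$ translates into exactly the homogeneous Neumann condition for the local potential $u$, and then invoke an $L^p$-elliptic boundary regularity result valid under the mere $C^{1,1}$-hypothesis on $\partial\Omega$ rather than under $C^\infty$-smoothness. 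An alternative route that sidesteps the potential altogether is to quote directly the $W^{1,p}$-regularity of fields in $H(\operatorname{curl},\Omega)\cap H(\operatorname{div},\Omega)$ with vanishing normal trace, in the spirit of the results of \cite{ABDG98} already used elsewhere in this work.
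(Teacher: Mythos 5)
Your proposal is correct and follows essentially the same route as the paper: the paper likewise localises at a boundary point, writes $\Gamma=\operatorname{grad}(f)$ there using curl-freeness, identifies the homogeneous Neumann problem satisfied by $f$, obtains $W^{2,p}$-regularity for every $1<p<\infty$ from Grisvard's $C^{1,1}$-theory together with a standard bootstrap, and concludes by compactness of $\overline{\Omega}$. The only cosmetic differences are that the paper seeds the argument with the global $W^{1,2}$-bound of \cite[Theorem 2.9]{ABDG98} (rather than Weyl's lemma in the interior) and, instead of quoting a Neumann analogue of \cite[Theorem 2.4.2.5]{Gris85} on the curved domain, flattens the boundary with a $C^{1,1}$-chart and applies \cite[Lemma 2.4.1.4]{Gris85} to the resulting operator with Lipschitz coefficients --- which is also why the cut-off you introduce is indispensable rather than optional, since $B'\cap\Omega$ itself is not a $C^{1,1}$-domain and the regularity theorem cannot be applied to it directly.
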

\begin{proof}[Proof of \Cref{ALemma1}]
	It follows first from \cite[Theorem 2.9]{ABDG98} that $\mathcal{H}_N(\Omega)\subset W^{1,2}\mathcal{V}(\Omega)$. Next we can localise the problem by flattening the boundary (the interior case can be handled identically by working in a local chart). We observe that locally around any fixed point $x\in \partial\Omega$, $\operatorname{curl}(\Gamma)=0$ implies that we may write $\Gamma=\operatorname{grad}(f)$ for some $f\in H^1(U)$, where $U=V\cap \overline{\Omega}$ is a suitable open neighbourhood of $x$. We see that $\Delta f=0$ in $U$ and $\mathcal{N}\cdot \nabla f=0$ on $\partial \Omega\cap U$. In order to show that $f$ is locally of class $W^{2,p}$ for every $1<p<\infty$, we compose $f$ with a boundary chart $\mu:U\rightarrow \{(x,y,z)|z\geq 0\}$ and multiply it by a suitable bump function $\rho$ (with an abuse of notation we write, for notational simplicity, $f$ for the composition $f\circ \mu^{-1}$). Letting $\tilde{f}:=\rho f$ we see that $\Delta_g\tilde{f}=f\Delta_g\rho-2g(\nabla_g f, \nabla_g \rho)$ and $\mathcal{N}_g\cdot \operatorname{grad}_g(\tilde{f})=\mathcal{N}^i_g\partial_i\rho f$, where the quantities are now considered with respect to a pulled back metric $g$ with $C^{0,1}$-coefficients $g_{ij}$. Since we multiplied $f$ by a bump function we may further assume that $\tilde{f}$ is defined on some smoothly bounded domain. We recall that we already know that $\tilde{f}\in W^{2,2}$ because $\Gamma\in W^{1,2}\mathcal{V}(\Omega)$, thus \cite[Lemma 2.4.1.4]{Gris85} together with a standard bootstrapping argument implies that $\tilde{f}\in W^{2,p}$ for all $1<p<\infty$ and consequently $f\in W^{2,p}$ after possibly shrinking the neighbourhood $U$ and thus $\Gamma \in W^{1,p}\mathcal{V}(\Omega)$ by compactness of $\overline{\Omega}$.
\end{proof}
We will also need the corresponding result for normal boundary conditions. To this end we let $\mathcal{H}_D(\Omega):=\{\Theta\in L^2\mathcal{V}(\Omega)\mid \operatorname{curl}(\Theta)=0=\operatorname{div}(\Theta)\text{, }\Theta\perp \partial\Omega\}$.
\begin{lem}
	\label{ALemma2}
	Let $\Omega\subset\mathbb{R}^3$ be a bounded $C^{1,1}$-domain. Then $\mathcal{H}_D(\Omega)\subset \bigcap_{1<p<\infty}W^{1,p}\mathcal{V}(\Omega)$.
\end{lem}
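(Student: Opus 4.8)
The plan is to follow the proof of \Cref{ALemma1} almost verbatim, the only essential change being that the condition $\Theta\perp\partial\Omega$ now produces a Dirichlet- rather than a Neumann-type boundary value problem for the local scalar potential.

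First I would establish the base regularity $\mathcal{H}_D(\Omega)\subset W^{1,2}\mathcal{V}(\Omega)$. Whereas \Cref{ALemma1} invoked \cite[Theorem 2.9]{ABDG98} for the normal-trace space, here I would instead appeal to the companion statement in \cite{ABDG98} dealing with vanishing tangential trace $\mathcal{N}\times\Theta=0$, which is exactly the condition encoded by $\Theta\perp\partial\Omega$, and which again yields $\Theta\in W^{1,2}\mathcal{V}(\Omega)$. Having this much regularity makes the boundary condition meaningful as a genuine trace in the subsequent steps.

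Next I would localise exactly as before. Around any fixed $x\in\partial\Omega$ (the interior case being identical and easier) the condition $\operatorname{curl}(\Theta)=0$ lets me write $\Theta=\operatorname{grad}(f)$ on a small neighbourhood $U=V\cap\overline{\Omega}$ with $f\in H^1(U)$. Then $\operatorname{div}(\Theta)=0$ gives $\Delta f=0$ in $U$, while $\Theta\perp\partial\Omega$ means precisely that the tangential part $\nabla_{\partial\Omega}f$ vanishes on $\partial\Omega\cap U$; after shrinking $U$ so that $\partial\Omega\cap U$ is connected, this forces $f$ to be constant along $\partial\Omega\cap U$, and subtracting this constant I may assume $f=0$ on $\partial\Omega\cap U$. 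Thus $f$ solves a homogeneous Dirichlet problem, in contrast to the Neumann problem $\mathcal{N}\cdot\nabla f=0$ that appeared in \Cref{ALemma1}.

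Finally I would flatten the boundary via a chart $\mu$ and multiply by a bump function $\rho$ to obtain $\tilde{f}=\rho f$, defined on a smoothly bounded domain and solving $\Delta_g\tilde{f}=f\Delta_g\rho-2g(\nabla_g f,\nabla_g\rho)$ together with the homogeneous Dirichlet condition $\tilde{f}=0$ on the flattened boundary, where $g$ is the pulled-back metric with $C^{0,1}$-coefficients. Knowing already that $\tilde{f}\in W^{2,2}$, I would then invoke $W^{2,p}$-regularity for the Dirichlet problem of a second-order elliptic operator with continuous (here Lipschitz) leading coefficients and $C^{1,1}$-boundary, e.g. \cite[Theorem 9.15]{GT01} together with a bootstrapping argument as in \Cref{ALemma1}, or \cite[Theorem 2.4.2.5]{Gris85}, to conclude $\tilde{f}\in W^{2,p}$ for all $1<p<\infty$, hence $f\in W^{2,p}$ locally and $\Theta\in W^{1,p}\mathcal{V}(\Omega)$ by compactness of $\overline{\Omega}$. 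The main, and essentially only, technical point, exactly as in \Cref{ALemma1}, is that flattening a merely $C^{1,1}$-boundary yields Lipschitz rather than smooth metric coefficients, so one must use an elliptic regularity result valid under this limited regularity; the passage from the Neumann to the Dirichlet boundary condition is otherwise cosmetic.
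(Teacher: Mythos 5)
Your proposal is correct and follows essentially the same route as the paper's proof: base $H^1$-regularity from the embedding theorem in \cite{ABDG98} for fields with vanishing tangential trace (the paper cites Theorem 2.12 there), localisation of the curl-free field as a gradient, the observation that $\Theta\perp\partial\Omega$ forces the local potential to be locally constant on the boundary and hence to solve a homogeneous Dirichlet problem, and finally $W^{2,p}$-elliptic regularity with Lipschitz coefficients after flattening the boundary, combined with a bootstrap. The only difference is the reference used for the elliptic regularity step (the paper invokes \cite[Theorem 2.4.1.4]{Gris85} where you suggest \cite[Theorem 9.15]{GT01} or Grisvard), which is immaterial.
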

\begin{proof}
	It follows first from \cite[Theorem 2.12]{ABDG98} that $\mathcal{H}_D(\Omega)\subset H^1\mathcal{V}(\Omega)$. We can then just like in the proof of \Cref{ALemma1} localise the problem and hence assume that $\Theta=\nabla f$ for a suitable function $f\in W^{2,2}$. We observe that the condition $\nabla f\times \mathcal{N}=0$ implies that $f$ is locally constant along $\partial\Omega$ and so it satisfies appropriate Dirichlet-boundary conditions. We can then similarly employ \cite[Theorem 2.4.1.4]{Gris85} and argue identical in spirit as in the proof of \Cref{ALemma1} to deduce the result.
\end{proof}
\section{An $L^2$-Hodge decomposition theorem}
\begin{thm}
	\label{BTheorem1}
	Let $\Omega\subset \mathbb{R}^3$ be a bounded $C^{1,1}$-domain, then for every $v\in L^2\mathcal{V}(\Omega)$ there exists some $f\in W^{1,2}_0(\Omega)$, $A\in W^{1,2}\mathcal{V}(\Omega)$, $\operatorname{div}(A)=0$, $A\perp \partial\Omega$, $h\in W^{1,2}(\Omega)$ with $\Delta h=0$ in the weak sense in $\Omega$ and $\Gamma\in \mathcal{H}_N(\Omega)$ with
	\begin{gather}
		\nonumber
		v=\operatorname{grad}(f)+\operatorname{curl}(A)+\operatorname{grad}(h)+\Gamma
	\end{gather}
	and this decomposition is $L^2$-orthogonal. Further, $f,h$ and $A$ may be chosen such that
	\begin{gather}
		\nonumber
		\|f\|_{W^{1,2}(\Omega)}+\|h\|_{W^{1,2}(\Omega)}+\|A\|_{W^{1,2}(\Omega)}+\|\Gamma\|_{L^2(\Omega)}\leq c\|v\|_{L^2(\Omega)}
	\end{gather}
	for some $c>0$ independent of $v$.
\end{thm}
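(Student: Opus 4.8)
The plan is to build the decomposition by a short sequence of $L^2$-orthogonal projections, each realised by a variational problem (Riesz or Lax--Milgram), and only at the very end to check that the four resulting pieces lie in the claimed subspaces. Orthogonality of the final decomposition requires no care in the construction itself: a preliminary integration by parts shows that the four target subspaces
\[
\nabla W^{1,2}_0(\Omega),\quad \{\operatorname{curl}(A): A\in W^{1,2}\mathcal{V}(\Omega),\ \operatorname{div}(A)=0,\ A\perp\partial\Omega\},\quad \{\nabla h : \Delta h=0\},\quad \mathcal{H}_N(\Omega)
\]
are pairwise $L^2$-orthogonal. For instance $\int_\Omega\nabla f\cdot\operatorname{curl}(A)=\int_{\partial\Omega}f\,(\mathcal{N}\cdot\operatorname{curl}(A))=0$ since $f$ has vanishing trace, $\int_\Omega\operatorname{curl}(A)\cdot\Gamma=\int_{\partial\Omega}(\mathcal{N}\times A)\cdot\Gamma=0$ since $\operatorname{curl}(\Gamma)=0$ and $\mathcal{N}\times A=0$, and $\int_\Omega\nabla h\cdot\Gamma=0$ since $\Gamma$ is divergence free and tangent; the remaining pairs are analogous. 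Hence once each piece is produced in its subspace, orthogonality is automatic.

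First I would peel off the full gradient part. The space $G:=\nabla W^{1,2}(\Omega)$ is $L^2$-closed (Poincar\'e--Wirtinger on each connected component of $\Omega$), so let $\nabla\psi$ be the $L^2$-orthogonal projection of $v$ onto $G$; its complement $w_1:=v-\nabla\psi$ then satisfies $\int_\Omega w_1\cdot\nabla\phi=0$ for all $\phi\in W^{1,2}(\Omega)$, i.e. $w_1$ is weakly divergence free and tangent to $\partial\Omega$. To split $\nabla\psi=\nabla f+\nabla h$ I solve $\int_\Omega\nabla f\cdot\nabla\phi=\int_\Omega\nabla\psi\cdot\nabla\phi$ for all $\phi\in W^{1,2}_0(\Omega)$ by Riesz, giving $f\in W^{1,2}_0(\Omega)$; then $h:=\psi-f$ is weakly harmonic and $\nabla f\perp_{L^2}\nabla h$. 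This produces two of the four terms with $\|\nabla f\|_{L^2},\|\nabla h\|_{L^2},\|w_1\|_{L^2}\le\|v\|_{L^2}$, together with $\|f\|_{W^{1,2}}\le c\|\nabla f\|_{L^2}$ (Poincar\'e) and $\|h\|_{W^{1,2}}\le c\|\nabla h\|_{L^2}$ after normalising $h$ by an additive constant on each component.

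It remains to decompose the tangent, divergence-free field $w_1$ as $\operatorname{curl}(A)+\Gamma$. On the Hilbert space $X_0:=\{A\in W^{1,2}\mathcal{V}(\Omega): \operatorname{div}(A)=0,\ A\perp\partial\Omega,\ A\perp_{L^2}\mathcal{H}_D(\Omega)\}$ the Friedrichs-type inequality $\|A\|_{W^{1,2}}\le c\|\operatorname{curl}(A)\|_{L^2}$ holds; this is the key analytic input, which I would import from \cite{ABDG98} and which uses the $C^{1,1}$-regularity of $\Omega$, the compactness of the embedding, and the finite-dimensional obstruction $\mathcal{H}_D(\Omega)$ (cf. \Cref{ALemma2}). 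Consequently $(A,B)\mapsto\int_\Omega\operatorname{curl}(A)\cdot\operatorname{curl}(B)$ is coercive on $X_0$ and $B\mapsto\int_\Omega w_1\cdot\operatorname{curl}(B)$ is bounded, so Lax--Milgram yields a unique $A\in X_0$ with $\int_\Omega\operatorname{curl}(A)\cdot\operatorname{curl}(B)=\int_\Omega w_1\cdot\operatorname{curl}(B)$ for all $B\in X_0$. Setting $\Gamma:=w_1-\operatorname{curl}(A)$, the field $\Gamma$ is immediately divergence free and tangent to $\partial\Omega$ (both $w_1$ and $\operatorname{curl}(A)$ are, the latter using $\mathcal{N}\times A=0$).

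The one remaining point is $\operatorname{curl}(\Gamma)=0$. The variational identity gives $\int_\Omega\Gamma\cdot\operatorname{curl}(B)=0$ for all $B\in X_0$, and this extends to all $B$ in $X:=\{A\in W^{1,2}\mathcal{V}(\Omega): \operatorname{div}(A)=0,\ A\perp\partial\Omega\}$ because the discarded directions lie in $\mathcal{H}_D(\Omega)=\ker(\operatorname{curl})$. To conclude distributionally, I would test against arbitrary $\psi\in C^\infty_c\mathcal{V}(\Omega)$ and solve $\Delta q=\operatorname{div}(\psi)$ in $\Omega$, $q=0$ on $\partial\Omega$; elliptic regularity on the $C^{1,1}$-domain gives $q\in W^{2,2}(\Omega)$, so $B:=\psi-\nabla q$ is divergence free with $\mathcal{N}\times B=0$ (as $q$ is constant on $\partial\Omega$), hence $B\in X$ and $\operatorname{curl}(B)=\operatorname{curl}(\psi)$. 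Therefore $\int_\Omega\Gamma\cdot\operatorname{curl}(\psi)=\int_\Omega\Gamma\cdot\operatorname{curl}(B)=0$, so $\operatorname{curl}(\Gamma)=0$, and by \Cref{ALemma1} we get $\Gamma\in\mathcal{H}_N(\Omega)$. Combining the three projections yields $v=\nabla f+\operatorname{curl}(A)+\nabla h+\Gamma$ with the stated bound. I expect the Friedrichs inequality furnishing coercivity on $X_0$ to be the main obstacle, since that is exactly where the non-smoothness of $\Omega$ and the harmonic obstruction space enter.
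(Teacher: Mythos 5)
Your proposal is correct and follows essentially the same route as the paper's proof: successive $L^2$-orthogonal projections, first onto gradients (split into the zero-trace part and the weakly harmonic part), then onto the curl space via the key Friedrichs-type inequality $\|A\|_{W^{1,2}(\Omega)}\leq c\|\operatorname{curl}(A)\|_{L^2(\Omega)}$ for divergence-free potentials normal to $\partial\Omega$ and $L^2$-orthogonal to $\mathcal{H}_D(\Omega)$, and finally the identical trick for $\operatorname{curl}(\Gamma)=0$, namely correcting a compactly supported test field by the gradient of a Dirichlet solution (using \cite[Theorem 2.4.2.5]{Gris85}) so that it becomes an admissible potential. The only cosmetic differences are that you realise the projection onto the curl space by Lax--Milgram on the space of potentials rather than by proving $L^2$-closedness of the image $\mathcal{Z}=\{\operatorname{curl}(A)\}$, and that you import the key inequality from \cite{ABDG98} instead of \cite{S95}.
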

\begin{proof}
	We start by considering the space $\mathcal{X}:=\{\nabla \rho\mid \rho \in H^1(\Omega)\}$. Upon subtracting the mean value of a given $\rho\in H^1(\Omega)$ we see that we may restrict attention to functions of zero mean. Then the Poincar\'{e} inequality implies that this is an $L^2$-closed subspace. Similarly we may now consider the subspace $\mathcal{Y}:=\{\nabla f|f\in H^1_0(\Omega)\}\leq \mathcal{X}$. Again an application of the Poincar\'{e} inequality implies the $L^2$-closedness of this subspace. We can therefore express any $v\in L^2\mathcal{V}(\Omega)$ as $v=\operatorname{grad}(f)+\operatorname{grad}(h)+w$ where $w$ is $L^2$-orthogonal to $\mathcal{X}$. We note that $\operatorname{grad}(h)$ being $L^2$-orthogonal to $\mathcal{Y}$ is equivalent to the statement $\Delta h=0$ in the weak sense. We now consider the space $\mathcal{Z}:=\{\operatorname{curl}(A)|A\in W^{1,2}\mathcal{V}(\Omega)\text{, }\operatorname{div}(A)=0\text{, }A\perp \partial\Omega\}$. We can consider the subspace $\mathcal{H}_D(\Omega)=\{\Theta\in L^2\mathcal{V}(\Omega)|\operatorname{curl}(\Theta)=0=\operatorname{div}(\Theta)\text{, }\Theta \perp \partial\Omega\}$ and observe that $\mathcal{H}_D(\Omega)\subset W^{1,2}\mathcal{V}(\Omega)$, c.f. \cite[Theorem 2.12]{ABDG98}. Then the vector potential $A$ of any element of the space $\mathcal{Z}$ may be additionally assumed to be $L^2$-orthogonal to the space $\mathcal{H}_D(\Omega)$. It then follows identically as in \cite[Theorem 2.1.5, Corollary 2.1.6 \& Proposition 2.2.3]{S95}, keeping in mind that the principal curvatures of a $C^{1,1}$-boundary are essentially bounded functions, that we have the Poincar\'{e}-type inequality $\|A\|_{W^{1,2}(\Omega)}\leq c\|\operatorname{curl}(A)\|_{L^2(\Omega)}$ for some constant $c>0$ independent of $A$. From this it follows easily that $\mathcal{Z}$ is $L^2$-closed. We observe that $\mathcal{Z}$ is $L^2$-orthogonal to $\mathcal{X}$ and thus we may decompose further $v=\operatorname{grad}(f)+\operatorname{grad}(h)+\operatorname{curl}(A)+\Gamma$. We claim that $\Gamma\in \mathcal{H}_N(\Omega)$. Since $\Gamma$ is $L^2$-orthogonal to $\mathcal{X}$ we only need to argue that $\Gamma$ is $L^2$-orthogonal to every vector field of the form $\operatorname{curl}(\psi)$ for some smooth vector field $\psi$ which is compactly supported in $\Omega$. We see that $\psi$ is normal to the boundary of $\Omega$ and we can then find a solution $\phi$ to the Dirichlet problem $\Delta \phi=-\operatorname{div}(\psi)$ in $\Omega$ and $\phi|_{\partial\Omega}=0$ of class $W^{2,2}(\Omega)$, \cite[Theorem 2.4.2.5]{Gris85}, so that $A:=\psi+\nabla \phi$ remains normal to the boundary and is divergence-free. Therefore $\operatorname{curl}(\psi)=\operatorname{curl}(A)\in \mathcal{Z}$ and thus $\Gamma$ must be $L^2$-orthogonal to $\operatorname{curl}(\psi)$ implying $\operatorname{curl}(\Gamma)=0$. The a priori estimate follows readily from the arguments provided to establish the $L^2$-closedness of the spaces involved and the $L^2$-orthogonality of this decomposition.
\end{proof}
\section{Tangent traces and Biot-Savart operator}
\label{AppC}
Let $\Omega\subset\mathbb{R}^3$ be a bounded $C^{1,1}$-domain. We introduce the following norm on $W^{\frac{1}{2},2}\mathcal{V}(\partial\Omega)$
\begin{gather}
	\nonumber
	\|j\|:=\sup_{\phi\in W^{\frac{1}{2},2}\mathcal{V}(\partial\Omega)\setminus \{0\}}\frac{\left|\int_{\partial\Omega}j\cdot \phi d\sigma\right|}{\|\phi\|_{W^{\frac{1}{2},2}(\partial\Omega)}}.
\end{gather}
It is then standard, \cite[Theorem 2.11]{GR86}, see also \cite{BCS02}, that the following trace map, also referred to as the twisted tangential trace,
\[
\operatorname{Tr}:\left(H^1\mathcal{V}(\Omega),\|\cdot\|_{L^2,\operatorname{curl}}\right)\rightarrow \left(W^{\frac{1}{2},2}\mathcal{V}(\partial\Omega),\|\cdot\|\right)\text{, }v\mapsto \mathcal{N}\times v
\]
is a continuous linear map. Letting $W^{-\frac{1}{2},2}\mathcal{V}(\partial\Omega)$ denote the completion of $W^{\frac{1}{2},2}\mathcal{V}(\partial\Omega)$ with respect to $\|\cdot\|$ it follows that the trace map $\operatorname{Tr}$ extends to a unique trace map, denoted in the same way, $\operatorname{Tr}:H(\operatorname{curl},\Omega)\rightarrow W^{-\frac{1}{2},2}\mathcal{V}(\partial\Omega)$. We note that for any $j\in L^2\mathcal{V}(\partial\Omega)$ we have $\left|\int_{\partial\Omega}j\cdot \phi d\sigma\right|\leq \|j\|_{L^2}\|\phi\|_{L^2}\leq \|j\|_{L^2}\|\phi\|_{W^{\frac{1}{2},2}}$ for all $\phi\in W^{\frac{1}{2},2}\mathcal{V}(\partial\Omega)$ and therefore $L^2\mathcal{V}(\partial\Omega)\subset W^{-\frac{1}{2},2}\mathcal{V}(\partial\Omega)$. In addition, the notion of an $L^2(\partial\Omega)$-tangent trace for elements $w\in H(\operatorname{curl},\Omega)$, i.e. $\mathcal{N}\times w=j$ with $j\in L^2\mathcal{V}(\partial\Omega)$, introduced at the beginning of section 5.1 coincides with the trace of $w$ in the sense introduced here.

Furthermore, it follows from the proof of \Cref{4EXTRALemma} that the extension operator $T$ is continuous if we equip $L^2\mathcal{V}_0(\partial\Omega)$ with the norm $\|\cdot\|$ which follows from the fact that the functional $J$ introduced in the proof satisfies the estimate $|J(\psi)|\leq \|j\|\|\psi\|_{W^{\frac{1}{2},2}}$ and the standard trace estimate for Sobolev functions. If we now let $W^{-\frac{1}{2},2}\mathcal{V}_0(\partial\Omega)$ denote the completion of $L^2\mathcal{V}_0(\partial\Omega)$ with respect to $\|\cdot\|$, then the operator $T$ defined in \Cref{4EXTRALemma} extends to a continuous linear operator $T:W^{-\frac{1}{2},2}\mathcal{V}_0(\partial\Omega)\rightarrow H(\operatorname{curl},\Omega)$ and by an approximation argument it still satisfies $\operatorname{div}(T(j))=0$ in $\Omega$ and $\operatorname{Tr}(T(j))=j$ for all $j\in W^{-\frac{1}{2},2}\mathcal{V}_0(\partial\Omega)$. In addition, since we now know that the operator $T$ is continuous from $\left(L^2\mathcal{V}_0(\partial\Omega),\|\cdot\|\right)$ into $H(\operatorname{curl},\Omega)$ it follows from \Cref{3L14} that
\begin{gather}
	\nonumber
\operatorname{BS}_{\partial\Omega}:\left(L^2\mathcal{V}_0(\partial\Omega),\|\cdot\|\right)\rightarrow L^2\mathcal{H}(\Omega)\text{, }j\mapsto\frac{1}{4\pi}\int_{\partial\Omega}j(y)\times \frac{x-y}{|x-y|^3}d\sigma(y)
\end{gather}
is continuous and hence extends to a unique continuous, linear operator
\[
\operatorname{BS}_{\partial\Omega}:W^{-\frac{1}{2},2}\mathcal{V}_0(\partial\Omega)\rightarrow L^2\mathcal{H}(\Omega).
\]
We collect these observations in the following lemma.
\begin{lem}
	\label{CL1}
	Let $\Omega\subset\mathbb{R}^3$ be a bounded $C^{1,1}$-domain. Then there are unique continuous, linear operators
	\begin{gather}
		\nonumber
		\operatorname{Tr}:H(\operatorname{curl},\Omega)\rightarrow W^{-\frac{1}{2},2}\mathcal{V}(\partial\Omega),
		\\
		\nonumber
		\operatorname{BS}_{\partial\Omega}:W^{-\frac{1}{2},2}\mathcal{V}_0(\partial\Omega)\rightarrow L^2\mathcal{H}(\Omega) 
	\end{gather}
	such that $\operatorname{Tr}(w)=\mathcal{N}\times w$ for all $w\in H^1\mathcal{V}(\Omega)$ and $\operatorname{BS}_{\partial\Omega}(j)(x)=\frac{1}{4\pi}\int_{\partial\Omega}j(y)\times \frac{x-y}{|x-y|^3}d\sigma(y)$ for all $j\in L^2\mathcal{V}_0(\partial\Omega)$.
\end{lem}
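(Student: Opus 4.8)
The plan is to obtain both operators as the unique continuous extensions of bounded linear maps that are already defined on dense subspaces, invoking the standard fact that a bounded linear operator from a normed space into a Banach space extends uniquely and continuously to the completion of its domain. Since $W^{-\frac{1}{2},2}\mathcal{V}(\partial\Omega)$ and $W^{-\frac{1}{2},2}\mathcal{V}_0(\partial\Omega)$ are by definition the completions of $W^{\frac{1}{2},2}\mathcal{V}(\partial\Omega)$ and $L^2\mathcal{V}_0(\partial\Omega)$ with respect to $\|\cdot\|$, and $L^2\mathcal{H}(\Omega)$ is complete, uniqueness of both extensions will follow automatically from the density of the respective domains.

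For the trace operator I would first record the continuity of the twisted tangential trace $v\mapsto \mathcal{N}\times v$ as a map $(H^1\mathcal{V}(\Omega),\|\cdot\|_{L^2,\operatorname{curl}})\rightarrow(W^{\frac{1}{2},2}\mathcal{V}(\partial\Omega),\|\cdot\|)$, which is the content of \cite[Theorem 2.11]{GR86}. The decisive structural input is that $H^1\mathcal{V}(\Omega)$ is dense in $H(\operatorname{curl},\Omega)$ with respect to $\|\cdot\|_{L^2,\operatorname{curl}}$; granting this, the extension theorem yields the desired continuous map $\operatorname{Tr}:H(\operatorname{curl},\Omega)\rightarrow W^{-\frac{1}{2},2}\mathcal{V}(\partial\Omega)$, and by construction this extension agrees with $\mathcal{N}\times w$ on the dense subspace $H^1\mathcal{V}(\Omega)$.

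For the Biot-Savart operator the point is to establish continuity with respect to the weaker norm $\|\cdot\|$ on the domain, rather than the $L^2(\partial\Omega)$-norm used previously. Here I would exploit the representation formula of \Cref{3L14}, which expresses $\operatorname{BS}_{\partial\Omega}(j)$ through the extension $T(j)$ and two volume integrals over $\Omega$ whose singular kernels are regularised by the Newton potential estimates and the Hardy-Littlewood-Sobolev inequality. The crucial refinement is to observe that the functional $J$ from the proof of \Cref{4EXTRALemma} obeys $|J(\psi)|\leq \|j\|\,\|\psi\|_{W^{\frac{1}{2},2}(\partial\Omega)}$, so that, combined with the standard trace inequality $\|\psi\|_{W^{\frac{1}{2},2}(\partial\Omega)}\leq c\|\psi\|_{H^1(\Omega)}$, the extension operator $T$ is in fact continuous from $(L^2\mathcal{V}_0(\partial\Omega),\|\cdot\|)$ into $H(\operatorname{curl},\Omega)$; feeding this into \Cref{3L14} shows that $\operatorname{BS}_{\partial\Omega}$ is continuous from $(L^2\mathcal{V}_0(\partial\Omega),\|\cdot\|)$ into $L^2\mathcal{H}(\Omega)$, and one more application of the extension theorem delivers the operator on $W^{-\frac{1}{2},2}\mathcal{V}_0(\partial\Omega)$.

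I expect the main obstacle to be precisely this refined continuity estimate for $T$ in the dual norm $\|\cdot\|$: whereas the original construction in \Cref{4EXTRALemma} only needed $\|j\|_{L^2(\partial\Omega)}$, here one must pair $j$ against the boundary values of $H^1$-test fields and control the result by $\|j\|$ alone, which is exactly what makes the passage to the negative-order space possible. Once the estimate for $J$ is secured, the remainder is routine bookkeeping of the regularising mapping properties already assembled in \Cref{3L14}, together with the density of $L^2\mathcal{V}_0(\partial\Omega)$ in its completion.
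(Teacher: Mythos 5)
Your proposal is correct and follows essentially the same route as the paper: continuity of the twisted tangential trace from \cite[Theorem 2.11]{GR86} plus density of $H^1\mathcal{V}(\Omega)$ in $H(\operatorname{curl},\Omega)$ for $\operatorname{Tr}$, and for $\operatorname{BS}_{\partial\Omega}$ the refined bound $|J(\psi)|\leq \|j\|\,\|\psi\|_{W^{\frac{1}{2},2}(\partial\Omega)}$ giving continuity of $T$ in the dual norm, fed into the representation formula of \Cref{3L14}, followed in both cases by the unique-extension-to-the-completion argument. This matches the paper's reasoning in Appendix C step for step.
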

\begin{rem}
	The above considerations together with the proofs provided in section 5.1 in fact imply that the conclusions of \Cref{3T11} remain valid if the domain of $\operatorname{BS}_{\partial\Omega}$ is replaced by $W^{-\frac{1}{2},2}\mathcal{V}_0(\partial\Omega)$.
\end{rem}
\bibliographystyle{plain}
\bibliography{mybibfileNOHYPERLINK}
\footnotesize
\end{document}